\newif\ifarxiv
\newif\ifjournal
\renewcommand{\ALG@name}{Protocol}
\newtheorem{corollary}{Corollary}
\newcounter{protocol}
\newcommand\appendixtableofcontents{%
  \begingroup
  \let\clearpage\relax
  \tableofcontents
  \endgroup
}
\newtheorem{theorem}{Theorem}
\newtheorem{lemma}[theorem]{Lemma}
\DeclareMathOperator{\tr}{Tr}
\newcommand{\ket}[1]{\vert{#1}\rangle}
\newcommand{\bra}[1]{\langle{#1}\vert}
\newcommand{\norm}[1]{\Vert #1 \Vert}
\newcommand{\ee}{\mathrm{e}}
\DeclareMathOperator{\Sym}{Sym}
\newcommand{\fu}{Dahlem Center for Complex Quantum Systems, Freie Universit\"{a}t Berlin, Germany}
\newcommand{\quics}{Joint Center for Quantum Information and Computer Science (QuICS), NIST/University of Maryland, College Park, MD 20742, USA}
\newcommand{\hzb}{Helmholtz-Zentrum Berlin f{\"u}r Materialien und Energie, 14109 Berlin, Germany}
\newcommand{\tii}{Quantum Research Center, Technology Innovation Institute, Abu Dhabi}
\newcommand{\jku}{Johannes Kepler University Linz,
Institute for Integrated Circuits, 4040 Linz, Austria}
\begin{document}

\title{Benchmarking bosonic and fermionic dynamics}

\def\byline{Contributed equally and correspond under wilkensjadwiga@gmail.com and marios\_ioannou@outlook.com.}
\def\lastbyline{Contributed equally.}

\author{J. Wilkens}\thanks{\byline}
\affiliation{\fu}
\affiliation{\tii}
\affiliation{\jku}
\author{M. Ioannou}\thanks{\byline}
\affiliation{\fu}
\author{E. Derbyshire}
\affiliation{\fu}
\author{J. Eisert}
\affiliation{\fu}
\affiliation{\hzb}

\author{D. Hangleiter}
\affiliation{\quics}

\author{I. Roth}\thanks{\lastbyline}
\affiliation{\tii}
\author{J. Haferkamp}\thanks{\lastbyline}
\affiliation{School of Engineering and Applied Sciences, Harvard University, Cambridge, MA02318, USA }

\begin{abstract}
Analog quantum simulation allows for assessing static and dynamical properties of strongly correlated quantum systems to high precision. 
To perform 
simulations outside the reach of classical computers, accurate and reliable implementations of the anticipated Hamiltonians are required. 
To achieve those,
characterization and benchmarking tools are a necessity. 
For digital quantum devices, randomized benchmarking can provide a benchmark on the average quality of the implementation of a gate set. In this work, we introduce a versatile framework for \emph{randomized analog benchmarking} of bosonic and fermionic quantum devices implementing particle number preserving dynamics. 
The scheme makes use of the restricted operations which are native to analog 
simulators and other continuous variable systems.
Importantly, like randomized benchmarking, it is robust against state preparation and measurement errors. 
We discuss the scheme's efficiency,  derive theoretical performance guarantees and showcase the protocol with numerical examples.
\end{abstract}

\maketitle
\ifjournal
\begin{bibunit}
\fi

Analog quantum simulators are physical systems designed to accurately implement an idealized Hamiltonian model under precisely controlled conditions and with tunable parameters. 
Particularly prominent examples of quantum simulations investigate bosonic Hamiltonians such as the Bose-Hubbard model which describes cold-atoms in optical lattices~\cite{jaksch_cold_1998,
greiner_quantum_2002,
Trotzky,
koehl_fermionic_2005,
choi_exploring_2016,
GrossHubbard}
and certain superconducting-qubit systems~\cite{roushan_spectroscopic_2017,
0706.0212,
HartreeFock,
IBM2023}. 
In contrast to universal, fault-tolerant quantum computers performing arbitrary quantum computations, analog quantum simulators are special-purpose devices \cite{hangleiter_analogue_2022}. 
By probing the physics of certain specific models, they might shed light on fundamental questions, such as the mechanism behind high-temperature superconductivity~\cite{koehl_fermionic_2005}, why there is unusual magnetism in Hubbard chains~\cite{GrossHubbard}, and the persistence of many-body localization~\cite{choi_exploring_2016}.

To deliver on this promise, the simulated model needs to accurately describe the analog simulator~\cite{hangleiter_analogue_2022}.
Traditionally, this has been ensured for specific experiments following the ``predict-and-compare paradigm'' wherein an experiment is classically simulated and the results compared with the data. 
However, assessing the quality of an analog simulator as a computational tool requires comparable performance metrics that are independent of a specific experiment or device.
For universal digital simulators, a variety of platform-independent benchmarks for individual qubits and gates 
have been developed~\cite{BenchmarkingReview, KlieschRoth:2020:Tutorial}, 
most prominently, \emph{randomized benchmarking (RB)} of entire gate-sets~\cite{EmeAliZyc05, KnillBenchmarking,LevLopEme07, DanCleEme09,HelsenEtAl:2020:GeneralFramework}. 
In addition, techniques to check that the individual quantum gates and measurements 
can be faithfully composed in space and time have been developed. This includes
tests for Markovianity~\cite{Markov,Markov,breuer_measure_2009,SusanaNoMarkov,wallman_randomized_2014} and absence of crosstalk~\cite{rudinger_probing_2019,sarovar_detecting_2020}. 
These benchmarking tools are largely lacking for the analog dynamics of bosonic and fermionic systems.
In contrast to universal quantum computers, the goal of an analog simulation is to simulate a physical model under certain restrictions and in a certain parameter regime. Therefore, any RB method must take into account those restrictions. 
First proposals toward this goal have considered the adaptation of digital RB ~\cite{Derbyshire2019RBanalog, shaffer_benchmarking_2020} and RB-like \cite{valahu2024benchmarking} protocols to analog systems, and also developed bespoke RB protocols for fermionic matchgates~\cite{MatchgateBenchmarking}. 

\begin{figure*}
\centering
\includegraphics[width=\linewidth]{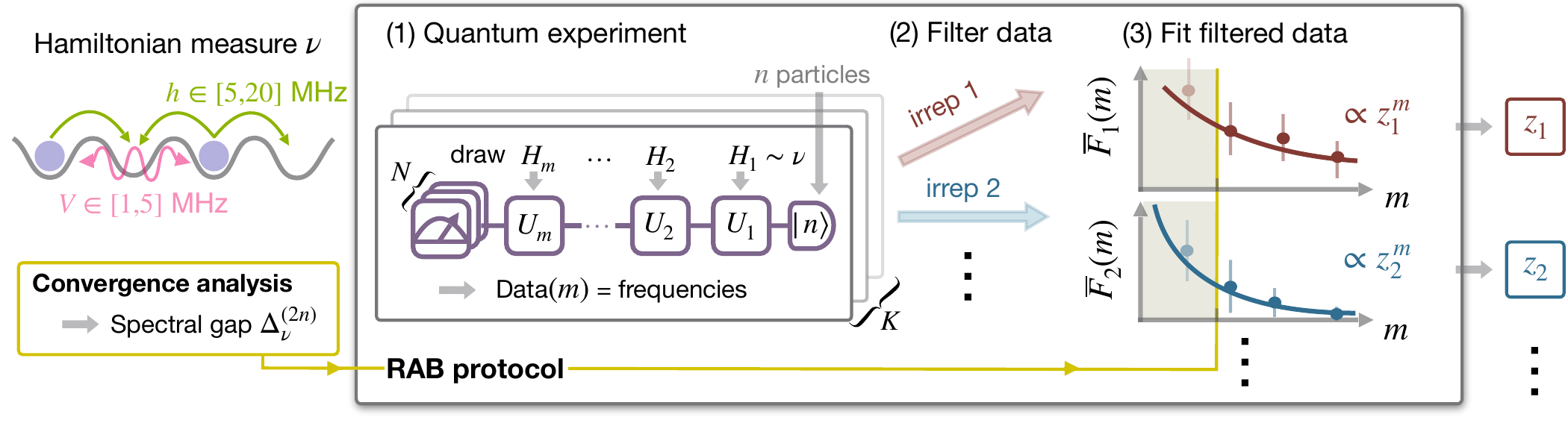}
    \caption{
    \emph{Schematic for Randomized analog benchmarking (RAB).} 
    (a) The RAB protocol can be applied to an experimental system described by a Hamiltonian, say the Bose-Hubbard model \eqref{eq:quartic hamiltonian}, with tunable parameters. 
    We fix a measure $\nu$ according to which the Hamiltonian parameters are chosen, say from some experimentally accessible range.
    Here, we give the example of a single hopping and interaction parameter $h$ and $V$, respectively.
    Given the measure $\nu$, in pre-processing, we perform a convergence analysis that will determine the experimental parameters and the post-processing of the data. 
    (b) The RAB protocol itself then consists of three steps. The first step (1) is the quantum experiment. 
    In the experiment, instances $H_1, \ldots, H_m$ are drawn according to $\nu$ for values of $m \in \mathbb N $ determined by the pre-processing. The experiment is initialized in the $n$-particle sector and evolves the state for a fixed time $\delta t$ under the Hamiltonians $H_i$ giving rise to unitaries $U_i = \exp(- i H \Delta t)$. 
    We then measure in a fixed basis and collect the frequencies of outcomes. 
    In the second step (2), this data is then filtered according to the irreps of the Hamiltonian measure, giving rise to $m$-dependent quantities $\overline F_1, \overline F_2, \ldots$. 
    Finally, (3) each of these quantities is then fitted as $\overline F_i(m) \propto z_i^m$. 
    The parameters $z_i$ are the output of the protocol and quantify the implementation quality.
    }
    \label{fig:protocol simulation} 
\end{figure*}

In this work, we develop and analyze a benchmarking protocol for bosonic and fermionic systems evolving under arbitrary particle-number preserving dynamics.
Our \emph{randomized analog benchmarking (RAB)} protocol fits into the general framework of random sequence protocols~\cite{GateSets, HelsenEtAl:2020:GeneralFramework, heinrich2023randomized} and gives rise to precisely reproducible figures of merit. 
In terms of control, the RAB protocol requires the ability (1) to prepare some input state with a fixed particle number, (2) the ability to perform sequences of random Hamiltonian quenches -- the sequential time evolution for different
parameters of the simulator Hamiltonian -- and (3) a projective measurement, e.g. in the Fock basis.
The remainder of the protocol consists only of classical post-processing. 

For the protocol to give rise to bona-fide benchmarks, it is necessary that the random time evolutions thoroughly explore the relevant subspaces. 
Technically, this condition is captured by being an approximate group design for the irreducible representations of the group generated by the benchmarked operations. 
This ensures that on each subspace the noise of the dynamics is described by a single parameter \cite{heinrich2023randomized}, a generalization of the effective depolarizing strength \cite{KlieschRoth:2020:Tutorial}. 
We reduce the condition 
to the more common notion of approximate unitary designs~\cite{gross_evenly_2007}, here restricted to the subspace of $n$ particles. 
We show numerically that random quenches of Bose-Hubbard Hamiltonians with interaction and hopping strengths drawn from experimentally accessible distributions, fulfil this condition for sufficiently long sequences; extending previous results along these lines~\cite {vermersch_unitary_2018}. 
We also characterize the representation theory for the group of interacting and non-interacting particle-number-preserving dynamics, which is necessary to implement a protocol of this kind, see Refs.~\cite{HelsenNewEfficientRB, MatchgateBenchmarking, heinrich2023randomized}. 
These latter steps significantly reduce an experimentalist's work to benchmark their device. 
We also provide open-access code~\cite{Jadwiga_github} that can be used to check the design property and implement our protocol for different families of Hamiltonians.

Our protocol's main advantage over existing analog RB protocols~\cite{Derbyshire2019RBanalog, shaffer_benchmarking_2020, valahu2024benchmarking} is its simplicity and its applicability to both bosonic and fermionic systems. 
In particular, the requirements for the physical implementation are mild and compatible with the capabilities of state-of-the-art analog devices~\cite{Taballione_2021}. Importantly, in contrast to other RB schemes 
we \emph{do not} require the implementation of an inversion gate at the end of a random sequence. This is especially relevant for analog quantum simulators where inversion is akin to reversing the sign of field and interaction terms.
The latter is either difficult or adds an overhead. Furthermore, we only impose the general assumptions of Markovianity and time-independence on the noise channel and require no knowledge of the noise model, unlike Ref.~\cite{valahu2024benchmarking}. We note that an analogous benchmarking protocol using uniformly random Gaussian operations in fermionic simulators has been developed in Ref.~\cite{MatchgateBenchmarking}, but this protocol does not generalize to bosonic systems and the particle-number conserving setting we study here. 

It is also worthwhile to compare our protocol to other means of characterizing analog simulators \cite{elben_randomized_2023}. 
Most importantly, one can also perform \emph{Hamiltonian learning} in order to characterize manifestations of analog dynamics, and thereby certify correctness. Notable recent experiments include Hamiltonian learning concerning bosonic evolution~\cite{Hangleiter_robustly_2024} and the characterization of so-called entanglement Hamiltonians~\cite{kokail_entanglement_2021}. Another family of benchmarks applicable to states are those based on randomized measurements~\cite{elben_randomized_2023, brydges_probing_2019, elben_cross-platform_2020}.
While these protocols also make use of random quenches, their goal is to infer properties of a fixed quantum state rather than benchmark the overall performance of a simulator.

Our protocol is designed for systems with few particles. 
This is similar to the ``low complexity'' digital approach where RB is typically performed for one- or two-qubit gates. 
However, for analog devices, our protocol can be considered more holistic and the figures of merit measure the contribution
of the average noise in Hamiltonian evolution.
To quantify the experimental resources needed in the low-complexity regime we give evidence for the sample efficiency of the protocol in terms of the number of modes in the system. 
We do so both analytically and by performing numerical experiments. 

\paragraph*{Particle-number preserving dynamics.}
The quantum system we study evolves under the particle-preserving Hubbard model with a fixed number of modes within the lattice. 
The excitations are either bosonic or fermionic. 
We consider this restricted model because it aligns with current experiments, where nearest-neighbour hopping terms and on-site interactions are accessible. Within the allowed Fock-space $\mathbb F_{d,n}$ of $n$ particles in $d$ modes, the Hamiltonian reads
\begin{align}
	H(h,V) = \sum_{\langle i,j\rangle}h_{i,j} a_i^\dagger a_j + \sum_{i}^d V_i a_i^\dagger a_i^\dagger a_i a_i + h.c.\,,
    \label{eq:quartic hamiltonian}
\end{align}
with $h\in\mathbb C^{d\times d}$ and $V\in\mathbb C^{d}$ representing the hopping terms driving the excitation transport along edges $\langle i, j\rangle$ of some graph and the interaction strength respectively, and $a$ ($a^\dagger)$ the bosonic or fermionic annihilation (creation) operators.
We will focus entirely on the bosonic setting in the main text and comment on the analogous fermionic setting in App.~\ref{app:fermionic}.

We want to benchmark the ability of the simulator to perform quenches for a time $\Delta t$ with different Hamiltonian 
\begin{align}\label{eq:random-gate}
    U = \ee^{-i\Delta t H( h,  V)}\,. 
\end{align}
Drawing the parameter $h$,$V$ of the Hamiltonian and thereby $H$ at random, induces a probability measure $\nu$ on the unitaries acting on the Fock space. 
Since we are restricted to sample $h$, $V$ from the permissible and interesting range of the experiment, $\nu$ is typically not the Haar measure on the group of particle-preserving unitaries.

As an example, we consider the one-dimensional Bose-Hubbard model with site-dependent uniformly random local potentials and
translationally-invariant random hopping
as well as constant on-site interaction terms, uniformly drawn from intervals.

\paragraph*{Randomized analog benchmarking (RAB) protocol.} 
In the following, we give an overview of the RAB protocol, a variant of the filtered RB protocol~\cite{HelsenEtAl:2020:GeneralFramework,heinrich2023randomized}, which consists of two phases: data acquisition and classical post-processing. The details are shown in Fig.~\ref{fig:protocol simulation}. 

\emph{Data acquisition:} Choose a set of sequence lengths $\mathcal M$, a number $K$ of sequences per length, and a number of measurement shots $N$ per sequence. 
These parameters depend on the required accuracy and degree of experimental control, as discussed later. 
The experimental primitive consists first of applying $m \in \mathcal M$ random quenches $U$ drawn according to $\nu$, to an initial state $\rho_0 = |\psi_0 \rangle\langle\psi_0|$.
One then collects statistics of projective measurements in the Fock basis $\{E_x = |x \rangle\langle x| \}_{x=1}^{d_F}$ with $N$ shots and records the relative outcome frequencies $\hat p$. 
The sequence length $m$, the overall sequence of quenches $U = U_m\cdots U_1$ and $\hat p$ are recorded for post-processing. The primitive is repeated for $K$ random sequences of length $m$ for all $m \in \mathcal{M}$.

\emph{Post-processing:} For each sequence of quenches we compute the empirical estimate of the \emph{multishot filter function}  
\begin{align}
    \hat F_\lambda(m) \coloneqq n_\lambda^{-1} \sum_{x=1}^{d_F} f_\lambda(x,U) \hat p_{x},
    \label{eq:Fhat-lambda}
\end{align}
where $\hat p$ is the recorded measurement frequency for the evolution $U$.
The weighting factor $f_\lambda(x,U)$, the \emph{filter function} \cite{heinrich2023randomized}, is defined for the irreducible representations (irrep) $\lambda$ of the group generated by the used ensemble of Hamiltonians. 
The general expression for the filter function reads
\begin{align}
    f_\lambda(x,U) \coloneqq s_\lambda^{-1}\tr(P_\lambda(E_x) U\rho_0U^\dagger)\,, 
    \label{equ:single-filter-function-expression-main-text}
\end{align}
with $P_\lambda$ the projector onto $\lambda$. 
The normalization constants $n_\lambda$ and $s_\lambda$ are determined by the overlap with $\lambda$ of the initial state and the measurement, respectively.
An overview of these constants is given in Table~\ref{table:summary-protocol-ingredients}.

We then compute the average $\overline{F}_\lambda (m)$ of $\hat F_\lambda(m)$ over the $K$ random sequences with identical $m$ and fit the signal to a scalar exponential decay 
\begin{equation}\label{eq: exponential_decay}
    \overline{F}_\lambda (m) \approx_{\text{fit}} A_\lambda z_\lambda^m .
\end{equation}
In the case where the Hamiltonians generate unitaries drawn according to the Haar measure, we can fit directly to the above. In the case where the Hamiltonians are drawn according to some other, less uniform, measure $\nu$, the fitting procedure can only be performed after a threshold sequence length $m_{\mathrm{ths}}$, as explained in the next section. 
The protocol is summarized in~Fig.~\ref{fig:protocol simulation}. 

\begin{table}
\centering
\renewcommand{\arraystretch}{1.5} 
\begin{tabular}{l | c c}
\toprule
& interacting & non-interacting \\
\midrule
$\lambda$ &  $n \in \mathbb{N}^+$ &  $(n,l)$ with $n \in \mathbb{N}^+$, $l\in[n]$ \\
$n_\lambda$ & $(1 - 1/d_F)$ & 
{$\!\begin{aligned} \sum_{a_\lambda,x}\left(C_{x,\tilde x}^{a_\lambda}\right)^2
|\langle x, \psi_0\rangle|^2 \end{aligned}$}
\\
\addlinespace 
$s_\lambda$ &
$(1 + d_F)^{-1}$ &
{$\!\begin{aligned} \sum_{a_\lambda,x} \left(C^{a_\lambda}_{x,\tilde{x}}\right)^2/\operatorname{dim}(\lambda)\end{aligned}$}
\\
\addlinespace 
$s_\lambda f_\lambda(x,U)$ &
{$\!\begin{aligned}  |\langle x|U|\psi_0 \rangle|^2\!\! - \! \tfrac{\tr_n(\rho_0)}{d_F} \end{aligned}$}
& 
{$\!\begin{aligned} \sum_{a_\lambda}C^{a_\lambda}_{x, \tilde{x}} \langle \psi_0|U^\dagger a_\lambda U| \psi_0 \rangle\end{aligned}$}
 \\
\bottomrule
\end{tabular}
\caption{Overview of the range of $\lambda$, normalization factors $n_\lambda$ and $s_\lambda$ and the filter functions for the RAB post-processing for interacting and non-interacting dynamics.
$C^{a_\lambda}_{x,\bar x}$ is the Clebsch-Gordon coefficient expanding the basis vector $a_\lambda$ of $\lambda$ in terms of the Fock basis vectors $x$ and $\bar x = d_F - x + 1$. $\tr_n$ denotes the trace on the irrep $\lambda = n$. 
More details are given in the Supplemental Material \ref{par:filter-functions}.
}
\label{table:summary-protocol-ingredients}
\end{table}

\paragraph*{Protocol specification and analytical guarantees.}  
The Hamiltonians Eq.~\eqref{eq:quartic hamiltonian} generate groups of interacting and non-interacting particle-number preserving quenches. 
As we show, the group action $U\otimes \overline{U}$ restricted to $n$~particles decomposes into the trivial irrep and its complement for interacting, and into $n+1$ irreps for non-interacting quenches (Lemma~\ref{lem:interreps} and \ref{lem:noninterirreps} of App.~\ref{sec:representation-theory-for-particle-number-preserving-dynamics}). 
Table~\ref{table:summary-protocol-ingredients} further summarizes our results (App.~\ref{app:filterRB}) for the filter function and normalization constants. 
The expressions involve Clebsch-Gordon coefficients to construct bases for the irreps of outer products of representations of the unitary groups.

At sufficient sequence length we expect the random dynamics to probe the noise evenly in every inequivalent irrep.
Then, for each one, the average RB signal is described by a single scalar exponential decay. The role of the filter function is to isolate the signal of an individual irrep (using character orthogonality relations). 
Building upon the general statement derived in Ref.~\cite{heinrich2023randomized}, 
we formally show for general \emph{gate-dependent} particle-preserving noise 
that $\hat F_\lambda(m)$ 
averaged over the random sequences and the measurement outcome is well-approximated by Eq.~(\ref{eq: exponential_decay}) whenever $m$ is \emph{larger} than a \emph{minimal admissible sequence length} $m_{\mathrm{ths}}$  (Theorem~\ref{theorem:signal-guarantee-nonint} in the appendix).  
The value of $m_{\mathrm{ths}}$ depends on the measure~$\nu$. 
For non-interacting dynamics, we provide an upper bound on $m_{\mathrm{ths}}$ in terms of the spectral gap $\Delta^{(2n)}_\nu$ of the measure $\nu$ quantifying its deviation from a unitary $2n$-design on the space of single particles.  
We find that choosing $m_{\mathrm{ths}} \gtrsim (\Delta^{(2n)}_\nu)^{-1} l \log d$ suffices for RAB for the irrep with $\lambda = (n,l)$ and $l\in[n]$. 
An analogous result holds for interacting quenches. For gate-independent noise, the decay parameters $z_{\lambda}$ can be further related to the average fidelity of the noise channel (App.~\ref{Ap.:Fitting_model_and_signal_guarantees}). 

To evaluate the bound on $m_{\mathrm{ths}}$ we further relate (App.~\ref{secapp:convergence-analysis}) the spectral gap to the frame potential \cite{gross_evenly_2007}. 
This allows us to numerically determine $m_{\mathrm{ths}}$ using Monte Carlo estimation of the frame potential. 
These worst-case bounds will typically yield conservatively large estimates for $m_{\mathrm{ths}}$.   
For example, for global depolarizing noise of strength $p$, we can directly calculate the RAB signal to be of the form $\bar F_\lambda(m) = p^m (1 - \alpha(m))$ with an asymptotically vanishing function $\alpha$ independent of $p$ (Lemma~\ref{lem:rab-signal-under-depolarising-noise} in the appendix).  
This motivates us to determine $m_\text{\rm ths}$ from simulations of the RAB protocol without noise $p=0$ such that $|\alpha(m)|$ is sufficiently small. 
We thereby typically find a smaller bound on $m_\text{\rm ths}$ for sufficiently isotropic noise. 

An important aspect of any benchmarking protocol is its complexity in terms of the required number of measurement samples. We obtain an upper bound on the sampling complexity of the protocol for the largest irreducible representation $\lambda = (n,n)$.
We further show that in the \emph{collision-free} regime $d\gg n^2$, the overlap of the measurement operator as well as of the input state is almost exclusively with this irrep.
As we show, it then follows that sample complexity is bounded in $O(d^{2n})$ with $d$ the number of modes (Corollary \ref{cor:collision-free-sample-complexity}).  In particular, the scheme is efficient for fixed $n$ not scaling with $d$.
The detailed derivation is given in App.~\ref{app:subsec:sample-complexity}.

\paragraph*{Numerical simulations.}
We demonstrate our protocol with numerical simulations.
Ref.~\cite{Jadwiga_github} provides a companion open-source Python library for RAB and the presented simulations. 
We consider a system of $d=4$ modes and $n=2$ particles and two different ensembles of nearest-neighbor Hamiltonians. 
In the first ensemble, the hopping strengths, local potentials and onsite interactions are drawn uniformly at random.
The on-site potential is drawn as $h_{i,i} \sim [-1,1]$, whereas the nearest-neighbour values are complex numbers with both the real and imaginary part drawn uniformly from the interval $[-1,1]$.
The on-site interaction, if there is one, is drawn as $V_i \sim [-1,1]$, and the constant time evolution parameter is set to $\Delta t =1$. 
The second ensemble is inspired by the parameter settings used for an analog simulation on a Sycamore chip in Ref.~\cite{hangleiter2021precise}.  
Here the hopping strength is kept constant with only the local potentials being drawn at random, $h_{i,i}\sim [-20,20]~\mathrm{MHz}$, $h_{i,i+1}=-20~\mathrm{MHz}$, and on-site interaction $V_i = -5\mathrm{MHz}$.
The evolution time for the Sycamore setting is $\Delta t=25\mathrm{ns}$.

\begin{figure}
\includegraphics[width=1\linewidth,left]{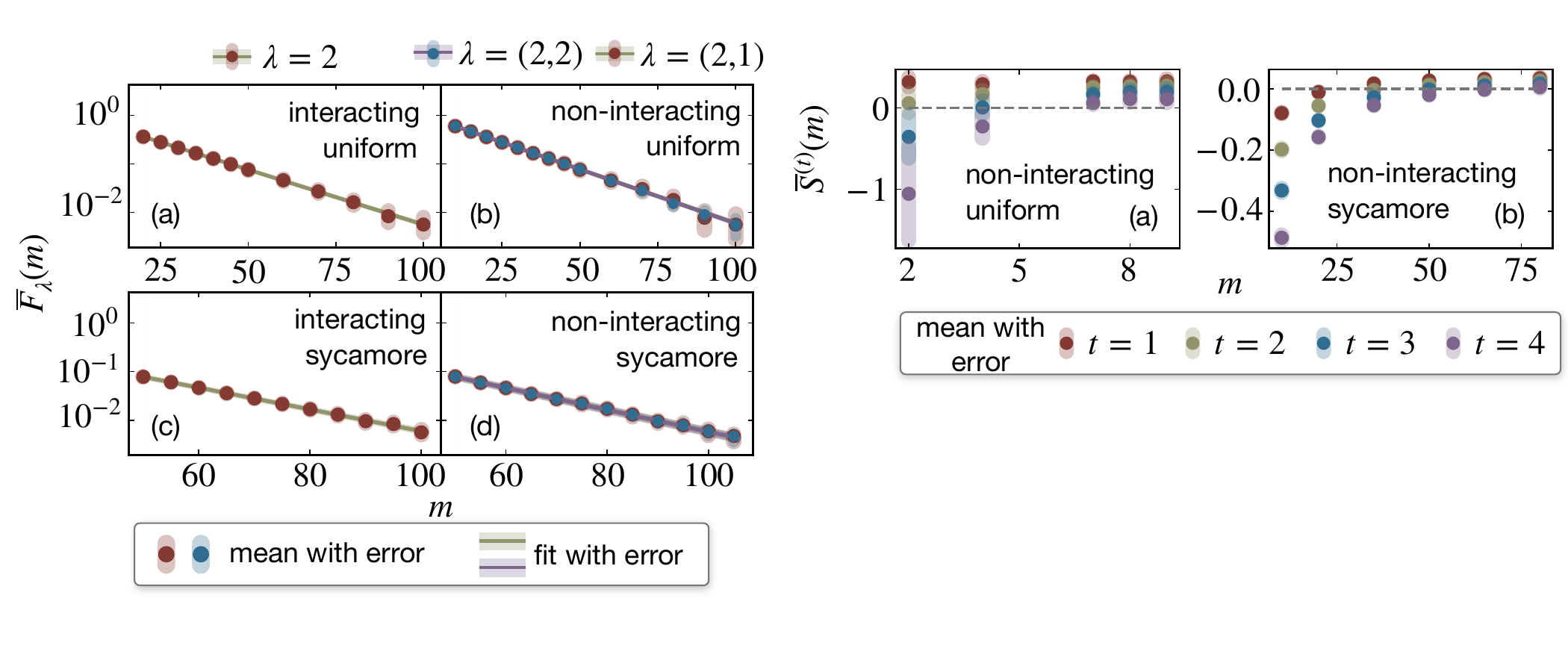}
  \caption{\emph{Numerical lower bound $\bar S^{(t)}(m)$ of the spectral gap} as a function of sequence length from Monte Carlo estimates of the frame potential; once the error bars are above zero, the estimate is sufficient. (a) for the uniform distribution, and (b) for the Sycamore distribution. This informs the analytical warm-up phase.}
  \label{fig:spectralgap-main-text}
\end{figure}

\begin{figure}
\includegraphics[width=1\linewidth, left]{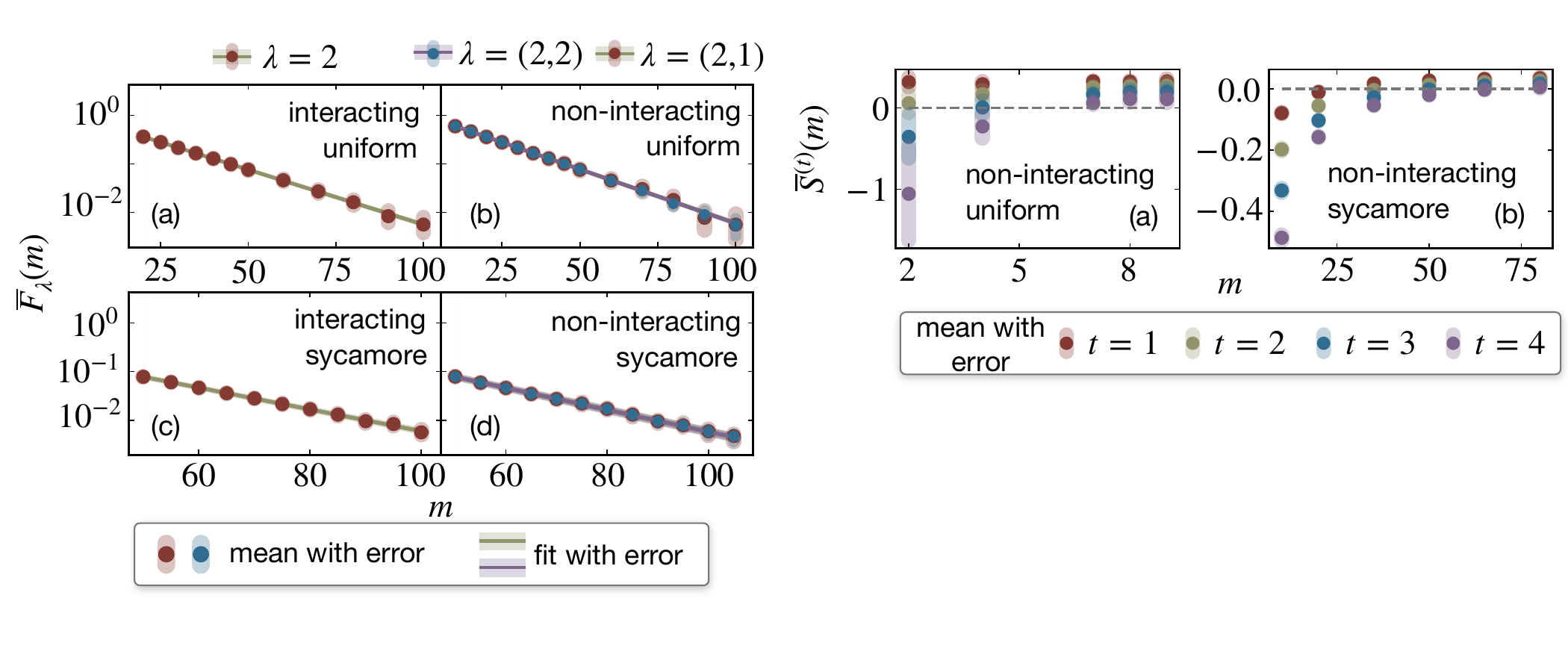}
  \centering
  \caption{\emph{Numerical simulations of RAB protocol.} The left panel shows the decay for interacting dynamics for both the uniform (a) and Sycamore (c) distribution, right panel is the same for non-interacting. Plotting and fitting occur after a warm-up phase. The extracted parameters are $z_2 = 0.949(3)$, $z_{2,1} =0.950(1)$, $z_{2,2} = 0.950(3)$ for the uniform and sycamore distributions are $z_2 = 0.95(0)$, $z_{2,1} =0.949(4)$, $z_{2,2} = 0.94(9)$.}
  \label{fig:rab-results-main-text}
\end{figure}

We begin with evaluating the bounds on the minimal admissible sequence length $m_{\text{ths}}$ for the Hamiltonian ensembles. 
Fig.~\ref{fig:spectralgap-main-text} displays bounds $S_\nu^{(2n)}$ on the spectral gap $\Delta_\nu^{(4)}$ for $n=2$ ($t=4$) using estimates of the frame potential for different sequence length $m$. 
For $n=2$ ($t=4$), we determine an overall lower-bound for $\Delta_\nu^{(4)}$  of $0.11\pm0.05$ for the uniform setting, and for the Sycamore setting $0.05\pm0.03$. From these values we analytically find (using Theorem~\ref{theorem:signal-guarantee-nonint}) that $m_{\mathrm{ths}} =178$ for the uniform setting and $m_{\mathrm{ths}} = 393$ for the Sycamore setting, both values refer to the larger irrep. 
See the App.~\ref{subsec:estimation-of-mths} for details.
The uniform ensemble converges to sufficiently mixing dynamics considerably faster than the Sycamore ensemble. 
However, the minimal sequence lengths obtained in this way are challenging in practice. 
Using simulations of the noiseless RAB signal (App.~\ref{subsec:simulating-warm-up-phase-with-RAB-protocol}), we determine considerable lower admissible values for $m_\text{\rm ths}$ ($16$, $8$, and $55$ for the interacting ensemble, non-interacting uniform, and Sycamore ensembles, respectively). 
These values seem to be sufficient in practice also for the noise models considered in the following. 

For simulating RAB experiments, we model the data acquisition with a gate-independent depolarizing noise channel $\Lambda_p=(1-p)\rho+p\frac{\mathbb{I}}{d_F}$ acting after every unitary quench on the relevant Fock-sector, with average depolarizing strength $p=0.05$. The relative frequencies are computed using $N=10$ shots for the uniform setting and $N=100$ for the Sycamore setting and the number of random sequences per sequence length was chosen to be $K=10^5$.

In Fig.~\ref{fig:rab-results-main-text} we present the RAB results for both of our Hamiltonian ensembles, the upper quadrant of the uniform ensemble and the lower quadrant of the Sycamore ensemble. 
The data indeed follows the expected decay curve.
Theoretically, in this simple setting, the decay parameters $z_\lambda$ should be approximately equal to $1-p$ and we find that the extracted parameters are compatible with this, up to relative errors below $0.1\%$ for both settings. 
This simple noise model illustrates the protocol functioning, we present further simulation results 
for different noise models (spurious interactions, 
miscalibrated interactions and errors in the quench duration) in App.~\ref{app:numerical-simulations} and consistently observe the expected exponential decay. 
Importantly, we also find that one can, in principle, discern different error models as they exhibit distinct characteristics in the different decay parameters of the irreps. We leave further study of the different RAB signals for different noise mechanisms of specific platforms to future work.

\paragraph*{Discussion.}
In this work, we have presented and analyzed an experimentally feasible benchmarking protocol (RAB) for well-controlled quantum devices implementing particle-number preserving dynamics in the `low-complexity' regime. 
The RAB protocol directly benchmarks the native operations of these systems, namely the Hamiltonian evolution they can implement, rather than relying on implementing compiled gates such as Clifford gates. 
Furthermore, in contrast to previous randomized benchmarking schemes for analog quantum simulators,  
it does not require the inversion of the random evolution before measurement; this is particularly important in this context, where such inverses might not be easily implementable. 
In contrast to digital high-level  benchmarks such as cross-entropy \cite{boixo_characterizing_2016} or the \emph{quantum unitary evolution score} \cite{Dong2022}, our method does \emph{not} aim at benchmarking or verifying the  computational task solved by the device, but should rather be understood as part of the toolbox for diagnosing and calibrating the native operations of the device. 

To theoretically establish the scalability of the RAB protocol, we have bounded its sample complexity for collision-free inputs.  
We expect that with more fine-grained control over the variance of the relevant estimators, one can significantly improve our bound.  
In addition, fine-tuning the choices of initial state and measurement for extracting specific noise effects can further improve the protocol's scalability.

Using our filter functions, along with our numerical implementation
the RAB protocol can readily be implemented on experimental devices.
Specific experimental scenarios to which our protocol applies include atomic bosonic \cite{roushan_spectroscopic_2017} and fermionic analog simulators \cite{xu_frustration-_2023}, subuniversal photonic devices such as Gaussian boson samplers
\cite{hamilton_gaussian_2017,GaussianBosonSamplingPan,madsen_quantum_2022}, as well as universal quantum processors based on integrated optical bosonic systems \cite{RevModPhys.79.135,Fusion,Taballione_2021}, 
or quantum computers based on fermionic cold atoms 
\cite{BravyiFermionic,FermionicComputing}. 
In our work, we have demonstrated the efficacy of our protocol for state-of-the-art system sizes \cite{Taballione_2021}.
We are thus confident that---like RB in the digital setting---RAB will become an important tool for assessing and calibrating bosonic and fermionic quantum devices.
 
\paragraph*{Parallel work.} 
Independently of this work, Arienzo et al.~\cite{parallel}  developed a benchmarking protocol for passive bosonic transformations. 
Their passive bosonic RB (PRB) protocol is also based on filtered RB \cite{heinrich2023randomized} and uses the same post-processing stage and representation theoretic results as our protocol for non-interacting bosonic quenches.  
In contrast to our focus on platform native quenches, PRB uses Haar-randomly sampled unitaries. Arienzo et al.\ also derive and evaluate bounds on the variance of the PRB signal hinting at a sample-complexity scaling logarithmically in the number of modes 
and discuss extensions to Gaussian measurements.

\paragraph*{Acknowledgements.}
We thank Jonas Helsen,  Michal Oszmaniec and Markus Heinrich for the discussions. 
We acknowledge support from the BMBF 
(project DAQC, for which it provides tools for randomized benchmarking of digital-analog 
architectures, and project FermiQP and MuniQC-Atoms, for benchmarking cold atomic analog many-body devices), the Munich Quantum Valley (K-8), the quantum Flagship (PasQuans2, Millenion), and the ERC (DebuQC). We also acknowledge funding by the DFG (CRC 183, EI 519/21-1), the Harvard Quantum Initiative (HQI).
JW acknowledges financial support from the SFB BeyondC of the Austrian Science Fund (FWF) [Grant-DOI:Grant-DOI 10.55776/F71] (SFB BeyondC) and the HPQC flagship project of the Austrian Research Promotion Agency (FFG).
DH acknowledges funding from the US Department of Defense through a QuICS Hartree fellowship.
 
\ifjournal
    \putbib
    \end{bibunit}

    \begin{bibunit}
\fi

\newcommand{\nocontentsline}[3]{}

\let\tmp\addcontentsline
\let\addcontentsline\nocontentsline
    \bibliographystyle{./myapsrev4-1}
    \bibliography{bib_analogue_rb,hamiltonian_learning}

\let\addcontentsline\tmp

\onecolumngrid
\newpage
\cleardoublepage
\setcounter{page}{1}
\setcounter{equation}{0}
\setcounter{footnote}{0}
\setcounter{figure}{0}
\thispagestyle{empty}

\appendix 
\begin{center}
\textbf{\large Supplementary material for ``Benchmarking bosonic and fermionic dynamics''}\\
\vspace{2ex}
J. Wilkens, M. Ioannou, E. Derbyshire, J. Eisert, D. Hangleiter, I. Roth, J. Haferkamp
\vspace{2ex}
\end{center}

\appendixtableofcontents

\section{Preliminaries}
\label{app:preliminaries}

\subsection{Notation} 

For $A, B$ linear operators on a vector space, we denote the Hilbert-Schmidt inner product by $(A|B) = \tr[A^\dagger B]$ and will make extensive use of standard Dirac (bra-ket) notation with respect to the Hilbert-Schmidt inner product. 
Switching between the corresponding canonical vector space isomorphism without saying, we will understand that $A\otimes B|C) = | A C B^T)$ for $A,B,C$ linear operators, with $(\cdot)^T$ denoting transposition. 
The space of linear operators or endomorphisms on a vector space $V$ will be denoted by $L(V)$. 
$L(V)$ is isomorphic to the space $V^{\otimes 2}$, via the canonical vector space isomorphism explained above.
In particular, we will repeatedly encounter the space $L(\mathrm{Sym}^n(\mathbb C^d))$ of linear operators on the symmetric subspace of $(\mathbb C^d)^{\otimes n}$ (where the permutation group acts via commuting the tensor factors). 

We use $\|A\|_\infty = \max \operatorname{spec}(A)$ for the operator norm of $A$, i.e., the maximal singular value. Further, we denote the Frobenius-norm (a.k.a. Schatten-$2$ norm) by $\norm{A}_F = [\tr(A^\dagger A)]^{\tfrac12}$ and the nuclear norm (a.k.a. trace-norm, Schatten-$1$ norm) by $\norm{A}_1 = \tr\sqrt{A^\dagger A}$.

\subsection{Representation designs}
\label{subsec:representation-designs}
An important aspect of any randomized benchmarking scheme is that the random ensemble of unitary operations one benchmarks is `mixing' some subspaces. 
These subspaces are the irreducible representations (spaces) of the group that are generated by the benchmarked operations. 
To generally capture the notion of randomly mixing the subspaces, we introduce the notion of a \emph{design} for any representation $\rho$ of a compact group $G$ with (normalized) Haar measure $\mu$. 
Let $M^{\rho}_\nu = \int_{G} \rho(g)\, \rm d\nu(g)$ be the $\rho$-moment operator of a measure $\nu$. 
We call a measure $\nu$ on $G$ a \emph{$\rho$-design} if 
\begin{equation}
    M^\rho_\nu = M^\rho_\mu\,.  
\end{equation}
Correspondingly, we say $\nu$ is an $\epsilon$-approximate $\rho$-design with respect to a norm $\| \cdot \|$ if $\|M^\rho_\nu - M^\rho_\mu \| \leq \epsilon$.
Another useful object, when studying the convergence of sequences of operations drawn from a measure $\nu$, is the spectral gap $\Delta^\rho_\nu$. The spectral gap quantifies the distance between the largest and second largest eigenvalues of the moment operator as
\begin{equation}\label{eq:spectralgap}
    \|M_\nu^\rho - M_\mu^\rho\|_\infty = 1 - \Delta_\nu^\rho.
\end{equation}
In particular, the product $\rho(g_m) \cdots \rho(g_1)$ with $m$ elements of $g_1$, \ldots , $g_m$ i.i.d.\ drawn according to $\nu$ is a random matrix $\rho(g)$ with $g$ drawn according to the $m$-fold convolution $\nu^{\ast m}$ of $\nu$. And we have $\Delta^\rho_{\nu^{\ast m}} = \left(\Delta^\rho_\nu\right)^m$. 
This makes spectral gaps a convenient tool to study the convergence of random sequences to approximate designs. 

In the context of quantum information most prominent are \emph{unitary $t$-designs} \cite{gross_evenly_2007}. 
A unitary $t$-design is a measure $\nu$ on $\rm U(d)$ that is a $U^{\otimes t} \otimes \bar U^{\otimes t}$-design. 
 Let us simply denote the \emph{$t$-th moment operator} as
\begin{equation}\label{eq:moment_operator}
    M_\nu^{(t)} = \int_{U(d)} U^{\otimes t} \otimes (U^\dagger)^{\otimes t}d\nu (U) 
\end{equation}
for unitaries drawn from a measure $\nu$ and the corresponding spectral gap as $\Delta^{(t)}_\nu$.

Generally speaking, filtered RB protocols require $\rho^{\otimes 2}$-designs of the representation of the group that one is benchmarking \cite{heinrich2023randomized}.  
As we will show, for the case of particle-number preserving dynamics, however, a simpler (slightly stronger) sufficient condition can be formulated in terms of approximate unitary $t$-designs of high-enough order $t$ on specific subspaces.

By choosing different norms and moment operators, there are several distinct ways to quantify the distance between a distribution of unitaries and the Haar measure over the group. 
Yet another way to measure the distance is the deviation of the so-called \emph{frame potential} from its minimal value \cite{gross_evenly_2007}. 
The frame potential is the double average taken over an ensemble of unitaries as
\begin{equation}\label{eq: frame_potential}
    \mathcal{F}_\nu^{(t)} = \int_{U(d)}\left|\mathrm{tr}\left[UV^\dagger\right] \right|^{2t}  \mathrm d\nu(U) \mathrm d\nu(V)\, 
\end{equation}
and its minimal value of $t!$ is attained for the Haar measure $\mu$.

\newpage
\section{Representation theory for particle-number preserving dynamics}
\label{sec:representation-theory-for-particle-number-preserving-dynamics}

\begin{figure}[tb]
    \centering
    \includegraphics[width=\textwidth]{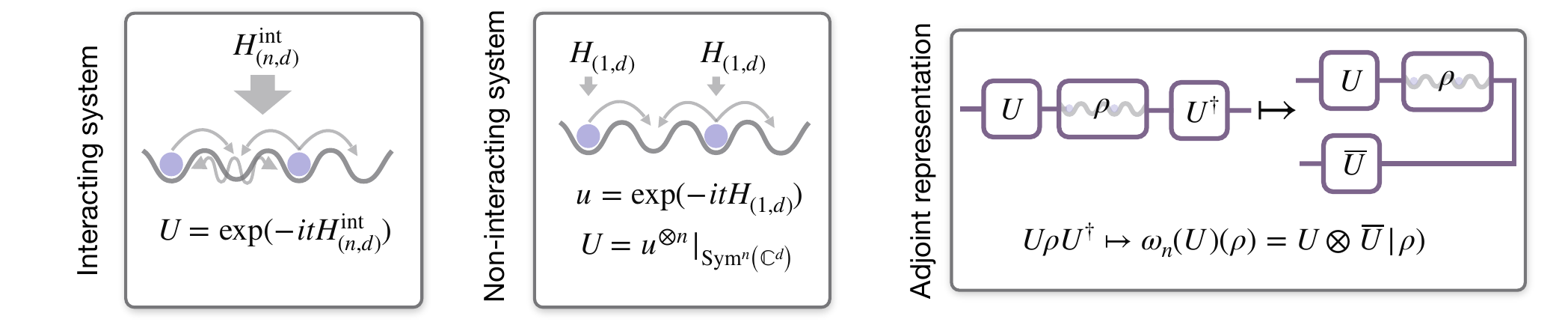}
    \caption{(left panel) Depiction of interacting dynamics, the whole system can only be described by a single Hamiltonian $H^{\mathrm{int}}_{(n,d)}$ acting on the whole Fock space.
    (middle panel) Depiction of non-interacting dynamics, each particle evolves under the same Hamiltonian $H_{(1,d)}$ defined on the single particle Fock space.
    (right panel) Schematic of the conjugation representation in tensor networks, where $\rho$ is a quantum state and the over line $\overline{\cdot}$ means complex conjugation. 
    }
    \label{fig:hamiltonian-int-nonint-rep}
\end{figure}

In this work, we take a bottom-up approach to benchmark analog quantum devices. 
We start by taking potentially available `native' operations of the device at face value and design the RB protocol for these operations.
To this end, we have to understand the subspaces that the RB sequences will `mix' and how fast this will happen. 
The first question is naturally answered in the language of representation theory, as such subspaces carry irreducible representations. 
In the following sections, we will derive the corresponding irreducible representation for groups generated by particle-number preserving Hamiltonians with and without interaction terms, respectively. 
We further construct the projectors onto these irreducible representations and describe their dimensions.
These are the defining quantities for our analog benchmarking protocol and are summarized in Table~\ref{table:summary-protocol-ingredients} in the main text.
A summary of the two Hamiltonian ensembles can be found in Fig.~\ref{fig:hamiltonian-int-nonint-rep}.

The Hilbert space describing the configuration space of a bosonic system of particles in $d$ modes is the Fock space 
    $\mathcal F = \overline{\bigoplus_n \Sym^n (\mathbb C^d)}$, 
where every Fock sector in the direct sum describes configurations of $n$ particles.  
(Formally, the Fock space is the closure of infinitely many Fock sectors. 
However, for our purposes it will be sufficient to consider the space restricted to a finite number of particles allowing us to treat the Fock space as finite dimensional.) 
The dimension of the $n$-th Fock sector is $d_F(n) \coloneqq \dim \Sym^n(\mathbb C^d) = {n+d-1\choose n}$.
We will often drop the dependence on $n$ and simply write $d_F$ if $n$ is clear from the context.
Moreover, we denote by $P_{\mathrm{Sym}^n}$ the projector onto the $n$-th Fock sector.
In terms of the bosonic creation operators $a_j^\dagger$ appearing in Eq.~\eqref{eq:quartic hamiltonian} a convenient basis is constructed by symmetrizing states of the form $a_{j_n}^\dagger \ldots {a^\dagger_{j_1}} \ket 0$, $j_k \in \{1, \ldots d\}$, where a string of $n$ creation operators acts on the vacuum state $\ket 0 \in \Sym^0(\mathbb C^d)$.
We enumerate the states in lexicographical order of decreasing strings of mode labels $j_n \geq \ldots \geq j_2 \geq j_1$ and denote the Fock basis in this order by $\{\ket a  \mid a \in \{0, \ldots, d_F\}\}$.

Since the Hamiltonians of the form Eq.~\eqref{eq:quartic hamiltonian} are 
particle-number preserving, they generate groups of unitaries that act block-diagonally on the Fock space. Such particle-number preserving unitaries $U$ act on density matrices in $L(\mathcal F)$ by conjugation, i.e., as unitary 
quantum channels, 
\begin{equation}
\omega(U) \coloneqq U \otimes \bar U = \bigoplus_n U|_{\Sym^n(\mathbb C^d)} \otimes \bar U|_{\Sym^n(\mathbb C^d)} =: \bigoplus_n \omega_n(U)
\end{equation}
within every Fock sector. 

For the two groups we are considering $\omega_n$ is further reducible for all $n > 0$, i.e., there exists a basis transformation for $L(\Sym^n(\mathbb C^d))$ such that all unitaries within the group are block-diagonal (and have more than one block). 
Concretely, we will find the following decompositions in the remainder of this section:

First, we have the unitaries that are generated by Hamiltonians \emph{carrying interaction terms} (Fig.~\ref{fig:hamiltonian-int-nonint-rep}, left) which means that the evolution operator automatically acts on the full $n$-particle preserving Fock space with $d$ modes. In this case, the representation $\omega_n = \sigma_\text{\rm trivial} \oplus \sigma_n$ decomposes into two irreducible representations: the trivial representation carried by the span of the identity in $L(\Sym^n(\mathbb C^d))$ and its complement $\sigma_n$ carried by trace-less matrices. We denote the carrier space of $\sigma_n$ by $W_n$ and its projector as $P_n$.

Second, the Hamiltonians \emph{without interaction (quartic) terms}, (Fig.~\ref{fig:hamiltonian-int-nonint-rep}, middle), generate a smaller group isomorphic to $\operatorname{SU}(d)$, acting as (anti-)symmetric for bosons (\,/\,fermions) tensor products on the Fock sectors. Now, $\sigma_n = \bigoplus_{l \in \{1, \ldots,  n\}} \sigma_{n,l}$ further decomposes into $n$ mutually inequivalent irreps $\sigma_{n,l}$ with $l \in \{1, \ldots, n\}$ acting on subspace $
W_{n,l}$. 
We denote the projectors onto $W_{n,l}$ by $P_{n,l}$.
See Fig.~\ref{fig:repr_definition} for an illustration and overview. 

\begin{figure}[tb]
     \includegraphics[width=.5\textwidth]{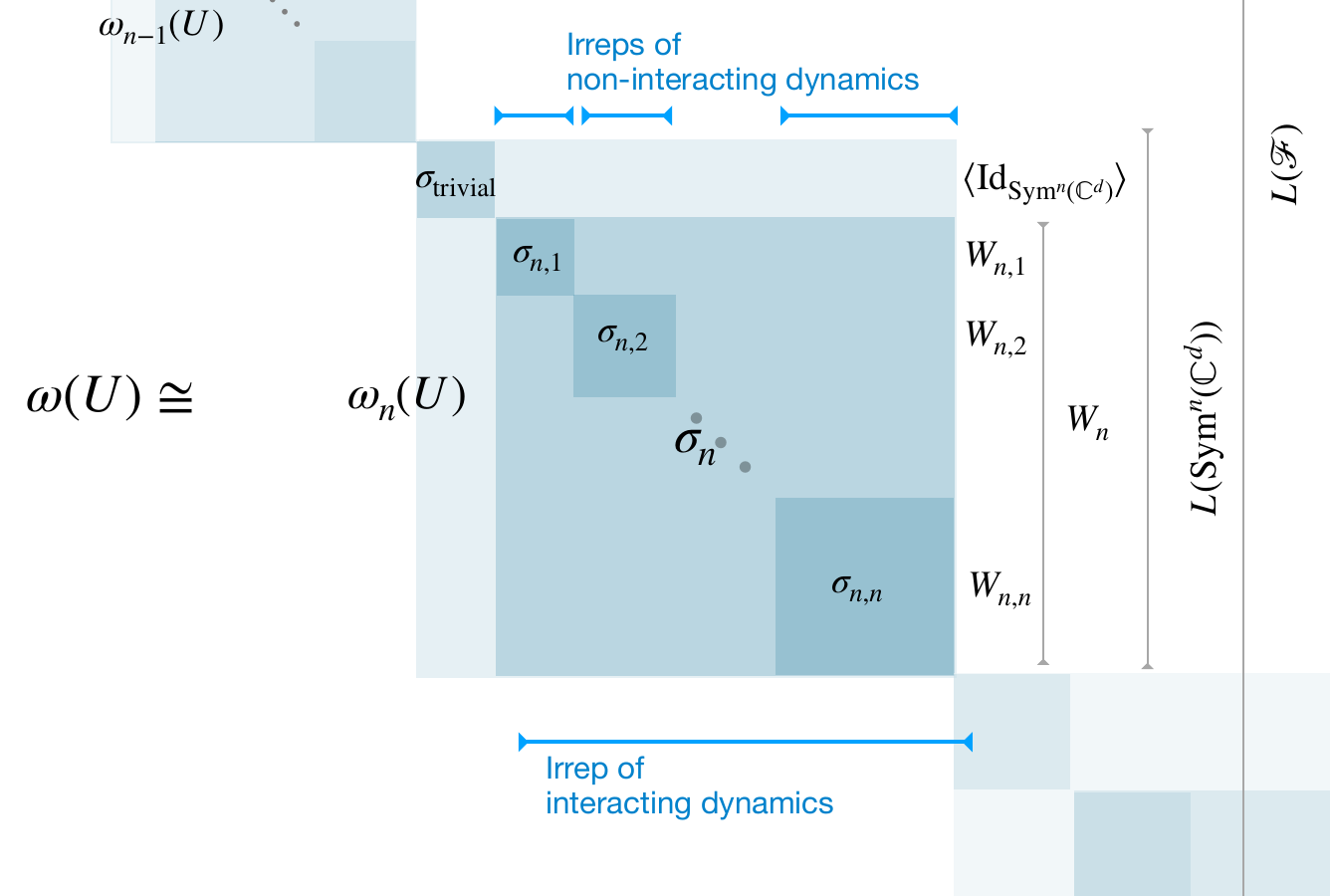}
     \caption{\label{fig:repr_definition}Overview of the notation for the irreducible representations of the groups generated by non-interacting and interacting particle-number preserving bosonic Hamiltonians. For interacting dynamics, the action $\omega(U)$ as a unitary quantum channel on Fock space endomorphisms $L(\mathcal F)$ decomposes into one non-trivial irrep $\sigma_n$ carried by $W_n$ per Fock sector with $n$ particles $\Sym^n(\mathbb C^d)$. In the absence of interactions $\sigma_n$ is further reducible to $n$ representations $\sigma_{n,l}$ carried by spaces $W_{n,l}$ with $l \in \{1, \ldots, n\}$. }
\end{figure}

\subsection{Bosonic, particle-number preserving non-interacting dynamics}

\subsubsection{Irreducible Representation}

The group of unitaries generated by particle-number preserving, quadratic non-interacting Hamiltonians is isomorphic to $\operatorname{SU}(d)$.   A unitary $u \in \operatorname{SU}(d)$ acts faithfully on a Fock sector $\Sym^n(\mathbb C^d)$ via its $n$-th fold symmetric tensor power $\Sym^n(u) = u^{\otimes n} |_{\Sym^n(\mathbb C^d)}$.  In other words, $u$ acts identically and independently on the mode spaces of the particles. In particular, $\Sym^1(u) = u$ and the unitaries, thus, are already completely determined by their action on the single particle sector. 
So for the group of (particle-number preserving) non-interacting unitaries we understand the representation $\omega_n$ as the representation $\omega_n = \Sym^n \otimes\ \overline{\Sym^n}$ of $\operatorname{SU}(d)$ on $L(\Sym^n(\mathbb C^d))$. 
The representation $\Sym^n$ and its complex conjugate $\overline{\Sym^n}$ are irreducible. 
For the corresponding representation of $\operatorname{SU}(d)$ as unitary quantum channels, we take the tensor product of these two irreducible representations that decompose into irreducible representations as follows: 
\begin{lemma}[Reduction of $\omega_n$ for non-interacting unitaries]\label{lem:noninterirreps}
	The space $L(\mathrm{Sym}^n(\mathbb{C}^d))$, decomposes under the action 
    $\Sym^n \otimes\ \overline{\Sym^n}$ of $\operatorname{SU}(d)$ (which is the action of $\omega_n$ representing the group of particle-number preserving non-interacting unitaries), into a trivial representation and $n$ non-trivial mutually inequivalent irreducible representations denoted by $\sigma_{n,l}$ for $l\in{0,\ldots, n}$.  The carrier space $W_{n,l}$ of $\sigma_{n,l}$ has dimension 
    \begin{equation}\label{eq:dimensionformulabosons}
        \dim W_{n,l}= \frac{d+2l-1}{d-1}{d+l-2 \choose l}^2={d+l-1\choose l}^2-{d+l-2\choose l-1}^2\,.
    \end{equation}
\end{lemma}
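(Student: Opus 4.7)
The plan is as follows. First I would identify the group generated by quadratic, particle-number-preserving Hamiltonians with the image of $\mathrm{U}(d)$ under the $n$-th symmetric tensor representation $u\mapsto \Sym^n(u)=u^{\otimes n}|_{\Sym^n(\mathbb{C}^d)}$, which is faithful on $\mathrm{SU}(d)$. Under this identification, $\omega_n$ becomes the $\mathrm{U}(d)$-representation on $\mathrm{End}(V)\cong V\otimes V^{\ast}$ with $V=\Sym^n(\mathbb{C}^d)$, where $u$ acts by conjugation $A\mapsto \Sym^n(u)\,A\,\Sym^n(u)^{\dagger}$. Lemma~\ref{lem:noninterirreps} is thereby reduced to the classical decomposition of $V\otimes V^{\ast}$ when $V$ is a symmetric power of the defining representation.

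To obtain the decomposition I would identify $\mathrm{End}(\Sym^n(\mathbb{C}^d))$ with the space $P^{(n,n)}$ of polynomial functions on $\mathbb{C}^d\oplus(\mathbb{C}^d)^{\ast}$ that are homogeneous of degree $n$ in the $v_i$ and of degree $n$ in the dual variables $\xi_i$, where $\Sym^n(u)A\Sym^n(u)^{\dagger}$ corresponds to $P(v,\xi)\mapsto P(u^{-1}v,u^{T}\xi)$. The $\mathrm{U}(d)$-invariant contraction $r^{2}:=\sum_i v_i\xi_i$ and the dual Laplacian $\Delta:=\sum_i \partial_{v_i}\partial_{\xi_i}$ satisfy the commutation relation $[\Delta,r^{2}]=d+\sum_i(v_i\partial_{v_i}+\xi_i\partial_{\xi_i})$, so together with $H:=[\Delta,r^{2}]$ they generate an $\mathfrak{sl}_2$-triple that commutes with the $\mathrm{U}(d)$-action. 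Standard $\mathfrak{sl}_2$-theory then yields
\begin{equation}
    P^{(n,n)}=\bigoplus_{l=0}^{n}(r^{2})^{\,n-l}\,\mathcal{H}^{(l,l)},\qquad \mathcal{H}^{(l,l)}:=\ker\Delta\cap P^{(l,l)},
\end{equation}
with all summands $\mathrm{U}(d)$-invariant. I would set $W_{n,l}:=(r^{2})^{n-l}\mathcal{H}^{(l,l)}$; the summand with $l=0$ is the line spanned by $(r^{2})^{n}$, which corresponds to the identity operator on $\Sym^n(\mathbb{C}^d)$ and is the trivial irrep.

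Irreducibility, inequivalence, and dimensions all follow by classical arguments. Fixing the standard maximal torus and Borel, the polynomial $v_{1}^{l}\xi_{d}^{l}\in\mathcal{H}^{(l,l)}$ is killed by $\Delta$ and is a highest-weight vector of weight $(l,0,\ldots,0,-l)$; any other weight occurring in $P^{(l,l)}$ is strictly smaller in the dominance order, so the highest-weight vector is unique up to scalar and $\mathcal{H}^{(l,l)}\cong V_{(l,0,\ldots,0,-l)}$ is irreducible. Distinct values of $l$ give pairwise inequivalent irreps since their highest weights differ. For the dimension, injectivity of multiplication by $r^{2}$ (the polynomial ring is an integral domain) together with the decomposition above gives
\begin{equation}
    \dim\mathcal{H}^{(l,l)}=\dim P^{(l,l)}-\dim P^{(l-1,l-1)}=\binom{d+l-1}{l}^{2}-\binom{d+l-2}{l-1}^{2},
\end{equation}
which is the second form in Eq.~\eqref{eq:dimensionformulabosons}. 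The first form then follows from the Weyl dimension formula applied to the highest weight $(l,0,\ldots,0,-l)$ (equivalently the Young diagram $(2l,l,\ldots,l,0)$ for $\mathrm{GL}(d)$): pairs $(i,j)$ with $2\le i<j\le d-1$ contribute $1$, while the three remaining blocks collapse to $\tfrac{d+2l-1}{d-1}\binom{d+l-2}{l}^{2}$; the equality of the two closed forms is then an elementary binomial identity.

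The conceptually non-trivial step is the irreducibility of each harmonic piece $\mathcal{H}^{(l,l)}$ and the fact that the $(r^{2})^{k}\mathcal{H}^{(l,l)}$ are mutually disjoint. I would handle both by the unique-highest-weight argument above; a slightly heavier but more structural alternative is to invoke Howe duality for the dual pair $(\mathfrak{sl}_{2},\mathfrak{gl}_{d})$ acting on the polynomial Fock model, which directly packages the $\mathrm{U}(d)$-isotypic decomposition together with the irreducibility of its components. Once either of these is in place, the remainder (the $\mathfrak{sl}_{2}$-decomposition of $P^{(n,n)}$ and the two dimension counts) is essentially bookkeeping.
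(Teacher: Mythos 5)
Your proof is correct in substance but takes a genuinely different route from the paper. The paper decomposes $\Sym^n \otimes\ \overline{\Sym^n}$ by Young-diagram calculus: $\overline{\Sym^n}$ is the rectangular diagram with $d-1$ rows of $n$ boxes, the tensor product with a single row of $n$ boxes is expanded by the Littlewood--Richardson (Pieri) rule into the $n+1$ diagrams $(2l,l,\ldots,l)$, and the dimensions are read off from the standard formula cited from Georgi. You instead realize $L(\Sym^n(\mathbb{C}^d))$ as bi-homogeneous polynomials of degree $(n,n)$ and apply the harmonic decomposition for the dual pair $(\mathfrak{sl}_2,\mathfrak{gl}_d)$, identifying $W_{n,l}$ with $(r^2)^{n-l}\mathcal{H}^{(l,l)}$. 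Both arrive at the same highest weights and the same two closed forms for the dimensions. Your route is heavier on setup but buys something the diagram calculus does not: an explicit model of each carrier space. In fact $(r^2)^{n-l}P^{(l,l)}$ is precisely the paper's space $L_l$ of operators supported on $l$ particles, so your decomposition simultaneously recovers the content of Theorem~\ref{theorem:explicitdescriptionofirreps} (the nested ``onion'' structure $P_{n,l}=P_{L_l}-P_{L_{l-1}}$), which the paper has to prove separately.

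One step is stated too loosely. Uniqueness (up to scalar and multiplicity one) of the \emph{maximal} weight in $\mathcal{H}^{(l,l)}$ does not by itself imply irreducibility: a reducible module can have a unique maximal weight of multiplicity one while containing singular vectors of strictly lower weights ($P^{(l,l)}$ itself is such an example, since $(r^2)^l$ is a singular vector of weight zero). What you need is either that the full space of singular vectors in $\mathcal{H}^{(l,l)}$ is one-dimensional, or --- the cheapest fix, using ingredients you already have --- the observation that $\mathcal{H}^{(l,l)}$ contains $V_{(l,0,\ldots,0,-l)}$ as a submodule and that $\dim\mathcal{H}^{(l,l)}=\binom{d+l-1}{l}^2-\binom{d+l-2}{l-1}^2$ coincides with the Weyl dimension of $V_{(l,0,\ldots,0,-l)}$, forcing equality. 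Alternatively, invoking Howe duality as you suggest settles irreducibility and multiplicity-freeness in one stroke. With that one sentence added, the argument is complete.
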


\begin{proof}
 The irreducible representations of $\mathrm{SU}(d)$ are in one-to-one correspondence to so-called Young diagrams---a graphical representation of partitions of natural numbers into summands. 
	We can determine the irreducible representations of $\Sym^n \otimes\ \overline{\Sym^n}$ using the calculus of Young diagrams, see e.g.~Ref.~\cite{georgi_lie_1999} for an explanation of the technique.
    The bosonic representation $ \Sym^n $ corresponds to the Young diagram
{\scriptsize\[
	\underbrace{\ytableausetup{centertableaux}
		\begin{ytableau}
		~&  &  & \cdots &  &  &
		\end{ytableau}}_{n}\,.
	\]}
	Using standard rules, we obtain the following Young table for the conjugate representation $u \mapsto \overline{\Sym^n}(u) = \overline{u}^{\otimes n}|_{\Sym^n(\mathbb C^d)}$: 
	{\scriptsize\[
	\ytableausetup{centertableaux}
	\left.	\begin{ytableau}
	~&  &  &\cdots &  &  & \\
	~&  &  &\cdots &  &  & \\
	~&  & \vdots &\vdots & \vdots &  & \\
	~&  &  &\cdots &  &  & \\
	~&  &  &\cdots &  &  & 
	\end{ytableau}~\right\}d-1\,, 
	\]}
    which consists of $d-1$ rows of $n$ boxes each.  Applying the rule for decomposing tensor products of two representations, we 
    find a decomposition of 
    $\Sym^n \otimes\, \overline{\Sym^n}$ into $n+1$ irreducible representation. 
	For convenience and illustration, we write out explicitly the decomposition for $n=4$ and $d=4$, 
	\begin{multline*}\scriptsize
	\ytableausetup{centertableaux}
	\begin{ytableau}
	~&   &  &
	\end{ytableau}\quad\otimes \quad
	\begin{ytableau}
	~&    &  & \\
	~&    &  & \\
	~&   &  & 
	\end{ytableau}\quad =
	\quad
	\begin{ytableau}
	~&    &  & & & & &\\
	~&    &  &  & \none & \none &\none &\none \\
	~~&    &  && \none & \none &\none &\none \\
	\end{ytableau}
	\quad\oplus\quad
	\begin{ytableau}
	~  &  & & & & &\none\\
	~   &  &  & \none & \none &\none &\none \\
	~    &  & & \none & \none &\none &\none 
	\end{ytableau}
	\oplus\quad
	\begin{ytableau}
	~  & & & \\
	~ &  & \none & \none \\
	~~ & & \none & \none  
	\end{ytableau}\quad \oplus\quad
	\begin{ytableau}
	~ & & \none \\
	~ & \none & \none   \\
	~ & \none & \none  \\
	\end{ytableau}\quad\oplus\quad
\mathrm{Triv}\,.
	\end{multline*}
  The diagrams on the right-hand side arise from concatenating part of the boxes of $\Sym^n$ to the first row of the diagram of $\overline{\Sym^n}$ and the remaining boxes as another row.
    Then, we can delete all columns with $d$ boxes. In particular, the last diagram corresponds to the empty diagram and hence denotes the trivial representation.
    Following this strategy for general $n$ and $d$, we can add $l \in {0, \ldots, n}$ boxes of $\Sym^n$ to the first row of $\overline{\Sym^n}$.  In this way, we arrive at all diagrams of the following form: $2l$ boxes in the first row and $d-1$ subsequent rows of $l$ boxes for $l \in \{0, \ldots, n\}$. 
    For $l=0$  we always encounter a trivial irrep. For the remaining inequivalent ones, we introduce the short label $\lambda = (n,l)$. 
    The stated dimension of the irreps directly follows from general formulas given their Young diagrams (compare~\cite[Chapter 12.2]{georgi_lie_1999}). 

\end{proof}

\subsubsection{Projectors and their construction}
\label{subsubsec:nonint-projectors-and-their-construction}

In the following, we characterize the projectors on the irreducible representations.
Roughly, the $(n+1)$th irreducible representation can be viewed as the complement of all the linear maps that act only on $n-1$ many particles.
This leads to a recursive formula for the projectors onto the irreducible representations $W_{n,l}$.

\begin{theorem}\label{theorem:explicitdescriptionofirreps}
Consider the space of linear maps acting on $r\leq n$ bosonic particles defined as
\begin{equation}
 L_r\coloneqq \{P_{\mathrm{Sym}^n}(\mathbbm{1}_{d}^{\otimes n-r}\otimes M)P_{\mathrm{Sym}^n}, M\in L(\mathrm{Sym}^r(\mathbb{C}^d))\}\,.
\end{equation}
 $L_r$ is isomorphic as a subrepresentation to the space $L(\mathrm{Sym}^r(\mathbb{C}^d))$ equipped with the action $\omega_r$.
Moreover, the projectors $P_{n,l}$ satisfy
\begin{equation}
    P_{n,l}\coloneqq P_{L_l}-P_{L_{l-1}}\, .
\end{equation}
\end{theorem}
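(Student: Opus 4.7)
The plan is to use the multiplicity-free decomposition of $L(\mathrm{Sym}^n(\mathbb C^d))$ under $\omega_n$ from Lemma~\ref{lem:noninterirreps} together with the fact that the isomorphism class of the irrep $\sigma_{n,l}$ does not depend on $n$. Identifying $L_l$ with the sum of the trivial subrepresentation and $W_{n,1},\ldots,W_{n,l}$ as subspaces of $L(\mathrm{Sym}^n(\mathbb C^d))$, the recursive formula $P_{n,l}=P_{L_l}-P_{L_{l-1}}$ then follows by subtraction.

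First I would verify the equivariant isomorphism. Define the linear map $\iota_r\colon L(\mathrm{Sym}^r(\mathbb C^d))\to L_r$ by $\iota_r(M)=P_{\mathrm{Sym}^n}(\mathbbm{1}^{\otimes n-r}\otimes M)P_{\mathrm{Sym}^n}$. Since $u^{\otimes n}$ commutes with $P_{\mathrm{Sym}^n}$, conjugation by $\mathrm{Sym}^n(u)$ inside $\iota_r(M)$ slides past both projectors and acts on the last $r$ factors as $\mathrm{Sym}^r(u)$, giving $\omega_n(u)\iota_r(M)=\iota_r(\omega_r(u)M)$; in particular, $L_r$ is a subrepresentation. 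Surjectivity onto $L_r$ is by definition, and injectivity follows by evaluating matrix elements in symmetric product states, using $\langle\phi^{\otimes n}|\iota_r(M)|\psi^{\otimes n}\rangle=\langle\phi|\psi\rangle^{n-r}\langle\phi^{\otimes r}|M|\psi^{\otimes r}\rangle$: since the states $|\phi^{\otimes r}\rangle$ span $\mathrm{Sym}^r(\mathbb C^d)$, vanishing of the left-hand side for all $\phi,\psi\in\mathbb C^d$ forces $M=0$ by polarization.

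Given the equivariant isomorphism, Lemma~\ref{lem:noninterirreps} applied at particle number $r$ yields $L_r\cong\mathrm{Triv}\oplus\bigoplus_{l=1}^r\sigma_{r,l}$. I would next observe that the abstract $\mathrm{SU}(d)$-irrep $\sigma_{n,l}$ depends only on $l$ and $d$, not on $n$: the Young diagrams constructed in the proof of Lemma~\ref{lem:noninterirreps} have shape $(2l,l,\ldots,l)$ for every $n\ge l$, and Eq.~\eqref{eq:dimensionformulabosons} for the dimension is manifestly $n$-independent. Hence $\sigma_{r,l}\cong\sigma_{n,l}$ whenever $l\le r\le n$. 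Because the decomposition of $L(\mathrm{Sym}^n(\mathbb C^d))$ under $\omega_n$ is multiplicity-free and the $\sigma_{n,l}$ are pairwise inequivalent, Schur's lemma identifies each irreducible copy in the subrepresentation $L_r$ uniquely with the corresponding $W_{n,l}$ in $L(\mathrm{Sym}^n(\mathbb C^d))$. Therefore $L_r=\mathrm{Triv}\oplus W_{n,1}\oplus\cdots\oplus W_{n,r}$ as an orthogonal direct sum of concrete subspaces.

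Finally I would verify the chain $L_{r-1}\subseteq L_r$: given $M'\in L(\mathrm{Sym}^{r-1}(\mathbb C^d))$, the choice $M\coloneqq P_{\mathrm{Sym}^r}(\mathbbm{1}_d\otimes M')P_{\mathrm{Sym}^r}$ satisfies $\iota_r(M)=\iota_{r-1}(M')$ via the identity $P_{\mathrm{Sym}^n}(\mathbbm{1}^{\otimes n-r}\otimes P_{\mathrm{Sym}^r})=P_{\mathrm{Sym}^n}$. Combined with the subspace description above, this gives the orthogonal decomposition $L_l=L_{l-1}\oplus W_{n,l}$, so $P_{L_l}-P_{L_{l-1}}=P_{W_{n,l}}=P_{n,l}$ as claimed. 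The subtlest step is the identification $\sigma_{r,l}\cong\sigma_{n,l}$ across different particle numbers: it hinges on reading off from the Young-diagram calculus in Lemma~\ref{lem:noninterirreps} that the diagram shape depends only on $l$ and $d$ once $n\ge l$. Once that is stated cleanly, the remainder is routine symmetric-tensor bookkeeping combined with Schur's lemma applied to a multiplicity-free decomposition.
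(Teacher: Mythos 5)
Your proof is correct, and while it shares the paper's first step, it takes a genuinely different route through the two substantive parts of the argument. The common ground is the intertwiner computation showing that $\iota_r\colon M\mapsto P_{\mathrm{Sym}^n}(\mathbbm{1}^{\otimes n-r}\otimes M)P_{\mathrm{Sym}^n}$ is equivariant, so that $L_r$ is a subrepresentation. From there the paper proceeds combinatorially: it exhibits an explicit witness $P_{\mathrm{Sym}^n}\,\mathbbm{1}^{\otimes n-r}\otimes|1\rangle_r\langle r+1|_r\,P_{\mathrm{Sym}^n}$ lying in $L_r$ but orthogonal to $L_{r-1}$ (an occupation-number argument), concludes from the strictly increasing chain $L_0\subsetneq\cdots\subsetneq L_n$ that $L_r$ contains exactly $r+1$ irreps, and then pins down \emph{which} irreps by an inductive dimension count against Eq.~\eqref{eq:dimensionformulabosons}. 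You instead (i) prove injectivity of $\iota_r$ directly from the matrix-element identity $\langle\phi^{\otimes n}|\iota_r(M)|\psi^{\otimes n}\rangle=\langle\phi|\psi\rangle^{n-r}\langle\phi^{\otimes r}|M|\psi^{\otimes r}\rangle$, and (ii) identify the irreducible constituents of $L_r$ by observing that the Young diagram of $\sigma_{n,l}$ (shape $2l$ boxes in the first row, $l$ in each of the following $d-1$ rows) is independent of $n$, so $\sigma_{r,l}\cong\sigma_{n,l}$, whence Schur's lemma and the multiplicity-freeness of $\omega_n$ force the image of each irreducible block of $L(\mathrm{Sym}^r(\mathbb{C}^d))$ to be exactly $W_{n,l}$. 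Your route buys a cleaner structural statement -- the onion picture $L_r=\mathrm{Triv}\oplus W_{n,1}\oplus\cdots\oplus W_{n,r}$ falls out at once, with no dimension chase and with the containment $L_{r-1}\subseteq L_r$ as an automatic corollary -- at the cost of leaning on the $n$-independence of the diagram shape, which must be read off from the \emph{proof} of Lemma~\ref{lem:noninterirreps} rather than its statement; the paper's argument uses only the stated dimension formula, and its explicit witness matrix is recycled almost verbatim in the fermionic analogue. One small point to tighten: the prefactor $\langle\phi|\psi\rangle^{n-r}$ vanishes when $\phi\perp\psi$, so to conclude $\langle\phi^{\otimes r}|M|\psi^{\otimes r}\rangle=0$ for \emph{all} pairs you should add that this quantity is polynomial (hence continuous) in $(\bar\phi,\psi)$ and vanishes on the dense set of non-orthogonal pairs; with that one line, the polarization step and the rest of the argument go through.
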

Theorem~\ref{theorem:explicitdescriptionofirreps} states that the space $L(\text{Sym}^n(\mathbb{C}^d))$ is structured like an onion with $n+1$ rings, where each layer is an irreducible representation and $L_r$ contains the core (the trivial irreducible representation) and the $r$ most inner rings. 
\begin{proof}
We claim that the map $i:\mathrm{Sym}^r(\mathbb{C}^d)\to L_r, M\mapsto P_{\mathrm{Sym}^n}(\mathbbm{1}_{d}^{\otimes n-r}\otimes M)P_{\mathrm{Sym}^n}$ is an isomorphism of representations.
We find
\begin{align}
\begin{split}
    i(U^{\otimes r}M(U^{\dagger})^{\otimes r})&= P_{\mathrm{Sym}^n}(\mathbbm{1}_{d}^{\otimes n-r}\otimes U^{\otimes r} M(U^{\dagger})^{\otimes r})P_{\mathrm{Sym}^n}\\
    &= P_{\mathrm{Sym}^n}U^{\otimes r}(\mathbbm{1}_{d}^{\otimes n-r}\otimes M)(U^{\dagger})^{\otimes r}P_{\mathrm{Sym}^n}\\
      &=U^{\otimes n} P_{\mathrm{Sym}^n}(\mathbbm{1}_{d}^{\otimes n-r}\otimes M)P_{\mathrm{Sym}^n}(U^{\dagger})^{\otimes n}\\
      &=U^{\otimes n}i(M)(U^{\dagger})^{\otimes n}.
      \end{split}
\end{align}
Therefore, $i$ is an intertwiner
and $L_r=i(L(\mathrm{Sym}^r(\mathbb{C}^d)))$ is a subrepresentation of $L(\mathrm{Sym}^n(\mathbb{C}^d))$.

Next, we need to prove that $i:\mathrm{Sym}^r(\mathbb{C}^d)\to L_r$ is an isomorphism.
For this, we show that there is a matrix in $L_r$ that is not in $L_{r-1}$. 
More precisely, we claim that the matrix $P_{\mathrm{Sym}^n} \mathbbm{1}_d^{n-r}\otimes |1\rangle_r\langle {r+1}|_r\, P_{\mathrm{Sym}^n}$ for $d\geq 2$ with $\ket {k}_r$ the $k$th Fock basis state of $\mathrm{Sym}^r(\mathbb C^d)$ is in $L_r$ but not in $L_{r-1}$. 
Recall that $|1\rangle_r$ and $|r + 1\rangle_r$ are the state with all $r$ particles in the first and second mode, respectively. 
We find
\begin{equation}
    P_{\mathrm{Sym}^n} \mathbbm{1}_d^{n-r}\otimes |1\rangle_r\langle r + 1|_r P_{\mathrm{Sym}^n}=\sum_{j_1,\ldots, j_{n-r}=1}^d P_{\mathrm{Sym}^n}\bigotimes_{i=1}^{n-r}|j_i\rangle\otimes |1\rangle_r \langle j_i|\otimes \langle r+1|_r P_{\mathrm{Sym}^n}.
\end{equation}
In particular, for each state in this sum, the difference in the number of particles in mode 1 in the kets vs bras is at least $r$, which is orthogonal to all matrices in $L_{r-1}$.
As  $L_0\subseteq\ldots\subseteq  L_r\subseteq\ldots \subseteq L_n$ and $L_r\neq L_{r+1}$, we find a contradiction unless $L_r$ contains exactly $r+1$ of the irreducible representations  $W_{n,r}$.

Finally, we need to determine which representations lie in $L_r$. 
Notice that $\dim L_r\leq {d+r-1\choose r}^2$ and therefore, we find for the orthogonal complement of $L_{r-1}$ in $L_{r}$ that 
\begin{equation}\label{eq:irreducible repsdimensioninequality}
\dim L_{r-1}^{\perp}\geq {d+r-1\choose r}^2-{d+r-2\choose r-1}^2.
\end{equation}
We compare the above inequality with the dimension formula Eq.~\eqref{eq:dimensionformulabosons} to match the spaces $L_r$ and $W_{n,l}$: It is easy to see that the only one-dimensional spaces are $L_0=W_{n,0}$. 
Moreover, except for $W_{n,0}$ only $W_{n,1}$ can satisfy Eq.~\eqref{eq:irreducible repsdimensioninequality} for $r=1$ as the irreducible representation dimensions grow monotonically in $r$.
Repeating this argument inductively shows $L_{r-1}^{\perp}=W_{n,r}$.
\end{proof}

As is customary for RB schemes, we must (classically) simulate the ideal action of the sequence of operations within a given irrep. 
For standard Clifford RB the classical simulation is required for determining the sequence inverse that only acts non-trivially on the subspace of trace-less matrices.
To restrict the simulation to a particular irrep we require expressions for the projectors onto the carrying spaces. 
These will later enter into the construction of `filter functions' for the protocol.

Clebsch-Gordan coefficients describe the corresponding transformation from a basis of the tensor product of two irreps $W_{\lambda} \otimes W_{\lambda'} \cong \bigoplus_{\lambda''} W_{\lambda''}$ to the bases of the irreps $W_{\lambda''}$ appearing in its complete decomposition, see e.g.~Ref.~\cite{alex2011numerical}. 
Therefore, we can explicitly construct an expression of the projectors $P_{n,l}$ in terms of the basis of $L(\Sym^n(\mathbb C^d)) \cong\ \Sym^n(\mathbb C^d) \otimes \Sym^n(\mathbb C^d)$ from the corresponding Clebsch-Gordan coefficients.
Since here $\Sym^n(\mathbb C^d) \otimes \Sym^n(\mathbb C^d)$ carries the unitary representation $\Sym^n \otimes\ \overline{\Sym^n}$, a natural choice of basis for the second tensor factor is a `conjugate basis' of the Fock basis.
This basis arises from adding a minus sign to the Lie algebra and thereby changing the role of highest and lowest weight states in the construction.  
To this end, we denote $\bar a = d_F - a - 1$ for indices $a \in \{0, \ldots ,d_F-1\}$ and further understand that $\ket{a} = \sigma_{a} \ket{\bar{a}}_F$ with $\sigma_a \in \{-1, +1\}$ the signs arising in the conjugate construction is the basis for the second tensor factor and $\ket{a}_F$ the `original' Fock basis. 
Together $\{\ket{a, a'} = \ket a \ket{a'} = \sigma_{a'} \ket a \ket {d_F - a' - 1}_F \mid a, a' \in {0, \ldots, d_F}\}$ is our basis for $\Sym^n(\mathbb C^d) \otimes \Sym^n(\mathbb C^d)$.

In terms of Clebsch-Gordan coefficients $C^{a_\lambda''}_{a, a'}$ basis vectors for the irrep $\lambda$ arising in the decomposition of $\Sym^n \otimes\ \overline{\Sym^n}$ can be constructed as 
\begin{equation}
    |a_\lambda''\rangle = \sum_{a,a'} C^{a_\lambda''}_{a, a'} |a, a'\rangle\,,
\end{equation}
where $a, a' \in \{0, \ldots, d_F - 1\}$ and $a_\lambda'' \in \{0, \ldots, \dim(W_\lambda) - 1\}$.

The projector onto the irreducible representation $W_\lambda$ then reads
\begin{align}
    P_\lambda = \sum_{a_\lambda''}|a_\lambda''\rangle\langle a_\lambda''|
    = \sum_{a''_\lambda} \sum_{a,a',b, b'} C^{a''_\lambda}_{b,b'} C^{a''_\lambda}_{a,a'} |a, a'\rangle\langle b, b'|
    \,,
    \label{eq:projector-passive-gaussian}
\end{align}
where the first sum runs over the basis elements of $\lambda$. 

In Ref.~\cite{alex2011numerical} an algorithm for Clebsch-Gordan coefficients is presented and implemented in the C++ programming language.  We will use the reference's implementation for numerical evaluations of expressions involving Clebsch-Gordan coefficients.

\subsubsection{Overlap with irreducible representations}\label{app: overlap}
The RB protocol probes the implementation of a set of operations per irrep using an initial state. 
In order to succeed it is important to ensure sufficient overlap of initial states (and measurement) with the irreps under scrutiny.  
To this end, the following structural result will be useful in order to obtain bounds on the sample complexity. 

\begin{lemma}\label{lower_bound_overlap}
For $|\psi\rangle\in \mathrm{Sym}^n(\mathbb{C}^d)$ it holds that
\begin{equation}
    \mathrm{Tr}[|\psi\rangle\langle \psi| P_{n,n}(|\psi\rangle\langle \psi|)]\geq 1-\frac{n^2}{d}.
\end{equation}
\end{lemma}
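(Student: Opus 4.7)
The plan is to reduce the claim to bounding the HS-orthogonal projection onto the complementary subspace and then to control that via the reduced density matrix. By Theorem~\ref{theorem:explicitdescriptionofirreps}, $P_{L_n} = \mathrm{id}_{L(\mathrm{Sym}^n(\mathbb{C}^d))}$, so $P_{n,n} = \mathrm{id} - P_{L_{n-1}}$. For a pure $\rho = |\psi\rangle\langle\psi|$ the orthogonal decomposition $\rho = P_{L_{n-1}}(\rho) + P_{n,n}(\rho)$ together with $\|\rho\|_F^2 = 1$ gives $\mathrm{Tr}[\rho\, P_{n,n}(\rho)] = 1 - \|P_{L_{n-1}}(\rho)\|_F^2$, so it suffices to prove
\begin{equation}
\|P_{L_{n-1}}(\rho)\|_F^2 \le n^2/d.
\end{equation}

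Next I would use second quantisation to obtain a closed form of the projection. The embedding $i$ of Theorem~\ref{theorem:explicitdescriptionofirreps} reads $i(M) = \tfrac{1}{n}\sum_k a_k^\dagger M a_k$, and on the symmetric Fock sector $\mathrm{Tr}_1(X) = \tfrac{1}{n}\sum_l a_l X a_l^\dagger$. A brief calculation then gives $\mathrm{Tr}[\rho\, i(M)] = \mathrm{Tr}[\rho_{n-1}\, M]$ with $\rho_{n-1}:=\mathrm{Tr}_1(\rho)$, so $P_{L_{n-1}}(\rho) = i(G^{-1}(\rho_{n-1}))$ for the self-adjoint Gram operator $G(M) := \mathrm{Tr}_1(i(M))$, and $\|P_{L_{n-1}}(\rho)\|_F^2 = \mathrm{Tr}[\rho_{n-1}\, G^{-1}(\rho_{n-1})]$. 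Two applications of $[a_l,a_k^\dagger]=\delta_{kl}$ to reorder $a_l a_k^\dagger M a_k a_l^\dagger$ then yield the closed form
\begin{equation}
G(M) = \tfrac{1}{n^2}\bigl[(2n-2+d)\,M + \mathcal{E}(M)\bigr], \qquad \mathcal{E}(M) := \sum_{k,l} E_{kl}\, M\, E_{kl}^\dagger, \quad E_{kl} := a_k^\dagger a_l.
\end{equation}

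The decisive step is to show that $\mathcal{E}\succeq 0$ as a self-adjoint superoperator on $L(\mathrm{Sym}^{n-1}(\mathbb{C}^d))$. Self-adjointness is immediate from $E_{kl}^\dagger = E_{lk}$ combined with the symmetric double sum. For positivity, the Fock-basis matrix elements of $E_{kl}$ are real, so $E_{lk}^T = E_{kl}$; under the canonical vectorisation $L(V)\cong V^{\otimes 2}$ with $V = \mathrm{Sym}^{n-1}(\mathbb{C}^d)$ the superoperator $\mathcal{E}$ is identified with the operator $\sum_{k,l} E_{kl}\otimes E_{kl}$ on $V^{\otimes 2}$, which factors as
\begin{equation}
\sum_{k,l} E_{kl}\otimes E_{kl} = \Bigl(\sum_k a_k^\dagger \otimes a_k^\dagger\Bigr)\Bigl(\sum_l a_l \otimes a_l\Bigr) = B^\dagger B \;\succeq\; 0,
\end{equation}
with $B := \sum_l a_l\otimes a_l$ viewed as a map $V^{\otimes 2}\to \mathrm{Sym}^{n-2}(\mathbb{C}^d)^{\otimes 2}$. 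Consequently $G \succeq \tfrac{2n-2+d}{n^2}\,\mathrm{id}$, and combining with the density-operator bound $\|\rho_{n-1}\|_F^2 = \mathrm{Tr}[\rho_{n-1}^2] \le \mathrm{Tr}[\rho_{n-1}] = 1$ together with $2n-2+d\ge d$ yields $\|P_{L_{n-1}}(\rho)\|_F^2 \le n^2/(2n-2+d) \le n^2/d$, as required.

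The main technical obstacle is the positivity of $\mathcal{E}$: the individual summands $E_{kl}\,\cdot\,E_{kl}^\dagger$ are conjugations by non-Hermitian Kraus operators and in isolation need not define a PSD superoperator, so the $B^\dagger B$ factorisation made possible by the symmetric double sum and the reality of the Fock-basis matrix elements of $E_{kl}$ is essential; all other steps amount to routine second-quantised bookkeeping.
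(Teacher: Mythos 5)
Your proof is correct, and its skeleton coincides with the paper's: both reduce the claim to $\|P_{L_{n-1}}(|\psi\rangle\langle\psi|)\|_F^2\le n^2/d$, both exploit that the adjoint of the embedding $i$ of Theorem~\ref{theorem:explicitdescriptionofirreps} is the partial trace (so that the overlap with $L_{n-1}$ is controlled by $\|\mathrm{Tr}_1[|\psi\rangle\langle\psi|]\|_F\le 1$ via Cauchy--Schwarz), and both come down to the same Gram-type lower bound $\|i(\tilde M)\|_F^2\ge \tfrac{d}{n^2}\|\tilde M\|_F^2$. Where you genuinely diverge is in how that last bound is established. The paper stays in first quantization, expands the symmetrizer $P_{\mathrm{Sym}^n}$ into permutations, and sorts the resulting terms by whether the first tensor factor is a fixed point, obtaining $N^2_{\tilde M}=\tfrac{2n-2}{n^2}\|\tilde M\|_F^2+\tfrac{d}{n^2}\|\tilde M\|_F^2+\tfrac{(n-1)^2}{n^2}\|\mathrm{Tr}_{n-1}\tilde M\|_F^2$, with the last term discarded as manifestly nonnegative. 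You instead normal-order $\sum_{k,l}a_la_k^\dagger M a_ka_l^\dagger$ with the CCR, isolate the residual superoperator $\mathcal E=\sum_{k,l}E_{kl}\,\cdot\,E_{kl}^\dagger$, and prove $\mathcal E\succeq 0$ via the vectorized factorization $\sum_{k,l}E_{kl}\otimes E_{kl}=B^\dagger B$ (using the reality of the Fock-basis matrix elements). The two residual positive terms are the same object in different clothing, and both routes require a comparable amount of care — permutation counting on one side, the observation that individual Kraus conjugations $E_{kl}\cdot E_{kl}^\dagger$ are not separately HS-positive on the other. Your operator-inequality formulation $G\succeq\tfrac{2n-2+d}{n^2}\,\mathrm{id}$ is arguably cleaner than the paper's variational bound over unit-norm $M\in L_{n-1}$, avoids the $N_{\tilde M}$ bookkeeping, and even yields the marginally sharper constant $n^2/(2n-2+d)$; the only caveat is that your second-quantized expressions $i(M)=\tfrac1n\sum_k a_k^\dagger M a_k$ and $\mathrm{Tr}_1(X)=\tfrac1n\sum_l a_l X a_l^\dagger$ are asserted rather than derived from the theorem's definition $i(M)=P_{\mathrm{Sym}^n}(\mathbbm{1}^{\otimes 1}\otimes M)P_{\mathrm{Sym}^n}$, a standard but not entirely free identification whose normalization must be checked for the constant to come out right.
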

\begin{proof}
By Theorem~\ref{theorem:explicitdescriptionofirreps}
\begin{align}
\begin{split}
     \mathrm{Tr}[|\psi\rangle\langle\psi | P_{W_{n,n}}(|\psi\rangle\langle \psi|)]&= \mathrm{Tr}[|\psi\rangle\langle \psi| P_{L_n}(|\psi\rangle\langle \psi|)]- \mathrm{Tr}[|\psi\rangle\langle \psi| P_{L_{n-1}}(|\psi\rangle\langle \psi|)]\\
     &=1-\max_{\substack{M\in L_{n-1}\\ \|M\|_F=1}} |\mathrm{Tr}[M|\psi\rangle\langle \psi|]|^2.
     \end{split}
\end{align}
We will upper bound $|\langle \psi |M|\psi\rangle|^2$ for all 
Frobenius-normalized $M\in L_{n-1}$.
By Theorem~\ref{theorem:explicitdescriptionofirreps} we have $M=\Pi_{b,n}\mathbbm{1}_d\otimes \tilde{M}\Pi_{b,n}/N_{\tilde{M}}$, for some $\tilde{M}\in L(\mathrm{Sym}^{n-1}(\mathbb{C}^d))$, $\|\tilde{M}\|_F=1$ and $N_{\tilde{M}}=\|\Pi_{b,n}\mathbbm{1}_d\otimes \tilde{M}\Pi_{b,n}\|_F$ is a normalization factor. We can compute
\begin{align}
\begin{split}
|\langle \psi|M|\psi\rangle|^2&=\frac{|\langle \psi|\mathbbm{1}_d\otimes \tilde{M}|\psi\rangle|^2}{N_{\tilde{M}}^2}
=\frac{\mathrm{Tr}\left[\mathrm{Tr}_{1}[|\psi\rangle\langle\psi|]\tilde{M}\right]}{N_{\tilde{M}}^2}\\
&\leq \frac{\|\mathrm{Tr}_{1}[|\psi\rangle\langle\psi|]\|_F\|\tilde{M}\|_F}{N_{\tilde{M}}^2}
\leq \frac{\|\mathrm{Tr}_{1}[|\psi\rangle\langle\psi|]\|_1}{N_{\tilde{M}}^2}
\leq \frac{1}{N_{\tilde{M}}^2},
\end{split}
\end{align}
using Cauchy-Schwarz and norm ordering $\|\cdot\|_F\leq \|\cdot\|_1$.
Further, we can compute
\begin{align}
\begin{split}\label{eq:normalizationcomputation}
    N^2_{\tilde{M}}&=\|\Pi_{b,n}\mathbbm{1}_d\otimes \tilde{M}\Pi_{b,n}\|^2_F
    =\mathrm{Tr}[\Pi_{b,n}\mathbbm{1}_d\otimes \tilde{M}\Pi_{b,n}\mathbbm{1}_d\otimes \tilde{M}^{\dagger}]\\
    &=\mathrm{Tr}\left[\left(\frac{n-1}{n^2} \mathbb{F}_{1,2}\mathbbm{1}_d\otimes \tilde{M}+\frac{n-1}{n^2}\mathbbm{1}_d\otimes \tilde{M} \mathbb{F}_{1,2}+\frac{1}{n^2}\mathbbm{1}_d\otimes \tilde{M}+\frac{(n-1)^2}{n^2}\mathbb{F}_{1,2}\mathbbm{1}_d\otimes \tilde{M} \mathbb{F}_{1,2}\right)\mathbbm{1}_d\otimes \tilde{M}^{\dagger}\right]\\
    &=\frac{2n-2}{n^2}\mathrm{Tr}\left[\tilde{M}\tilde{M}^{\dagger}\right]+\frac{d}{n^2}\mathrm{Tr}\left[\tilde{M}\tilde{M}^{\dagger}\right]+\frac{(n-1)^2}{n^2}\mathrm{Tr}\left[\mathrm{Tr}_{n-1}[\tilde{M}]\mathrm{Tr}_{n-1}[\tilde{M}]^{\dagger}\right]
    \geq \frac{d}{n^2},
    \end{split}
\end{align}
where the prefactors in the second line can be determined by counting permutations: The probability of a permutation to have the first copy of $\mathbb{C}^d$ as a fixed point is $1/n$, so the probability of the permutation left of $\mathbbm{1}_d\otimes \tilde{M}$ having the first copy of $\mathbb{C}^d$ not as a fixed point but the permutation to right does is $(1-1/n)(1/n)=(n-1)/n^2$.
But every permutation acting only on the copies $2,\ldots,n$ can be absorbed by $\tilde{M}$ or $\tilde{M}^{\dagger}$. 
Moreover, for any permutation $\pi$ that does not have the first copy of $\mathbb{C}^d$ as a fixed point, we can find permutations $\sigma_1,\sigma_2$ on the copies $2,\ldots,n$ such that $\sigma_1 \pi\sigma_2=\mathbb{F}_{1,2}$.
This leads to the first term $(n-1)/n^2 \mathbb{F}_{1,2} \mathbbm{1}_d\otimes \tilde{M}$.
The other terms follow similarly by counting permutations.
Eq.~\eqref{eq:normalizationcomputation} then completes the proof.
\end{proof}

The lemma implies large overlaps of a pure quantum state with the largest of the irreducible representations  $W_{n,n}$ in the ``collision free'' regime $d \gg n^2$.

\subsection{Bosonic, particle-number preserving interacting dynamics}
In \cite{oszmaniec_universal_2017} the authors showed that adding a single interacting (i.e non-quadratic) Hamiltonian to the set of non-interacting (i.e quadratic) Hamiltonians gives (for $d>2$ and $n\neq6$) a generating set that encompasses all unitaries on the bosonic subspace (for $d=2$ a further condition has to be satisfied). This tells us that if, in addition to the band-limited quadratic Hamiltonians we considered before, we add another non-quadratic Hamiltonian (for example a quartic one), we can generate all particle-number preserving interactions on $n$ bosons in most settings. 
The group generated by Hamiltonians including interaction terms acts as $\operatorname{SU}(d_F(n))$ on $L(\Sym^n(\mathbb C^d))$ (the space containing states of fixed particle number) via conjugation.  So for the group of (particle-number preserving) interacting unitaries we can set $\omega_n(U) = U \otimes \bar U$ representing $U \in \operatorname{SU}(d_F(n))$.  
An element in the group is specified by choosing an element of $\operatorname{SU}(d_F(n))$ for all $n$. 
E.g.~using the rules for Young diagrams as before, we find the following lemma: 
\begin{lemma}[Reduction of $\omega_n$ for interacting unitaries] \label{lem:interreps}
    The space $L(\Sym^n(\mathbb C^d))$ decomposes under the $\operatorname{SU}(d_F)$ action $U \otimes \bar U$ into a trivial representation carried by the identity on $\Sym^n(\mathbb C^d)$ and its complement $\sigma_n$ carried by $W_n \coloneqq L(\Sym^n(\mathbb C^d)) \setminus \langle \operatorname{Id}_{\Sym^n(\mathbb CC)} \rangle$.  We have $\dim W_n = d_F(n)^2 - 1 = {n + d - 1 \choose n}^2 - 1$.
\end{lemma}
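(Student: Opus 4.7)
The plan is to recognize this lemma as a direct instance of the well-known adjoint-plus-trivial decomposition of $\mathrm{End}(\mathbb C^N)$ under the conjugation action of $\operatorname{SU}(N)$, specialized to $N = d_F(n) = \binom{n+d-1}{n}$. The only work is to identify the representation $\omega_n$ with the standard conjugation action and then invoke (or verify) that $\square \otimes \overline{\square}$ decomposes into only two inequivalent irreps for $\operatorname{SU}(N)$ with $N \geq 2$.

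First, I would use the canonical isomorphism $L(\Sym^n(\mathbb C^d)) \cong \Sym^n(\mathbb C^d) \otimes \overline{\Sym^n(\mathbb C^d)}$ discussed in the preliminaries to recast the action $\omega_n(U) = U \otimes \bar U$ as the conjugation action $X \mapsto U X U^\dagger$ on $X \in L(\Sym^n(\mathbb C^d))$. Under this identification the trivial subrepresentation $\langle \operatorname{Id}_{\Sym^n(\mathbb C^d)}\rangle$ is manifestly invariant (conjugation fixes the identity), and its orthogonal complement in the Hilbert–Schmidt inner product is exactly the space $W_n$ of traceless operators, which is also invariant because conjugation preserves the trace.

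Next, I would establish that $W_n$ is irreducible. The cleanest way, already used in the proof of Lemma~\ref{lem:noninterirreps}, is the Young-diagram calculus: the defining representation of $\operatorname{SU}(N)$ corresponds to a single box, its conjugate to a column of $N-1$ boxes, and the Littlewood–Richardson rule immediately yields $\square \otimes \overline{\square} = \mathrm{Triv} \oplus \mathrm{Adj}$ for any $N \geq 2$, with $\dim \mathrm{Adj} = N^2 - 1$. Alternatively, one may appeal directly to Schur's lemma: if $W_n$ contained a proper invariant subspace, then the commutant of $\{U \otimes \bar U : U \in \operatorname{SU}(N)\}$ on $\mathrm{End}(\mathbb C^N)$ would be more than two-dimensional, contradicting the well-known fact that this commutant is spanned by the identity and the swap-contraction (equivalently, that $\operatorname{SU}(N)$ acts irreducibly on traceless operators).

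Finally, the dimension count follows immediately: $\dim L(\Sym^n(\mathbb C^d)) = d_F(n)^2 = \binom{n+d-1}{n}^2$, and subtracting the one-dimensional trivial summand gives $\dim W_n = \binom{n+d-1}{n}^2 - 1$. The only mild subtlety is to handle the edge cases $d_F(n) = 1$ (i.e.\ $n=0$ or $d=1$), in which case $W_n$ is empty and the statement is vacuous; for all other parameters $d_F(n) \geq 2$ and the argument above applies verbatim. I do not expect any genuine obstacle; the lemma is essentially a consequence of the fact that $\omega_n$ for interacting dynamics is the full adjoint-like action of $\operatorname{SU}(d_F(n))$, in contrast to the much finer decomposition of Lemma~\ref{lem:noninterirreps} which required tracking the restriction to $\Sym^n$ of $\operatorname{SU}(d)$.
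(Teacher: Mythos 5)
Your proposal is correct and follows essentially the same route as the paper, which justifies the lemma by noting that the interacting dynamics generate all of $\operatorname{SU}(d_F)$ acting by conjugation and then invoking the Young-diagram calculus for $\square \otimes \overline{\square} = \mathrm{Triv} \oplus \mathrm{Adj}$, exactly as in your main argument. Your additional remarks (the commutant/Schur's-lemma alternative and the degenerate case $d_F = 1$) are sound but not needed beyond what the paper already asserts.
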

One finds (e.g.~using the rules for Young diagrams as before) that the action of the unitary group by conjugation decomposes into a trivial irrep carried by the identity on $\Sym^n(\mathbb C^d)$ and its complement $W_n \coloneqq L(\Sym^n(\mathbb C^d)) \setminus \langle \operatorname{Id}_{\Sym^n(\mathbb C^d)} \rangle$.

It directly follows that the projector onto $W_n$ are given by 
\begin{align}
    \label{eq:projector-interactive}
    P_{n} = \operatorname{Id}_{L(\Sym^d(\mathbb C^d))} - \frac1{d_F} | \operatorname{Id}_{\Sym^d(\mathbb C^d)} ) ( \operatorname{Id}_{\Sym^d(\mathbb C^d)} | =  \sum_{j,k =1}^{d_F}\left[\ket{j,k'}\bra{j,k'} - \frac1{d_F} \ket{j,j'}\bra{k,k'} \right]\,
\end{align}
where $\ket{k, j'}$ denotes the basis for $\Sym^n(\mathbb C^d) \otimes \Sym^n(\mathbb C^d)$ as before.

\subsection{Fermionic, particle-number preserving dynamics}\label{app:fermionic}
The discussion of the fermionic case is analogous to the bosonic one. 
The Hilbert space of $n$ fermions in $d$ modes is given by the anti-symmetric tensor product $\wedge^n(\mathbb C^d)$.
In the absence of interactions, quadratic fermionic Hamlitonians also generate a group isomorphic to $\operatorname{SU}(d)$ but now acting by conjugation with anti-symmetric tensor powers $\wedge^n \otimes \overline{\wedge^n}$ on $L(\wedge^n(\mathbb C^d)$. 
This representation decomposes into  irreducible representations as follows. 

\begin{lemma}[Decomposition $\sigma^f_n$ for fermionic non-interacting transformations]
 The space $L(\wedge^n(\mathbb C^d))$ decomposes under the $\operatorname{SU}(d)$-action $\wedge^n \otimes \overline{\wedge^n}$ (i.e.\ the action of $\omega_n$ representing the group of particle-number preserving non-interacting unitaries on $n$ fermions in $d$ modes) into 
	into $\min\{d-n,n\}+1$ irreducible representations $\sigma^f_{n,l}$ carried by $W^f_{n,l}$ with $l \in \{0, \ldots, \min\{d-n,n\}\}$ of dimension $\dim W^f_{n,l} = {d\choose l}^2-{d\choose l-1}^2$.  
\end{lemma}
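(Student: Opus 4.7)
The plan is to follow the strategy of Lemma~\ref{lem:noninterirreps} using the Young-diagram calculus, but adapted to the antisymmetric setting. The $\mathrm{SU}(d)$-representation $\wedge^n$ corresponds to a single column of $n$ boxes. Applying the same conjugation rule that turned the one-row diagram of $\mathrm{Sym}^n$ into $d-1$ rows of $n$ boxes in the bosonic proof, the conjugate $\overline{\wedge^n}$ corresponds to a single column of $d-n$ boxes; equivalently, $\overline{\wedge^n} \cong \wedge^{d-n}$ as $\mathrm{SU}(d)$-representations.

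I would then apply the Pieri rule to decompose $\wedge^n \otimes \wedge^{d-n}$: adding $d-n$ boxes to a column of $n$ with no two additions landing in the same row produces exactly the Young diagrams $\lambda(k) = (2^k, 1^{d-2k})$ for $k \in \{0, 1, \ldots, \min(n, d-n)\}$, hence $\min(n, d-n) + 1$ summands. The $k=0$ diagram is the single column of $d$ boxes, which is the trivial $\mathrm{SU}(d)$-representation; the remaining diagrams are non-trivial and pairwise inequivalent, so I identify them with $\sigma^f_{n,l}$ setting $l = k$.

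For the dimensions, I would apply the Weyl dimension formula (or the hook-content formula) to $\lambda(l) = (2^l, 1^{d-2l})$ and simplify to $\dim W^f_{n,l} = \binom{d}{l}^2 - \binom{d}{l-1}^2$. As a consistency cross-check, these values telescope to $\sum_{l=0}^{\min(n, d-n)} \bigl[\binom{d}{l}^2 - \binom{d}{l-1}^2\bigr] = \binom{d}{\min(n, d-n)}^2 = \binom{d}{n}^2 = \dim L(\wedge^n(\mathbb C^d))$, matching the full Hilbert-space dimension.

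The main obstacle I expect is the algebraic simplification from the Weyl formula into the compact closed form $\binom{d}{l}^2 - \binom{d}{l-1}^2$, since the hook-content product does not collapse in a single step. If this simplification becomes unwieldy, an alternative is to adapt Theorem~\ref{theorem:explicitdescriptionofirreps} to the antisymmetric setting by defining nested subspaces $L^f_r \coloneqq \{P_{\wedge^n}(\mathbbm{1}_d^{\otimes n-r} \otimes M) P_{\wedge^n} : M \in L(\wedge^r(\mathbb C^d))\}$ and running an analogous inductive dimension-counting argument; the telescoping structure then pins down both the irrep count and their dimensions without invoking the Weyl formula explicitly.
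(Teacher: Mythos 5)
Your proposal is correct and follows essentially the same route as the paper: identifying $\overline{\wedge^n}\cong\wedge^{d-n}$ with a single column of $d-n$ boxes and decomposing the tensor product via Young-diagram calculus (your explicit invocation of the dual Pieri rule is just the named version of the graphical rule the paper applies), arriving at the diagrams $(2^l,1^{d-2l})$ for $l\in\{0,\ldots,\min\{n,d-n\}\}$ and the stated dimension formula. The telescoping consistency check and the remark about the alternative ``onion'' argument are nice additions but do not change the substance.
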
	

\begin{proof}
	The Young diagram for the irreducible representation $\wedge^n$ is 
	{\scriptsize\begin{equation*}
	\left.\begin{ytableau}
	\\
	\\
	\vdots \\
	\\
	\\
	\end{ytableau}~\right\}~n.
	\end{equation*}}
	With the same rules for conjugates and tensor products of irreducible representations \cite{georgi_lie_1999}, we obtain for $\wedge^n \otimes \overline{\wedge^n}$:
	{\scriptsize\begin{equation}
	\left.\begin{ytableau}
	\\
	\\
	\vdots \\
	\\
	\\
	\end{ytableau}\right\}~n\quad \bigotimes \quad \left.
	\begin{ytableau}
	\\
	\vdots \\
	\\
	\end{ytableau}\right\}~d-n\quad =\quad 
	\left.\begin{ytableau}
	\\
	\\
	\vdots \\
	\\
	\\
	\end{ytableau}\right\} d\quad \bigoplus \quad
	\left.\begin{ytableau}
	~&\\
	~&\none\\
	\vdots&\none \\
	~&\none\\
	~&\none
	\end{ytableau}\right\}~d-1
	\quad \bigoplus\quad\dots \quad\bigoplus
	\left.\begin{ytableau}
	~&\\
	~&\vdots\\
	\vdots& \\
	~&\none\\
	~&\none
	\end{ytableau}\right\}~n.
	\end{equation}}
	This corresponds to $d-n+1$ many mutually non-isomorphic irreducible representations in the case that $n\geq d-n$. 
	If $n\leq d-n$, we can perform the same calculation with the tensor factors on the left side swapped. 
	This yields $n+1$ irreducible representations.
	In summary, we always have $\min\{d-n,n\}+1$ many irreducible representations.
	Again, one irrep always is the trivial representation.
    In the following, $l$ takes the values in $\{0,\ldots,n\}$ or $\{0,\ldots,d-n\}$ depending on whether $d-n\geq n$ or $d-n\leq n$.
	We denote the $l$-th irrep appearing in the decomposition above by $\sigma^f_{n,l}$ acting on the space $W^f_{n,l}$.
	We find the following formula for the dimensions of the corresponding irreducible representations:
	\begin{equation}
	\dim W^f_{n,l}:=\frac{d-2l+1}{d-l+1}{d\choose l}{d+1\choose l}={d\choose l}^2-{d\choose l-1}^2.
	\end{equation}
\end{proof}

Just as in the bosonic case, we can understand the irreducible representations  as the orthogonal complements or ``onion rings'' between embeddings of the linear maps acting on $r\leq n$ particles:

\begin{theorem}
    Let $P_{\wedge^n}$ be the projector onto $\wedge^n(\mathbb C^d)$.
    Consider the space of linear maps acting on $r<n$ particles defined as 
    \begin{equation}
     L^f_r\coloneqq \{P_{\wedge^n} (\mathbbm{1}_{d}^{\otimes n-r}\otimes M)P_{\wedge^n} , M\in L(\wedge^r(\mathbb{C}^d))\}.
    \end{equation}
    Then, $L^f_r$ is isomorphic as a subrepresentation to the space $L(\wedge^{r}(\mathbb{C}^d))$ equipped with the action of $\wedge^{r}$ by conjugation.
Moreover, for $d-n\geq n$ we find the projector onto $W^f_{n,l}$ to be 
\begin{equation}
    P^f_{n,l}\coloneqq P_{L^f_l}-P_{L^f_{l-1}}.
\end{equation}
For $n\geq d-n$, we find the reversed embedding: $L_r^f\subseteq L_{r-1}^f$ and therefore
\begin{equation}
    P_{n,l}=P_{L_{l-1}^f}-P_{L_{l}^f}.
\end{equation}
\end{theorem}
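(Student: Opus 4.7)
The plan is to mirror the bosonic argument of Theorem~\ref{theorem:explicitdescriptionofirreps}, replacing the symmetrizer $P_{\mathrm{Sym}^n}$ by the antisymmetrizer $P_{\wedge^n}$ throughout, and then accounting for the fact that $\binom{d}{r}$ is not monotone in $r$, which is what splits the statement into the two cases.

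First I would check that the map $i\colon L(\wedge^r(\mathbb C^d)) \to L^f_r$, $M \mapsto P_{\wedge^n}(\mathbbm{1}_d^{\otimes n-r} \otimes M)P_{\wedge^n}$, is an intertwiner for the $\mathrm{SU}(d)$ action by conjugation. Since $[U^{\otimes n},P_{\wedge^n}]=0$, the same algebraic manipulation as in the bosonic proof yields $i(U^{\otimes r} M (U^\dagger)^{\otimes r}) = U^{\otimes n} i(M) (U^\dagger)^{\otimes n}$, so $L^f_r$ is a subrepresentation of $L(\wedge^n(\mathbb C^d))$ isomorphic to a quotient of $L(\wedge^r(\mathbb C^d))$.

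Next, in the regime $d-n\ge n$ I would show that the chain $L^f_0 \subseteq L^f_1 \subseteq \cdots \subseteq L^f_n$ is strict and that successive orthogonal complements saturate the irrep decomposition from the preceding lemma. Strictness is witnessed by a Slater-determinant transition operator of the form $P_{\wedge^n}(\mathbbm{1}_d^{\otimes n-r} \otimes \lvert e_1 \wedge \cdots \wedge e_r \rangle \langle e_{r+1} \wedge \cdots \wedge e_{2r} \rvert)P_{\wedge^n}$, which is well-defined because $d\ge 2r$ in this regime and which changes the occupation of mode $1$ by exactly $r$, forcing orthogonality to $L^f_{r-1}$. Combined with the trivial bound $\dim L^f_r \le \binom{d}{r}^2$, this gives $\dim(L^f_r \ominus L^f_{r-1}) \ge \binom{d}{r}^2 - \binom{d}{r-1}^2 = \dim W^f_{n,r}$. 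Since each such complement must decompose as a sum of the $W^f_{n,l}$'s and $L^f_0$ carries the trivial irrep $W^f_{n,0}$, an inductive matching of the strictly increasing irrep dimensions forces $L^f_r \ominus L^f_{r-1} = W^f_{n,r}$ and hence $P^f_{n,l} = P_{L^f_l} - P_{L^f_{l-1}}$.

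For the case $n \ge d-n$ I would invoke particle-hole duality: mapping a Slater determinant to the indicator of its complementary subset of modes is an $\mathrm{SU}(d)$-equivariant isomorphism $\wedge^n(\mathbb C^d) \cong \overline{\wedge^{d-n}(\mathbb C^d)}$, under which the embedding factor $\mathbbm{1}_d^{\otimes n-r}$ is exchanged with its ``hole'' counterpart of tensor weight $(d-n)-(n-r) = d-2n+r$. This reverses the inclusion direction of the chain to $L^f_n \subseteq L^f_{n-1} \subseteq \cdots \subseteq L^f_{d-n}$, whereupon the previous argument applied in the hole picture yields $P^f_{n,l} = P_{L^f_{l-1}} - P_{L^f_l}$. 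The main obstacle I anticipate is making this re-indexing precise---verifying that the layer index $l$ coming out of the Young-diagram decomposition matches the $l$-th onion layer counted from the outside in the hole picture---but this is forced by the symmetric dimension formula $\dim W^f_{n,l} = \binom{d}{l}^2 - \binom{d}{l-1}^2$, which is invariant under $n \leftrightarrow d-n$ and therefore automatically aligns the two sides.
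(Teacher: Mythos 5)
Your proposal follows essentially the same route as the paper's proof: the intertwiner argument carried over verbatim from the bosonic Theorem~\ref{theorem:explicitdescriptionofirreps}, the Slater-determinant transition operator $P_{\wedge^n}(\mathbbm{1}_d^{\otimes n-r}\otimes|1,\ldots,r\rangle\langle r+1,\ldots,2r|)P_{\wedge^n}$ as the witness that $L^f_r$ strictly contains $L^f_{r-1}$, dimension counting against $\binom{d}{r}^2$, and exchanging the roles of $n$ and $d-n$ (particle--hole duality) in the reversed regime. The only, inconsequential, slip is the phrase ``changes the occupation of mode $1$ by exactly $r$,'' which is the bosonic mechanism; fermionically the correct invariant is that the witness has $r$ modes occupied in the ket that are unoccupied in the bra, whereas any element of $L^f_{r-1}$ has at most $r-1$ such modes.
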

\begin{proof}
    The proof follows along the same lines as the proof of Theorem~\ref{theorem:explicitdescriptionofirreps}.
    The only difference is that we need to find a new example to show that $L^f_r$ properly contains $L^f_{r-1}$ in the case $d-n\geq n$.
    In this regime, we can choose the matrix 
\begin{equation}\label{eq:fermionicexample}
        P_{\wedge^n} \mathbbm{1}^{\otimes n-r}_d\otimes |1,2,\ldots,r\rangle\langle r+1,\ldots, 2r|P_{\wedge^n}=\sum_{j_1,\ldots,j_{n-r}}^m P_{\wedge^n}\bigotimes_{i=1}^{n-r} |j_i\rangle\otimes|1,2,\ldots,r\rangle \langle j_i|\otimes \langle r+1,\ldots, 2r|  P_{\wedge^n} ,
    \end{equation}
    for $r\leq n\leq d/2$ and where $|1,2,\ldots,r\rangle\in (\mathbb{C}^d)^{\otimes r}$ labels a tensor-product basis element. 
    We find that the decomposition of any matrix in $L_{r-1}$ into states $|x\rangle\langle y|$, with $|x\rangle,|y\rangle$ Fock states, contains only those $|x\rangle$ and $|y\rangle$ such that at most $r-1$ modes are occupied in $|x\rangle$ that are not also occupied in $|y\rangle$.
    Therefore, $L_{r-1}$ does not contain $L_r$ and we can repeat the argument from the proof of Theorem~\ref{theorem:explicitdescriptionofirreps}.
    This example breaks down for $m-n<n$ as the matrix in Eq.~\eqref{eq:fermionicexample} becomes zero.
We can assume that $n\leq d$ and reverse the roles of $n$ and $d-n$ for $d-n<n$.
\end{proof}

\section{Analog benchmarking via filtering}\label{app:filterRB}
Our randomized analog benchmarking protocol can be seen as an instance of the abstract concept of \emph{filtered randomized benchmarking (RB)} \cite{HelsenEtAl:2020:GeneralFramework,heinrich2023randomized,HelsenNewEfficientRB} that, e.g., also encompasses linear cross-entropy benchmarking~\cite{boixo_characterizing_2016} or matchgate~\cite{MatchgateBenchmarking} benchmarking.
As for any RB protocol~\cite{EmeAliZyc05, KnillBenchmarking,LevLopEme07, DanCleEme09,HelsenEtAl:2020:GeneralFramework}, the general approach is to apply random sequences of operations in an experiment and fit simple exponential decays to the obtained data.  
The decay parameters are then used as a figure of merit for the quality of the implementation of the employed operations. 
Under further assumptions, the decay parameters can be related to effective depolarizing strength and average fidelities of the noise channels, see e.g.\ \cite{KlieschRoth:2020:Tutorial}.  
Then, the protocols yield a theoretically interpretable characterization of the noise going beyond the mere benchmarking of the device's functioning. 

The central concept of filtered RB is to construct a scalar estimator using `filter functions' that isolate specific decays associated to individual irreps of the group generated by the gate-set. 
Therefore, filtered RB is particularly attractive and often a necessity for designing RB protocols for `smaller' sub-groups of the unitary group, where the action as a quantum channel decomposes into more than one non-trivial irrep. 
Without filtering, the resulting RB signal will be a linear combination of multiple decays and performing a fit becomes challenging.

Another crucial feature of filtered RB is that one does not need to apply an inversion gate at the end of the random sequence, as is custom in standard RB. 
Instead, the `inversion is done in the post-processing'. For this reason, filtered RB protocols generalize naturally to gate sequences that are drawn from a measure other than the uniform measure of a group.  
The general protocol and general theoretical guarantees for such filtered RB protocols that use random circuits of gate sets have been developed in Ref.~\cite{heinrich2023randomized}. 
The blue-print laid out in Ref.~\cite{heinrich2023randomized} facilitates taking a bottom-up approach to designing RB protocols: 
Start from native device operations, work out the corresponding filtered RB scheme and understand the parameter regime in which the protocol is guaranteed to function. 

As we argue in this work, the flexibility of filtered RB schemes is essential for benchmarking non-universal analog quantum devices.  
In the following sections, we develop our analog benchmarking protocol for interacting and non-interacting particle number preserving dynamics following this blue-print, derive explicit expressions for the implementation and work out concrete theoretical guarantees. 
We then evaluate these guarantees for specific ensembles of bosonic Hamiltonians that might arise in a device implementation. 

In this section, we provide a detailed derivation of the randomized analog benchmarking protocol.
The result of this section is summarized in Table~\ref{table:summary-protocol-ingredients} in the main text.

We begin by restating the device setting and the randomized analog benchmarking (RAB) protocol.
We consider a dynamic analog quantum simulator that is (e.g.) designed to apply quenches, where the system evolves under a quartic Hamiltonian of the form Eq.~\eqref{eq:quartic hamiltonian} for a constant time. 

These native operations might further be restricted by the device design and physics to a 
certain interaction graph, e.g.\ nearest-neighbour hopping and only local interactions as 
\begin{align}
	H(h,V) = \sum_{\substack{i,j=1 \\ i=j \text{ or } i=j\pm 1}}^d h_{i,j} a_i^\dagger a_j + \sum_{i=1}^d V_{i,i,i,i} a_i^\dagger a_i^\dagger a_i a_i + h.c\,. 
\end{align}
For the purpose of benchmarking, we define a random ensemble of quenches according to the device's specifications.
For example, in terms of simple uniform probability distributions of the hopping and interaction parameters 
\begin{align}
	h_{i,i} \sim [-a,a], & & h_{i,i\pm 1}=[-b,b], & & V_i = [-c,c],
    \label{eq:band-diagonal-hamiltonian}
\end{align}
where $x \sim [a, b]$ indicates that $x$ is drawn uniformly at random from the interval $[a,b] \subset \mathbb R$. 
For benchmarking non-interacting dynamics we set  $V_{i}=0$.
This defines a measure $\nu$ on the Hermitian matrices or fixing a time $\Delta t$ on the unitaries acting on the entire Fock-space via $U = \mathrm e^{i \Delta t H}$ with $H \sim \nu$.  
Slightly overloading notation, we refer to the measures on the Hamiltonian parameters, the Hermitian matrices, and the unitaries simply as a random ensemble of quenches $\nu$. 
It will be clear from the context which one we are referring to. 
This measure will be our working example throughout this work.  
It is however important to stress that it is straightforward to adapt the measure to the experimental platform at hand. 
The ideal dynamics that we are benchmarking is particle-number preserving.  This motivates to benchmark the device for each fixed number $n$ of excitations separately.
In particular, we assume that the device is capable of performing measurements in the Fock basis targeting a certain fixed number $n$ of excitations. 
In principle, it is also possible to probe irreps with different $n$ simultaneously using different measurements.

\renewcommand{\algorithmicensure}{\textbf{Return:}}
\begin{algorithm}[H]
\caption{\label{alg:rabb} Random analog benchmarking (RAB)}
\begin{algorithmic}[1]
\Require Measure $\nu$, set of sequence lengths $\mathcal M$ ($m \geq m_\text{\rm ths}$ for all $m \in \mathcal M$), 
    initial state $|\psi_0\rangle\in \Sym^n(\mathbb C^d)$, number of iterations $K$, $N$
    \Statex{\textbf{Quantum experiment:}}
    \For{$m \in \mathcal{M}$} $K$ iterations \textbf{of}
    \State Prepare initial quantum state $\rho_0 = |\psi_0\rangle\langle\psi_0|$
    \State Draw random Hamiltonians $H_\gamma\sim \nu$, $\gamma \in \{1, \ldots, m\}$.
    \For{$N$ iterations}
    \State Implement time evolution  $U = e^{-i\Delta tH_m}\cdots e^{-i\Delta t H_1}$
    \State Measure in Fock basis $\{E_x\}_{x=1}^{d_F}$
    \EndFor
    \State Estimate relative frequencies of measurement outcomes $\hat p=(\hat p_1,\hat p_2,\ldots,\hat p_{d_F})$.
    \State Store $(m, U, \hat p)$. 
    \EndFor
    \Statex{\textbf{Classical postprocessing for irrep $\lambda$:}}
    \State For every data point $(m, U, \hat p)$ compute multi-shot $\lambda$-filter function 
    \begin{equation}
        \hat{F}_{\lambda}(m, U) = \frac{1}{n_{\lambda}} \sum\limits_{x=1}^{d_F} f_{\lambda}(x, U) \hat p_x
    \end{equation}
    \State Compute sequence average $\overline{F}_{\lambda}(m) = \frac{1}{K}\sum\limits_{k=1}^K  \hat{F}_{\lambda}(m, U_k)$ over all $U_k$ with the same $m$ for every $m\in\mathcal M$.
    \State Fit an exponential decay
     $$ \overline{F}_{\lambda}(m) =_{\text{fit}} A_{\lambda} z_{\lambda}^m. $$
    \Ensure Decay parameter $z_\lambda$.
\end{algorithmic}
\end{algorithm}

The random measure $\nu$ is the only non-trivial ingredient for the data acquisition stage of the RAB protocol.  
We state the complete protocol consisting of an experiment on the quantum device for the data acquisition and a post-processing stage using a classical computer as Protocol~\ref{alg:rabb}\,.
The experimental data is post-processed weighting the relative outcome frequencies with a filter function $f_{\lambda}$ depending on which irreps of the underlying group are investigated. The post-processing requires a specification of the filter function $f_\lambda$, the normalization factor $n_\lambda$ and the minimal admissible sequence length $m_\text{\text{ths}}$. 

\paragraph*{Irrep labels $\lambda$.} 
We consider two random ensembles generating different groups:
(i) Interacting system with hopping and interaction terms (ii) Non-interacting system with only hopping (quadratic) terms. 
As derived in App.~\ref{sec:representation-theory-for-particle-number-preserving-dynamics},
the irreducible representations of (the group generated by) interacting Hamiltonians are labelled with their particle number $\lambda = n$ with $n\in{}\mathbb N^+$.
The irreps of (the group generated by) non-interacting Hamiltonians are labelled with a tuple as $\lambda = (n,l)$ with $l\in \{1, \ldots, n\}$ and $n\in\mathbb N^+$.

\paragraph*{Filter functions $f_\lambda$.}\label{par:filter-functions} 
For each irrep, we define the filter function that associates a scalar value to a unitary and a measurement outcome. 
The filter function is calculated as the action of the ideal sequence within a specific irrep. 
Denote the projectors of the $n$-excitation Fock basis by $\{E_x = |x \rangle \langle x |\}_{x=1}^{d_{F}}$ and the dephasing channel associated to the measurement by
\begin{align}
    M = \sum_{x=1}^{d_F}|E_x)(E_x| \, .
    \label{eq:dephasing-channel}
\end{align}
Ref.~\cite{heinrich2023randomized} defines the following general expression for the filter function of a multiplicity-free irrep  $\lambda$ and an ideal initial state $\rho_0$:
\begin{align} \label{eq:filter-function-appendix}
		f_\lambda(x,U) \coloneqq s_\lambda^{-1}(P_\lambda(\rho_0), U^\dagger E_x U)\,, 
\end{align}
where $s_\lambda =  \tr(P_\lambda M)/\tr(P_\lambda)$ and $P_\lambda$ the projector onto the irrep $\lambda$.

Using the expressions for the dimension of $\lambda$ and $P_\lambda$ derived in App.~\ref{sec:representation-theory-for-particle-number-preserving-dynamics}, we can calculate the filter functions for our settings:

(i) For the ensemble of random quenches with interactions, by Lemma~\ref{lem:interreps} the prefactor becomes
$s_n = (d_F^2 - 1)^{-1}(d_F - 1) = (d_F+1)^{-1}$
and using Eq.~\eqref{eq:projector-interactive} the filter function is given by 
\begin{align}
\begin{split}
    f_n(x, U)
    &=(d_{F} +1) \langle \psi_0 \otimes \bar \psi_0 | U\otimes \bar U  P_n | x\otimes x\rangle, \\
    &= (d_{F} +1)\left( |\langle x |U|\psi_0\rangle|^2 -\frac{1}{d_{F}}\sum_{j,k=1}^{d_{F}}\langle \psi_0\otimes\bar\psi_0|U\otimes \bar U|j\otimes j \rangle\langle k\otimes k |x\otimes x\rangle \right), \\
    &=(d_{F} + 1) \left( |\langle x|U|\psi_0 \rangle|^2 - \frac{1}{d_F}\sum_{k=1}^{d_{F}}|\langle k|U |\psi_0 \rangle|^2 \right), \\
    &=(d_F + 1) \left( |\langle x|U|\psi_0 \rangle|^2 - \frac{1}{d_{F}} \tr_n(\rho_0) \right)\,,
    \label{eq:flambda-interacting}
    \end{split}
\end{align}

where we denote by $\tr_n(\rho_0) = \sum_{j =1}^{d_F} \langle j | \rho_0 | j \rangle$ the trace restricted to the irrep $\lambda = n$ and use its invariance under conjugation by $U$. 
Note the (non-accidental) striking resemblance of the filter function with a linear cross-entropy estimator.

(ii) For benchmarking non-interacting quenches, the filter function onto  $\lambda=(n,l)$ can be expressed using Eq.~\eqref{eq:projector-passive-gaussian} for the projector in terms of Clebsch-Gordon coefficients.
Then the filter function Eq.~\eqref{eq:filter-function-appendix} can be simplified as
\begin{align}
\begin{split}
    f_{n,l}(x, U)
    &= \frac{1}{s_{n,l}}\langle\psi_0\otimes\psi_0|U\otimes\bar U P_{n,l} |x\otimes x\rangle\,, \\
    &=  \frac{1}{s_{n,l}}\sum_{a''_\lambda} \sum_{a,a',b, b'} C^{a''_\lambda}_{a,a'} C^{a''_\lambda}_{b,b'}
       \langle \psi_0 \otimes \bar \psi_0 | U\otimes \bar U |a \otimes a'\rangle\langle b \otimes b'| x\otimes x\rangle\,, \\
    &=  \frac{1}{s_{n,l}}
       \langle \psi_0 \otimes \bar \psi_0 | U\otimes \bar U\sum_{a''_\lambda} C^{a''_\lambda}_{x,\bar x}\sum_{a,a'}C^{a''_\lambda}_{a,a'} |a \otimes a'\rangle\,, \\
    &=\frac{1}{s_{n,l}} (\rho_0 |\omega_n(U)  | \alpha_\lambda(x)),
       \label{eq:flambda-noninteracting}
       \end{split}
\end{align}
where we defined $ \sum_{a''_\lambda} C^{a''_\lambda}_{x,\bar x}\sum_{a,a'}C^{a''_\lambda}_{a,a'} |a \otimes a'\rangle = |\alpha_\lambda(x))$.

The normalization factor $s_\lambda$ for $\lambda=(n,l)$  is calculated to be 
\begin{align}
    s_{n,l} = \frac{\tr( P_\lambda M)}{\tr P_\lambda}
    =\tfrac1{\tr P_\lambda}\!\!\!\!\!\sum_{a_\lambda'',a,a',b,b',x}\!\!C^{a_\lambda''}_{a, a'}C^{a_\lambda''}_{g, g'} \langle x,\bar x)|a,a'\rangle\langle b,b'|x,\bar x)\rangle
    = \frac{d-1}{(d+2l-1){d+l-2\choose l}^2} \sum_{a_\lambda'',x} \left(C^{a_\lambda''}_{x, \bar x}\right)^2,
    \label{eq:slambda-noninteracting}
\end{align}
where we inserted the dimension of $W_{n,l}$ as derived in Lemma~\ref{lem:noninterirreps}.

\paragraph*{Normalization factor $n_\lambda$.}
We choose the normalization $n_\lambda$ such that the filtered RAB signal $\bar F_\lambda(m)$ is $1$ in expectation when the quantum device is working perfectly and the unitaries $U$ were drawn from the uniform (Haar) measure $\mu$ over the respective groups.

The Born probability for observing outcome $x$ in the state $U\rho_0U^\dagger= U |\psi_0\rangle\langle\psi_0|U^\dagger$ is $p_x = (E_x| \omega_n(U) |\rho_0)$ since $E_x, \rho_0 \in L(\Sym^n(\mathbb C^d))$. Thus, in the absence of noise for each $
U$ we have $\mathbb E[\hat p_x]= p_x$.
Thus,  
\begin{align}
	n_{\lambda} 
	= \mathop{\mathbb E \;}_{U\sim\mu} \left[\sum_{x=1}^{d_F} f_{\lambda}(x,U) p_x\right]
	= \tfrac1{s_{\lambda}} \mathop{\mathbb E \;}_{U\sim\mu} \sum_{x=1}^{d_F}(\rho_0|P_{\lambda}\omega^\dagger_n(U)|E_x)(E_x|\omega_n(U)|\rho_0)
	= \tfrac1{s_{\lambda}}\mathop{\mathbb E \;}_{U\sim\mu} (\rho_0|\mathcal \omega^\dagger_n(U) P_\lambda M \omega_n(U)|\rho_0).
 \label{eq:normalization-per-irrep}
\end{align}
Since $X \mapsto \mathop{\mathbb E}_{U\sim\mu} \omega^\dagger_n(U) X \omega_n(U)$ is the orthogonal projector onto the commutant of the image of $\omega_n$, we have according to Schur's lemma 
$\mathop{\mathbb E}_{U\sim\mu} \omega^\dagger_n(U) X \omega_n(U) = \sum_\lambda ( P_\lambda, X)P_\lambda/\tr(P_\lambda)$, where the sum is over the irreps in the decomposition of $\omega_n$.
Therefore, we find that
\begin{align}
    n_{\lambda}
    = \frac{1}{s_{\lambda}}(\rho_0|\sum_{\lambda'} \frac{\tr(P_{\lambda'}P_{\lambda}M)}{\tr(P_{\lambda'})} P_{\lambda'}|\rho_0)
    =  \frac{ \tr(P_{\lambda} M)} {s_\lambda \tr(P_\lambda)} (\rho_0| P_\lambda |\rho_0)
	= (\rho_0|P_{\lambda}|\rho_0).
    \label{eq:normalization}
\end{align}
For a pure initial state $|\psi_0\rangle$ 
we find for $\lambda=n$ that $n_n=(1 - 1/d_F)$ and  for $\lambda=(n,l)$ that
\begin{align*}
    n_{n,l} 
    =  \sum_{a_\lambda'',a,a',b,b'} C^{a_\lambda''}_{a, a'}C^{a_\lambda''}_{b, b'} \langle \psi_0,\bar\psi_0|a,a'\rangle\langle b,b'|\psi_0,\bar\psi_0\rangle
    = \sum_{a_\lambda'', x}\left(C_{x,\bar x}^{a_\lambda''}\right)^2 |\langle x|\psi_0 \rangle|^2\,.
\end{align*}

\paragraph*{Minimal sequence length $m_{\text{ths}}$.} One parameter defining the data acquisition step is the minimal admissible sequence length. 
This parameter depends on the `speed' at which $\nu$ converges to a sufficiently uniform measure as experienced within the representation $\lambda$ under consideration. 
We provide a careful analysis of our measure $\nu$ in Section~\ref{secapp:convergence-analysis}.

\section{Fitting model and signal guarantees}\label{Ap.:Fitting_model_and_signal_guarantees}

We now justify the underlying assumption of the protocol that the RB signal is well-described by a single real-valued exponential decay. 
The expected RB signal is defined as the expected value of the filter function $f_\lambda$ taken over sequences of random unitary quenches drawn from a measure $\nu$ and the relative measurement outcome frequencies $\hat p$,
\begin{equation}\label{eq:expected signal}
    F_{\lambda}(m) = \mathbb{E} [\hat{F}_{\lambda}(m)] = n_\lambda^{-1}\mathop{\mathbb E \;}_{U\sim\nu} \left[\sum_{x=1}^{d_F} f_{\lambda}(x,U) p_x\right]\,,
\end{equation}
where we here used that $\hat p$ in expectation is the Born-probability $p$.

We model the actual implementation of a unitary quench by a map $\phi$. 
When attempting to implement the unitary quench $U$, the actual process on the device will be 
a completely positive, trace-preserving map $\phi(U)$ acting on $L(\Sym^n(\mathbb C^d))$. 
Note that assuming the existence of such a map $\phi$ implicitly assumes that time-dependent, time-correlated or non-Markovian noise effects can be neglected---a common assumption for randomized benchmarking protocols.  
Furthermore, we assume that the implemented quenches are particle-number preserving.  
We will discuss at the end of the section the implications of relaxing this assumption in detail.

Denoting the noisy initial state by $\tilde\rho_0 \in L(\Sym^n(\mathbb C^d))$  and the noisy measurement effects by $\tilde E_x \in L(\Sym^n(\mathbb C^d))$ for $x=1,\ldots, d_f$, the Born probability of measuring outcome $x$ after applying $m$ many unitary quenches is $p_x = (\tilde E_x| \phi(U_m)\cdots\phi(U_1)|\tilde\rho_0)$.  As $\phi(U)$ acts on $L(\Sym^n(\mathbb C^d))$, we can assume w.l.o.g. that $\tilde \rho_0$ and $\tilde E_x$  are restricted to $n$th Fock sector. We can simply regard $\tilde \rho_0$ and $\tilde E_x$ as the projection of a general initial state and measurement effects acting on the entire Fock space.

In the following, we find a more concise expression for the expectation value of the filter functions.
Recall the definition of the filter function Eq.~\eqref{equ:single-filter-function-expression-main-text} from the main text and refer to Table~\ref{table:summary-protocol-ingredients} and Section~\ref{app:filterRB} of the appendix for the constants and notation used throughout the derivation.
By inserting the Born probability of the noisy implementation and writing out the filter function the executed expected value reads as 
\begin{align} 
   \mathbb{E} [\hat{F}_{\lambda}(m)] =  n_\lambda^{-1}\mathbb E\left[f_\lambda(x, U\right] 
    = n_\lambda^{-1} 
     \mathop{\mathbb E \;}_{U_1,\ldots, U_m\sim\nu} \sum_{x=1}^{d_F}
     s_\lambda^{-1}(\rho_0| P_\lambda
    \omega_n(U^\dagger) |E_x)(\tilde E_x| \phi(U_m)\cdots\phi(U_1) 
    |\tilde\rho_0).
    \label{eq:expectation-value-of-filter-function}
\end{align}

We denote the noisy implementation of the dephasing channel in the measurement basis $\tilde M = \sum_{x=1}^{d_F}|E_x)(\tilde E_x|$ and use the fact that $P_\lambda \omega_n(\,\cdot\,) = \sigma_\lambda(\,\cdot\,)$ (as introduced in App.~\ref{sec:representation-theory-for-particle-number-preserving-dynamics}), where we always understand all maps as being embedded as maps on $L(\Sym^n(\mathbb C^d))$.

Further, denoting the sequence of unitary quenches as $U = U_m\cdots U_{1}$ we arrive at the following expression:

\begin{align}
    \mathbb{E} [\hat{F}_{\lambda}(m)] =  \frac{1}{n_\lambda s_\lambda}(\rho_0|
     \mathop{\mathbb E \;}_{U_1,\ldots, U_m\sim\nu}
    \sigma_\lambda(U_1)^\dagger\cdots\sigma_\lambda(U_m)^\dagger \tilde M \phi(U_m)\cdots\phi(U_1) 
    |\tilde\rho_0).
\end{align}

The expression $\tilde T_\lambda: X \mapsto \mathbb E_{U\sim\nu}\sigma_\lambda(U)^\dagger X \phi(U)$ is the `noisy' channel twirl over the measure $\nu$ restricted to the irrep $\lambda$ written as $\tilde T_\lambda(\,\cdot\,)$.  
This twirl map is applied $m$ many times on the noisy dephasing channel $\tilde M$:
\begin{align}\label{eq:expected signal T}
     \mathbb{E} [\hat{F}_{\lambda}(m)] = \frac{1}{n_\lambda s_\lambda}(\rho_0| \tilde T_\lambda^m(\tilde M)|\tilde\rho_0) =   \frac{1}{n_\lambda s_\lambda}(Q| \tilde T_\lambda^m |\tilde M)\,,
\end{align}
where $Q = \rho_0\otimes\tilde\rho_0$. Note that this expression holds for gate-dependent noise and an arbitrary measure $\nu$.

It is instructive to look at some common special cases first.  When assuming gate-independent noise and $\nu$ being the Haar measure $\mu$ on the group generated by the quenches, we find the following result:

\begin{lemma}[Guarantee for Haar measures and gate-independent noise]\label{lem:decay_haar_random}
    Let $\tilde{\rho}_0$ and $\tilde M$ be the SPAM implementation. 
    Set $\tilde M =\sum_x |E_x)(\tilde E_x|$. 
    Assume gate-independent noise such that $\phi(U)= \Lambda_L \omega_n(U) \Lambda_R$ with $\Lambda_L, \Lambda_R$ endomorphisms on $L(\Sym^n(\mathbb C^d))$ and
     $\nu = \mu$ the Haar measure on $\operatorname{SU}(d)$ (or $\operatorname{SU}(d_F)$ when including interactions).
    The expected RB signal Eq.~\eqref{eq:expected signal} for $\lambda = (n,l)$ (or $\lambda = n$, respectively,) 
    decays exponentially in the sequence length $m$ as
\begin{align}
F_\lambda(m) = A_\lambda z_\lambda^m\, 
    && \text{where} &&
    A_\lambda=\frac{(\rho_0 |P_\lambda\Lambda_L|\tilde\rho_0)
        (\Lambda_RP_\lambda|\tilde M)}{n_\lambda s_\lambda\tr(P_\lambda\Lambda_L\Lambda_R)}
        && \text{and} &&
        z_\lambda=\frac{1}{\tr(P_\lambda)}
        \tr(P_\lambda\Lambda_L\Lambda_R)\,.
\end{align}

\end{lemma}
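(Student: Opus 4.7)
The plan is to exploit Schur's lemma to reduce $\tilde T_\lambda$ to a rank-one superoperator, whose powers then trivially produce an exponential decay with the stated eigenvalue and amplitude.

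Starting from Eq.~\eqref{eq:expected signal T} and substituting the gate-independent noise model, I factor the constant channels out of the twirl:
\begin{equation*}
\tilde T_\lambda(X)
= \mathbb{E}_{U\sim\mu}\!\left[\sigma_\lambda(U)^{\dagger}\,(X\Lambda_L)\,\omega_n(U)\right]\Lambda_R
= \mathcal T(X\Lambda_L)\,\Lambda_R,
\end{equation*}
where $\mathcal T(Y)\coloneqq \mathbb{E}_{U\sim\mu}[\sigma_\lambda(U)^{\dagger} Y \omega_n(U)]$. Using $\sigma_\lambda(U)=P_\lambda\omega_n(U)=\omega_n(U)P_\lambda$, I rewrite $\mathcal T(Y)=\mathbb{E}_U[\omega_n(U)^{\dagger}(P_\lambda Y)\omega_n(U)]$, which is the standard Haar twirl by $\omega_n$ applied to $P_\lambda Y$.

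By Lemmas~\ref{lem:interreps} and \ref{lem:noninterirreps}, $\omega_n$ decomposes into mutually \emph{inequivalent} irreducible components $\sigma_{\lambda'}$, so Schur's lemma collapses the twirl to a single isotypical term: $\mathcal T(Y)=\tr(P_\lambda Y)/\tr(P_\lambda)\cdot P_\lambda$. Substituting back exhibits the rank-one structure
\begin{equation*}
\tilde T_\lambda(X) = \frac{\tr(P_\lambda X \Lambda_L)}{\tr(P_\lambda)}\, P_\lambda\Lambda_R.
\end{equation*}
The image of $\tilde T_\lambda$ lies in the single line spanned by $P_\lambda\Lambda_R$, and inserting $X=P_\lambda\Lambda_R$ (using $P_\lambda^{2}=P_\lambda$ and cyclicity of the trace) shows that this spanning vector is itself the eigenvector with eigenvalue $z_\lambda=\tr(P_\lambda\Lambda_L\Lambda_R)/\tr(P_\lambda)$.

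A one-line induction then yields $\tilde T_\lambda^m(\tilde M)=z_\lambda^{m-1}\tilde T_\lambda(\tilde M)$ for every $m\geq 1$, and rescaling the prefactor by $z_\lambda^{-1}$ produces an explicit $z_\lambda^{m}$. Substituting this into Eq.~\eqref{eq:expected signal T} and recognising the remaining factors as the Hilbert--Schmidt pairings $(\rho_0|P_\lambda\Lambda_L|\tilde\rho_0)$ and $(\Lambda_R P_\lambda|\tilde M)$ completes the identification of $A_\lambda$. The calculation is mechanical once the rank-one structure is in place; the only real subtlety is that I need the \emph{mixed-irrep} Schur identity for $\mathbb{E}[\sigma_\lambda(U)^{\dagger}\cdot\omega_n(U)]$ rather than the standard pure twirl of $\omega_n$, and the ingredient that makes it collapse to a single term is precisely the multiplicity-freeness of the decomposition of $\omega_n$ established in the representation-theoretic lemmas---absent it, the right-hand side would be a sum over isotypical components, $\tilde T_\lambda$ would not be rank one, and the RB signal would fail to collapse to a single exponential.
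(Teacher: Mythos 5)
Your proposal is correct and follows essentially the same route as the paper: use Schur's lemma together with the multiplicity-freeness of the decomposition of $\omega_n$ to collapse the Haar twirl $\tilde T_\lambda$ to a rank-one superoperator, and then read off the exponential decay from its powers (the paper merely writes the same computation in the vectorized notation $\sigma_\lambda(U)^\dagger\otimes\phi(U)^T$ acting on $|X)$). One bookkeeping remark: carried through honestly, your formula $\tilde T_\lambda(X)=\tr(P_\lambda X\Lambda_L)\,P_\lambda\Lambda_R/\tr(P_\lambda)$ yields $z_\lambda=\tr(P_\lambda\Lambda_R\Lambda_L)/\tr(P_\lambda)$ and a prefactor containing $(\rho_0|P_\lambda\Lambda_R|\tilde\rho_0)$, i.e.\ the roles of $\Lambda_L$ and $\Lambda_R$ interchanged relative to the lemma's displayed constants --- this ordering is the physically correct one for the convention $\phi(U)=\Lambda_L\omega_n(U)\Lambda_R$ (the inter-quench noise is $\Lambda_R\Lambda_L$), and the discrepancy traces to a transposition in the paper's own factorization of $\phi(U)^T$, so your last step of ``recognising'' the paper's exact constants should not be taken literally.
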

\begin{proof}
    We begin by restating the expectation value expression 
    $\mathbb E\left[\hat F_\lambda(m)\right] = n_\lambda^{-1} s_\lambda^{-1}(Q| \tilde T_\lambda^m |\tilde M)$.
     For gate-independent noise and $\nu = \mu$, the noisy channel twirl restricted to the irrep $\lambda$ can be evaluated as
    \begin{align}
        \tilde T_\lambda
        = \mathbb{E}_{U\sim\mu}\sigma_\lambda(U)^\dagger\otimes\phi(U)^T
        = (\mathrm{id}_\lambda\otimes\Lambda_L^T){\mathbb E}_{U\sim\mu}\sigma_\lambda(U)^\dagger\otimes\omega^T_n(U)(\mathrm{id}_\lambda\otimes\Lambda_R^T) 
        = (\mathrm{id}_\lambda\otimes\Lambda_L^T) \frac{|P_\lambda)(P_\lambda|}{\tr(P_\lambda)} (\mathrm{id}_\lambda\otimes\Lambda_R^T),
    \end{align}
   where in the last step we used that by invariance of the Haar measure for $T_\lambda: X \mapsto \mathbb E_{U\sim \mu} \sigma^\dagger_\lambda(U) X \omega_n(U)$ we have
    $[\sigma_\lambda(V), T_\lambda] = 0$ for all $V \in \operatorname{SU(d)}$ for $\lambda = (n, l)$ (or $V \in \operatorname{SU}(d_F)$ for $\lambda = n$) and Schur's lemma. 

    Inserting this expression into Eq.~\eqref{eq:expected signal T} yields
    \begin{align}
        \mathbb E\left[\hat F_\lambda(m)\right]
        &= \frac{1}{n_\lambda s_\lambda}(Q| 
        \left((\mathrm{id}_\lambda\otimes\Lambda_L^T) \frac{|P_\lambda)(P_\lambda|}{\tr(P_\lambda)} (\mathrm{id}_\lambda\otimes\Lambda_R^T)\right)^m 
        |\tilde M), \label{eq:lemma6-derivation-insert-Tisprojecor}\\
        &= \frac{1}{n_\lambda s_\lambda\tr(P_\lambda)}
        (Q| (\mathrm{id}_\lambda\otimes\Lambda_L^T) |P_\lambda)
        \left((P_\lambda|(\mathrm{id}_\lambda\otimes\Lambda_R^T)(\mathrm{id}_\lambda\otimes\Lambda_L^T) |P_\lambda)\right)^{m-1}
        (P_\lambda|(\mathrm{id}_\lambda\otimes\Lambda_R^T)|\tilde M),\\
        &= \frac{1}{n_\lambda s_\lambda\tr(P_\lambda)}
        (Q|P_\lambda\Lambda_L)(\Lambda_RP_\lambda|\tilde M)
        \left(\frac{1}{\tr(P_\lambda)}(P_\lambda|P_\lambda\Lambda_L\Lambda_R)\right)^{m-1}, \\
        &=  \frac{1}{n_\lambda s_\lambda\tr(P_\lambda\Lambda_L\Lambda_R)}
        (\rho_0 |P_\lambda\Lambda_L|\tilde\rho_0)
        (\Lambda_RP_\lambda|\tilde M)
        \left(\frac{1}{\tr(P_\lambda)}
        \tr(P_\lambda\Lambda_L\Lambda_R)\right)^m.
    \end{align}
     Identifying $z_\lambda$ and $A_\lambda$ as in the lemma's statement completes the proof.
\end{proof}

The lemma is a special case of the more general statements derived in Ref.~\cite{HelsenEtAl:2020:GeneralFramework} for uniform randomized benchmarking with arbitrary groups. 

Having limited ranges of Hamiltonian parameters and restricted connectivity, an analog simulator will typically not be able to natively implement Haar random quenches.  
Adapting a more digital computing mindset one could attempt to compile Haar random quenches out of the hardware-native set of quenches.  
For example, a product of quenches with ${d \choose 2}$ non-interacting particle-number preserving Hamiltonians, randomly chosen such that they apply uniform quenches between neighbouring modes on a one-dimensional chain, is uniformly distributed on $\operatorname{SU}(d)$ \cite{DiscreteApproximation}.  This is a consequence of the universality of band-diagonal Hamiltonians following from Hurwitz' Lemma \cite{Hurwitz}. 

For benchmarking analog devices, a considerably more flexible approach is to directly construct benchmarking sequences using a measure $\nu$ on native quenches.   
Using a native measure $\nu$ instead of the Haar measure, the twirl $T_\lambda$ does not simplify to an outer product of the projectors onto the irrep $\lambda$.
This complicates the derivation above.

Before stating a general guarantee for gate-dependent noise and non-uniform $\nu$, we consider the gate-independent depolarising channel.
Let $\mathcal D_p: X \mapsto p X + (1- p) \tr[X] \operatorname{Id} / d_F$ denote the depolarizing channel on $L(\Sym^d(\mathbb C^d))$, with $p$ as depolarising strength, applied gate $\phi(U) = \mathcal D_p\circ\omega_n(U)$.
To simplify the argument, we further assume perfect SPAM and arrive at the following statement.

\begin{lemma}[RAB signal with depolarising noise]\label{lem:rab-signal-under-depolarising-noise}
    Assume global depolarising noise with strength $p$,
    then the RAB signal is of the form
    $p^m(1 + \alpha(m))$ with $\alpha(m)$ independent of $p$ and $\alpha(m) \to 0$ for $m \to \infty$.
\end{lemma}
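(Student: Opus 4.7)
My plan is to exploit the two-term structure of the depolarising channel to collapse the $m$-fold noisy twirl to a single surviving summand that factorises as $p^m$ times a $p$-independent prefactor. Writing $V \coloneqq L(\Sym^n(\mathbb{C}^d))$, I would decompose $\mathcal D_p = p\,\mathrm{id}_V + (1-p)\, P_0$, where $P_0 \coloneqq |\id)(\id|/d_F$ is the superoperator projector onto the trivial irrep (the identity direction in $V$). Since $\omega_n(U)\,|\id) = |\id)$, this yields $\phi(U) = p\,\omega_n(U) + (1-p)\,P_0$, and consequently
\begin{equation*}
\tilde T_\lambda \;=\; p\,\hat T_\lambda \;+\; (1-p)\,\hat S_\lambda,
\end{equation*}
with $\hat T_\lambda(X) \coloneqq \mathbb{E}_U\,\sigma_\lambda(U)^\dagger X\,\omega_n(U)$ the noiseless twirl and $\hat S_\lambda(X) \coloneqq S_\lambda\, X\, P_0$, where $S_\lambda \coloneqq \mathbb{E}_U \sigma_\lambda(U)^\dagger$.

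The key structural observation is an ``absorption'' property: from $P_0\,\omega_n(U) = P_0$ and $P_0^2 = P_0$, both $\hat T_\lambda$ and $\hat S_\lambda$ act identically on any superoperator of the form $Y P_0$, sending it to $S_\lambda Y P_0$. I would expand $\tilde T_\lambda^m(M)$ as a sum over the $2^m$ ordered products in $\{\hat T_\lambda,\hat S_\lambda\}$ and group terms by the position $j$ of the innermost (first-applied) $\hat S_\lambda$-factor, with $j = \infty$ for the all-$\hat T_\lambda$ string. By absorption, once the innermost $\hat S_\lambda$ has been applied the remaining factors act identically, so the free binary choices at positions $j+1,\ldots,m$ sum geometrically to $(p+(1-p))^{m-j}=1$, leaving the clean decomposition
\begin{equation*}
\tilde T_\lambda^m(M) \;=\; p^m\,\hat T_\lambda^m(M) \;+\; \sum_{j=1}^m p^{\,j-1}(1-p)\,S_\lambda^{\,m-j+1}\,\hat T_\lambda^{\,j-1}(M)\,P_0\,.
\end{equation*}

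To conclude, I would evaluate $(\rho_0|\,\cdot\,|\rho_0)$ on both sides and show that every summand in the $j$-sum vanishes for non-trivial $\lambda$: using $P_0(\rho_0) = \id/d_F$, the identity $M(\id) = \id$, and the recursion $\hat T_\lambda(X)(\id) = S_\lambda(X(\id))$, each such summand is proportional to $S_\lambda^m(\id) = P_\lambda(\id) = 0$. Only the $p^m$ piece survives, giving $F_\lambda(m) = p^m g(m)$ with $g(m) \coloneqq (n_\lambda s_\lambda)^{-1}(\rho_0|\hat T_\lambda^m(M)|\rho_0)$ manifestly independent of $p$. Since $\hat T_\lambda$ is precisely the noiseless twirl and $\nu$ is assumed to mix with a positive spectral gap, the iterate $\hat T_\lambda^m$ converges to the Haar limit $|P_\lambda)(P_\lambda|/\tr(P_\lambda)$; inserting this into the noise-free specialization ($\Lambda_L = \Lambda_R = \mathrm{id}$) of Lemma~\ref{lem:decay_haar_random} yields $g(m) \to 1$, so setting $\alpha(m) \coloneqq g(m) - 1$ completes the argument. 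I expect the main hurdle to be the bookkeeping of the combinatorial expansion of $\tilde T_\lambda^m$, which the absorption property makes tractable; the remaining steps reduce to algebraic identities and representation-theoretic facts already established earlier in the paper.
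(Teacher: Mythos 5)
Your proof is correct and follows essentially the same route as the paper's: you split $\mathcal D_p$ into $p\,\mathrm{id}+(1-p)P_0$, annihilate every term that produces the maximally mixed state via $P_\lambda(\id)=0$ (together with $\id/d_F$ being a fixed point of $M$ and $\omega_n(U)$), and then let the spectral gap of the noiseless twirl drive the prefactor to $1$. The only difference is cosmetic: the paper disposes of all cross terms in one line, whereas you organize the $2^m$-term expansion through the absorption property --- a more explicit but equivalent bookkeeping.
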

\begin{proof}

Under our noise assumption, we have
\begin{align}\label{eq:depolnoisecalc1}
   \sum_{x=1}^{d_F} f_\lambda(x,U) p_x
    = s_\lambda^{-1}(\rho_0| P_\lambda \omega_n(U^\dagger) \tilde M \mathcal D_p\omega_n(U_m)\cdots\mathcal D_p\omega_n(U_1) |\tilde\rho_0).
\end{align}
Note that the projector obeys $P_\lambda(\mathbb I) = 0$ for $\lambda = (n, l)$ and $\lambda = n$ and that $\mathbb I/d_F$ is a fixed point of $M$ and $\omega_n(U)$ for all $U$.  
Thus, inserting the definition of the depolarizing channel in Eq.~\eqref{eq:depolnoisecalc1} all terms involving a maximally mixed state $\mathbb I/d_F$ vanish. 
Evaluating the expression gives only a single summand for depolarizing noise
\begin{align}
     s_\lambda\sum_{x=1}^{d_F} f_\lambda(x,U) p_x
    = p^m (\rho_0| P_\lambda \omega_n(U^\dagger) M\omega_n(U)|\rho_0)\,.
\end{align}
For a (gapped) measure $\nu$, we can write its channel twirl $T_\lambda: X \mapsto \mathbb E_{U \sim \nu} \sigma_\lambda(U) X \omega_n(U)$ as $T_\lambda= |P_\lambda)(P_\lambda|/\tr(P_\lambda) + \Lambda_\lambda$ with $(P_\lambda|\Lambda_\lambda = \Lambda_\lambda|P_\lambda) = 0$ and $\|\Lambda\|_\infty \leq \Delta_\lambda$ for some $0 \leq \Delta_\lambda < 1$. Here the first term corresponds to the twirl's expression for the Haar measure that we used e.g.\ in determining the normalization factor $n_\lambda$. 
Hence, 
\begin{equation}
    F_\lambda(m) = \tfrac1{n_\lambda s_\lambda} \mathbb E_{U\sim \nu} \sum_x f_\lambda(x, U)p_x = \frac{p^m}{n_\lambda s_\lambda} (Q|T_\lambda^m|M) = p^m + \frac{p^m}{n_\lambda s_\lambda} (Q | \Lambda_\lambda^m| M) = p^m ( 1 + \alpha(m))\,,
\end{equation}
where the last step defines the scalar function $\alpha$.  
Since $|\alpha(m)| \leq \text{const.}\ \Delta^m_\lambda$ with some (dimension-depend) constant in $m$ and $\Delta_\lambda < 1$, $F_\lambda(m) \approx p^m$ for sufficiently large $m \geq m_\text{\rm ths}$.  
Furthermore, for depolarizing noise this subdominant term $\alpha(m)$
is independent of the noise strength.
\end{proof}

For `simple' noise $m_\text{\rm ths}$ can, thus, be determined, e.g., by numerical simulation of the protocol without any noise.   
This will inform our approach in Sec.~\ref{subsec:simulating-warm-up-phase-with-RAB-protocol}.

\paragraph*{General guarantee}
We established that the expected RAB signal for Haar randomly drawn quenches follows an exponential decay, Lemma \ref{lem:decay_haar_random}.
Moreover, for a simple noise model, this approximation holds 
for a non-uniform measure $\nu$ provided that $m \geq m_\text{\rm ths}$. 
This behaviour also holds in the presence of gate-dependent noise. 
The signal of sufficiently long sequences of non-uniformly sampled quenches converges to the signal for the uniform measure on the generated group.  
For a general filtered randomized benchmarking protocol, Ref.~\cite{heinrich2023randomized} showed that sampling non-uniformly from a compact group leads to an exponential decay after a minimal sequence length  $m_\text{\rm ths}$.
This minimal length depends on the measure and a perturbative noise assumption. 
This justifies fitting an exponential decay for a sequence of $m \geq m_\text{\rm ths}$.
Using the results of Ref.~\cite{heinrich2023randomized}, we can derive the following general guarantee for the RAB protocol. 
To this end, recall that we denote by $T_\lambda : X \mapsto \mathbb E_{U \sim \nu} \sigma_\lambda(U) X \omega_n(U)$ and $\tilde T_\lambda : X \mapsto \mathbb E_{U \sim \nu} \sigma_\lambda(U) X \phi(U)$. 
\begin{theorem}[Signal form guarantee for non-interacting quenches]\label{theorem:signal-guarantee-nonint}
Choose $n \in \mathbb N$, $l \in \{1, \ldots n\}$. Let $\nu$ be a measure on $\operatorname{SU}(d)$ (supported on quenches with non-interacting Hamiltonians) that is an approximate unitary $2n$-design with spectral gap $\Delta^{(2n)}_\nu$.  
Assume that there exist $\delta \in (0, \Delta^{(2n)}_\nu / 5)$ such that the implementation fulfils 
\begin{equation}\label{eq:guarantee assumption}
    \|T_{n,l} - \tilde T_{n,l}\|_\infty \leq \delta\, .
\end{equation}
Then, the expected signal of the RAB protocol is of the form
\begin{equation}\label{eq:decay_curve_random_sequences}
    F_{n,l}(m) = A_{n,l} z^m_{n,l} + \alpha(m)\,
\end{equation}
with real $z_{\lambda} \in [1 - 2\delta, 1]$ independent of SPAM errors  
and $|\alpha(m)| \leq \check\alpha$ for 
\begin{equation}\label{eq:guarantee sequence length}
    m \geq m_\text{\rm ths} = \frac2{\Delta^{(2n)}_\nu} \left( \frac23 \log \dim(W_{n,l}) + \frac12 \log \frac1{n_{n,l}} + \log \frac1{\check\alpha} + 1.8\right)\,, 
\end{equation}
with $\dim(W_{n,l})$ the dimension of the irrep and $n_{n,l}$ the normalization factor Eq.~\eqref{eq:normalization}.
\end{theorem}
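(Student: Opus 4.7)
My plan is to cast the statement as a special case of the general guarantee for filtered RB with non-uniform measures derived in Ref.~\cite{heinrich2023randomized} and then supply the two ingredients specific to non-interacting particle-number preserving dynamics: (i) a spectral gap bound for the moment operator of $\omega_n$ restricted to the irrep $W_{n,l}$ in terms of the single-particle gap $\Delta^{(2n)}_\nu$, and (ii) control of the normalization constants appearing in the error term in~$m_\text{\rm ths}$.

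First, I would recall the expression $F_\lambda(m) = (n_\lambda s_\lambda)^{-1}(Q|\tilde T_\lambda^m|\tilde M)$ from Eq.~\eqref{eq:expected signal T} and decompose $\tilde T_{n,l} = T_{n,l} + E$ with $\|E\|_\infty \le \delta$ by assumption~\eqref{eq:guarantee assumption}. The ideal twirl $T_{n,l}$ admits a spectral decomposition $T_{n,l} = |P_{n,l})(P_{n,l}|/\operatorname{tr}(P_{n,l}) + \Lambda_{n,l}$, where the rank-one part arises from Schur's lemma applied to the single non-trivial isotypic component (here $\sigma_{n,l}$ is multiplicity-free by Lemma~\ref{lem:noninterirreps}) and $\|\Lambda_{n,l}\|_\infty = 1 - \Delta_\nu^{\sigma_{n,l}}$. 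Standard matrix perturbation theory then yields a unique dominant eigenvalue $z_{n,l}$ of $\tilde T_{n,l}$ with $|z_{n,l} - 1| \le 2\delta$ and, crucially, reality of $z_{n,l}$ follows because $T_{n,l}$ is self-adjoint under the Hilbert-Schmidt inner product and the perturbation is small relative to the gap. SPAM-independence of $z_{n,l}$ is immediate since the dominant eigenvalue does not depend on $Q$ or $\tilde M$.

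The main obstacle is step~(i): relating the spectral gap of $T_{n,l}$ — a twirl by the representation $\omega_n = \Sym^n \otimes \overline{\Sym^n}$ — to the unitary $2n$-design gap $\Delta^{(2n)}_\nu$ on the single-particle space. The argument is that $\Sym^n(\mathbb{C}^d) \otimes \overline{\Sym^n(\mathbb{C}^d)}$ embeds isometrically (up to symmetrization projectors) into $(\mathbb{C}^d)^{\otimes 2n}$, so the moment operator of $\nu$ in the representation $\Sym^n \otimes \overline{\Sym^n}$ is obtained by conjugating the $2n$-th moment operator $M^{(2n)}_\nu$ of Eq.~\eqref{eq:moment_operator} by the symmetrizer. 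Restricting to the irrep $W_{n,l}$ (which by Theorem~\ref{theorem:explicitdescriptionofirreps} sits inside $L(\Sym^n(\mathbb{C}^d))$) can only increase the gap, so $\Delta_\nu^{\sigma_{n,l}} \ge \Delta_\nu^{(2n)}$.

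Finally, for step~(ii) I would propagate the error term. Splitting the $m$-fold product, the sub-leading contribution is bounded as $|\alpha(m)| \le C \cdot (1 - \Delta_\nu^{(2n)}/2)^m$ using the perturbed-gap bound $1 - \Delta_\nu^{(2n)} + O(\delta) \le 1 - \Delta_\nu^{(2n)}/2$ guaranteed by $\delta \le \Delta_\nu^{(2n)}/5$. The prefactor $C$ collects the Cauchy-Schwarz bounds $|(Q|\,\cdot\,|\tilde M)|$, which produce the $\dim(W_{n,l})^{2/3}$ and $n_{n,l}^{-1/2}$ factors (the fractional exponents come from interpolating between operator, Frobenius and nuclear norms of the SPAM operators and projectors as in Ref.~\cite{heinrich2023randomized}). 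Solving $C (1 - \Delta_\nu^{(2n)}/2)^m \le \check\alpha$ using $\log(1 - x) \le -x$ yields the stated threshold~\eqref{eq:guarantee sequence length}, with the additive constant $1.8$ arising from the numerical factors in $C$. The hardest book-keeping will be matching these constants and the exponents of the norm-equivalence factors precisely to the form in Ref.~\cite{heinrich2023randomized}; the rest reduces to a clean application of their master theorem.
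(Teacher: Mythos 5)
Your overall architecture matches the paper's proof: invoke the master theorem for filtered RB with non-uniform measures from Ref.~\cite{heinrich2023randomized}, supply a spectral-gap bound for the ideal twirl $T_{n,l}$ in terms of $\Delta^{(2n)}_\nu$, use multiplicity-freeness of $\sigma_{n,l}$ in $\omega_n$ for the scalar decay form, and bound the prefactor via $s_{n,l}\geq 1/\dim W_{n,l}$ and the normalization $n_{n,l}$. However, your justification of the central inequality relating the gap of $T_{n,l}$ to the $2n$-design gap has a genuine flaw. You argue that $\Sym^n(\mathbb C^d)\otimes\overline{\Sym^n(\mathbb C^d)}$ embeds into $(\mathbb C^d)^{\otimes 2n}$ and that therefore the moment operator of $\omega_n$ sits inside $M^{(2n)}_\nu$. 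But that embedding involves only a \emph{single} copy of $\omega_n$, i.e.\ the representation $U^{\otimes n}\otimes\bar U^{\otimes n}$, which in the paper's convention (Eq.~\eqref{eq:moment_operator}) is the $n$-th moment operator $M^{(n)}_\nu$, not $M^{(2n)}_\nu$. The object whose gap actually matters is the twirl $T_{n,l}=\mathbb E_{U\sim\nu}\,\sigma_{n,l}(U)^\dagger\otimes\omega_n(U)^T$, a moment operator of the \emph{tensor product of two} copies of (subrepresentations of) $\omega_n$. The missing step is to note that $\sigma_{n,l}$ is self-dual, so $\sigma_{n,l}\otimes\omega_n$ is a subrepresentation of $\omega_n^{\otimes 2}$, which in turn is a subrepresentation of $U^{\otimes 2n}\otimes\bar U^{\otimes 2n}$; only then does $\Delta^{(2n)}_\nu\leq\Delta_\lambda$ follow, and only then is it clear why the hypothesis must be a $2n$-design rather than an $n$-design. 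As written, your embedding argument proves the wrong inequality (or the right inequality with the wrong design order).

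Two smaller points. First, your claim that the $\dim(W_{n,l})^{2/3}$ factor arises from norm interpolation is not how the constant is actually produced: the bound is $c_{n,l}\leq e^{1.8}\sqrt{\dim W_{n,l}}\,/(s_{n,l}\sqrt{n_{n,l}})$ followed by $s_{n,l}\geq 1/\dim W_{n,l}$, which gives $\dim(W_{n,l})^{3/2}$, i.e.\ a coefficient $\tfrac32$ rather than $\tfrac23$ in front of $\log\dim W_{n,l}$ (the statement and the derivation disagree here, so be careful which you reproduce). Second, your perturbation step uses the decay rate $1-\Delta^{(2n)}_\nu/2$, whereas the sharper and correct bound under $\delta<\Delta^{(2n)}_\nu/5$ is $|\alpha(m)|\leq c_{n,l}\,(1-\Delta^{(2n)}_\nu+2\delta)^m$; your version still yields a threshold of the right order, but the constants in Eq.~\eqref{eq:guarantee sequence length} come from the latter.
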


\begin{proof}
The theorem follows from adapting Ref.~\cite[Theorem 8]{heinrich2023randomized} to the case at hand and controlling the spectral gap of the twirl operator for all irreps $(n,l)$ with $l \in \{1, \ldots, n\}$ simultaneously by $\Delta^{(2n)}_\nu$.  

To this end, let $T_{\lambda, \mu} : X \mapsto \mathbb E_{U \sim \mu} \sigma_\lambda(U) X \omega_n(U)$ and $\Delta_\lambda = \| T_{n,l} - T_{(n,l), \mu}\|_\infty$ the spectral gap of $T_\lambda$ to the Haar average.  All irreducible representations $\sigma_{n,l}$ are self-dual, so $\Delta_\lambda$ is also the spectral gap of
$
\mathbb{E}\sigma_{n,l}(U)\otimes \omega_n(U)$.
Now $U \mapsto \sigma_{n,l}(U)\otimes  \tilde{\omega_n}(U)$ 
is a subrepresentation of $\omega_n^{\otimes 2}$, which in turn is a subrepresentation of $U \mapsto U^{\otimes 2n}\otimes \overline{U
}^{\otimes 2n}$.
Therefore, $T_{n,l}$
block diagonalizes into irreducible representations that are also contained in $U \mapsto U^{\otimes 2n}\otimes \overline{U}^{\otimes 2n}$ and the gap $\Delta^{(2n)}_{\nu} \leq \Delta_\lambda $ lower bounds the gap of $T_{n,l}$. 

Thus, Eq.~\eqref{eq:guarantee assumption} ensures that the corresponding condition Ref.~\cite[Theorem 30]{heinrich2023randomized} is met for all $l$ and the convergence of the sub-dominant contribution to the RAB signal is also controlled with $\Delta^{(2n)}_{\nu}$.  The signal form, a scalar decay, then follows from observing that $\sigma_{n,l}$ appears in $\omega_n$ with multiplicity one (Theorem~\ref{theorem:explicitdescriptionofirreps}).  Further, $|\alpha(m)| \leq c_{n,l} ( 1 - \Delta^{(2n)}_\nu + 2 \delta)^m$.  
The constant is bounded by $c_{n,l} \leq e^{1.8} \sqrt{\dim(W_{n,l})} / (s_{n,l} \sqrt{n_{n,l}})$, where we have used that our definition of the filter function includes an additional factor of $n_{n,l}^{-1}$ compared to the definition in Ref.~\cite{heinrich2023randomized}. Using $s_{n,l}\geq 1/\dim W_{n,l}$, we have $c_{n,l} \leq e^{1.8} \dim(W_{n,l})^{3/2} / \sqrt{n_{n,l}}$. 
Thus, $m \geq \frac2{\Delta^{(2n)}_\nu} (\log c_{n,l} + \log 1/{\check \alpha})$ implies that $|\alpha(m)| \leq \check\alpha$.  Expanding $\log c_{n,l} \leq 3/2 \log \dim(W_{n,l}) - 1/2 \log n_{n,l} + 1.8$ yields the theorem's statement. 
\end{proof}

Let us unpack the statement of this theorem. 
The term $\alpha(m)$ in Eq.~ \eqref{eq:decay_curve_random_sequences} needs to be suppressed sufficiently in order to be able to fit an exponential decay confidently. The rate at which this suppression happens is governed by $\delta$ and $\Delta^{(2n)}_\nu$.  We observe rapid decays for small $\delta$ and large spectral gap $\Delta^{(2n)}_\nu \approx 1$. 
In the case where $\nu$ is the Haar measure (or at least a $2n$-design) the moment operator is a projector and indeed $\Delta_{\lambda}=1$. In this case, Eq.~\eqref{eq:guarantee assumption} can be understood as a condition on the strength of tolerated noise.
This ensures that the $\tilde T_{n,l}$ is still well approximated by a unit rank operator.  
In particular, we will observe a scalar decay as soon as the unit rank approximation dominates the RAB signal. 
When $\nu$ is a non-uniform measure, such a unit rank approximation is still accurate after a sufficiently long sequence length and if the noise is benign enough. 
More precisely, we require the spectral gap to stay open under noise.
In a perturbative regime where $\delta \leq \Delta_{\nu}^{(2n)} / 5$, matrix perturbation theory  ensures  that the spectral gap is still larger than $\Delta_\nu^{(2n)} - 2\delta$ \cite{heinrich2023randomized}. 
Thus, for a sequence length that scales inversely with the spectral gap we continue to observe a single scalar exponential decay. Eq.~\eqref{eq:guarantee sequence length} gives a precise condition for controlling the subdominant term $\alpha$ up to additive precision $\check \alpha$. In particular, we find a lower bound on $m_{\text{ths}}$ that scales logarithmically in the irrep's dimensions.  
Thus, we find a sufficient bound on the sequence length $m_{\text{ths}} \gtrsim (\Delta_\nu^{(2n)})^{-1} l \log d$. 

The setting with interacting quenches follows analogously. 
Here we need to ensure that the measure induces an approximate unitary $2$-design on $\operatorname{SU}(d_F)$.  
Since $\sigma_n$ also appears in $\omega_n$ without multiplicities, we find a scalar exponential decay after a sequence length scaling inversely with the design's spectral gap and logarithmically in $\dim W_n = d_F^2 -1$.

It is important to notice that observing single exponential decays in our RAB protocol relies on the fact that the irreps appear in $\omega_n$ without multiplicities.   
From the decomposition Lemma~\ref{lem:noninterirreps} we know that the irrep $(n, l)$ in the decomposition of $\omega_n$ has an isomorphic irrep $(n', l)$ in the decomposition of $\omega_{n'}$ for all $n' \geq l$.  
Thus, if $\phi$ is not particle-number preserving it is in principle possible that it intertwines different equivalent irrep spaces. 
The RAB signal then takes the form of a matrix-exponential decay $F_{n,l}(m) = \tr[A I^m]$ with a square matrix $I$ depending on $\phi$ and a matrix $A$ depending on the SPAM errors. 
The dimension of $I$ is the number of irreps that are intertwined by $\phi$  \cite{heinrich2023randomized}.

Furthermore, since $I$ might be only diagonalizable over $\mathbb C$, $F_{n,l}(m)$ can be non-monotonous and oscillating.
If the RAB signal (for $m \geq m_{\text{ths}}$) is found to consist of multiple decays or oscillations in praxis, this can, hence, be understood as a signature of violating the assumption of particle number preserving noise.

\section{Relations for spectral gaps}\label{secapp:convergence-analysis}

The guarantee of the RAB protocol, Theorem \ref{theorem:signal-guarantee-nonint}, for arbitrary gate-dependent noise only makes use of the spectral gap $\Delta^{(2n)}_\nu$ (for non-interacting quenches, $\Delta^{(2)}_\nu$ for interacting quenches) to specify a sufficient minimal sequence length $m_{\text{ths}}$ for a non-uniform measure $\nu$. 
Given a measure $\nu$ it is possible to numerically obtain a bound for $\Delta^{(2n)}_\nu$.  This can for example be achieved by empirically estimating the corresponding frame potential Eq.~\eqref{eq: frame_potential}.  
We in this section develop the required results for relating these quantities.

\begin{lemma}[Lower bound to spectral gap]\label{lemma:lower-bound-spectral-gap}
    Let $F_{\nu}^{t}(m)$ be the frame potential for a sequence of random quenches of length $m$. The spectral gap $\Delta_\nu^{(t)}$ is lower bounded as 
    \begin{equation}
        \Delta_\nu^{(t)} \geq 1 - \sqrt[2m]{\mathcal{F}_\nu^{(t)}(m) - t!} := S_{\nu}^{(t)}.
    \label{eq:lower-bound-spectral-gap}
    \end{equation}
\end{lemma}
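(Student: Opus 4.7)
The plan is to express the deviation $\mathcal{F}_\nu^{(t)}(m) - t!$ as the squared Frobenius norm of a matrix power, and then bound that Frobenius norm below by an operator norm that equals $1-\Delta_\nu^{(t)}$ raised to the $m$-th power. The three ingredients are (i) a Pythagorean identity relating the frame potential to the distance from the Haar projector, (ii) multiplicativity of moment operators under convolution, and (iii) norm inequalities.

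First, I would unpack the frame potential as a Hilbert--Schmidt (Frobenius) squared norm of the $m$-step moment operator. Writing the inner integral over $\nu^{\ast m}\times \nu^{\ast m}$, one finds
\begin{equation}
\mathcal{F}_\nu^{(t)}(m) = \|M_{\nu^{\ast m}}^{(t)}\|_F^2, \qquad \|M_\mu^{(t)}\|_F^2 = t!,
\end{equation}
where the second identity uses that $M_\mu^{(t)}$ is the orthogonal projector onto the commutant, of rank $t!$. The left-invariance of Haar measure gives the absorbing identity $M_\nu^{(t)} M_\mu^{(t)} = M_\mu^{(t)} M_\nu^{(t)} = M_\mu^{(t)}$, which together with $\|M_\mu^{(t)}\|_F^2 = t!$ yields $\langle M_{\nu^{\ast m}}^{(t)}, M_\mu^{(t)}\rangle_F = t!$. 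A Pythagorean decomposition then gives
\begin{equation}
\|M_{\nu^{\ast m}}^{(t)} - M_\mu^{(t)}\|_F^2 = \mathcal{F}_\nu^{(t)}(m) - t!.
\end{equation}

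Next, I would use the multiplicativity $M_{\nu^{\ast m}}^{(t)} = (M_\nu^{(t)})^m$ together with the absorbing property above to derive, by a short induction, the key identity
\begin{equation}
M_{\nu^{\ast m}}^{(t)} - M_\mu^{(t)} = \bigl(M_\nu^{(t)} - M_\mu^{(t)}\bigr)^m.
\end{equation}
Setting $B := M_\nu^{(t)} - M_\mu^{(t)}$, the previous two displays combine to $\|B^m\|_F^2 = \mathcal{F}_\nu^{(t)}(m) - t!$.

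Finally, I would chain $\|B^m\|_F \geq \|B^m\|_\infty = \|B\|_\infty^m = (1-\Delta_\nu^{(t)})^m$, where the middle equality uses the multiplicativity of the operator norm on powers of a normal matrix and the last uses the definition of the spectral gap. Squaring and taking $2m$-th roots gives the claimed bound. The only real obstacle is the last step: for general non-symmetric $\nu$, $B$ need not be normal, and one only has $\|B^m\|_\infty \leq \|B\|_\infty^m$ with the reverse inequality requiring normality (equivalently, $\|B\|_\infty$ coinciding with the spectral radius). In the settings of interest $M_\nu^{(t)}$ is self-adjoint on the Hilbert--Schmidt space whenever $\nu$ is invariant under $U\mapsto U^\dagger$ (as for the symmetric Hamiltonian distributions considered here), which secures the required identity; otherwise one can adopt the spectral-radius convention for $\Delta_\nu^{(t)}$ and use $r(B^m)=r(B)^m\leq \|B^m\|_F$ directly to obtain the same inequality.
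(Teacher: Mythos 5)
Your proposal is correct and follows essentially the same route as the paper: both identify $\mathcal{F}_\nu^{(t)}(m) - t!$ with $\|(M_\nu^{(t)})^m - M_\mu^{(t)}\|_F^2$ via the absorption property of the Haar projector, and then lower-bound this by $(1-\Delta_\nu^{(t)})^{2m}$ using the subleading eigenvalue. Your explicit flagging of the hermiticity/normality requirement matches the paper, which likewise invokes the symmetry of $\nu$ to make $M_\nu^{(t)}$ Hermitian before diagonalizing.
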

\begin{proof} 

The derivation boils down to proving the two inequalities
\begin{align}
    i) \;\;&   \left\|\left(M_{\nu}^{(t)}\right)^m - M_\mu^{(t)}\right\|_{\mathrm{F}}^2
    \geq (1-\Delta_\nu^{(t)})^{2m},\\
    ii)\;\;&  \left\|\left(M_{\nu}^{(t)}\right)^m - M_\mu^{(t)}\right\|_{\mathrm{F}}^2
    = \mathcal{F}_\nu^{(t)}(m) - \mathcal{F}_\mu^{(t)},
\end{align}
where $\mu$ is the Haar measure.

Starting with the left-hand side of $(i)$, using the left and right invariance of the Haar measure and the fact that the moment operator $M_\nu^{(t)}$ is hermitian due to the symmetric measure $\nu$ it follows that
\begin{align}
    \left\|\left(M_{\nu}^{(t)}\right)^m - M_\mu^{(t)}\right\|_{\mathrm{F}}^2
    = \mathrm{tr}\left[
        \left(M_{\nu}^{(t)}\right)^{2m}
        - \left(M_{\nu}^{(t)}\right)^m M_\mu^{(t)}
        - M_\mu^{(t)}\left(M_{\nu}^{(t)}\right)^k
        + \left(M_\mu^{(t)}\right)^2
    \right].
\end{align}
Using the left and right invariance of the Haar measure  $M_\nu^{(t)}M_\mu^{(t)}=M_\mu^{(t)}M_\nu^{(t)} = M_\mu^{(t)}$ and an eigenvalue decomposition of both the moment operators $M_{\nu}^{(t)} = U\Sigma_\nu^{(t)}U^\dagger$ leads to
\begin{align}
    \left\|M_{\nu^{*m}}^{(t)} - M_\mu^{(t)}\right\|_{\mathrm{F}}^2
    = \mathrm{tr}\left[\left(M_{\nu}^{(t)}\right)^{2m} - M_{\mu}^{(t)} \right]
    = \mathrm{tr}\left[ \left(\Sigma_\nu^{(t)}\right)^{2m} - \Sigma_\mu^{(t)} \right]
    = \sum_{i=1}^{d_F^{2t}}\left(\left(\lambda_{\nu,i}^{(t)}\right)^{2m} - \lambda_{\mu,i}^{(t)}\right).
    \label{eq:sum-eigvalues-momentoperators}
\end{align}
Choose the order of each moment operator eigenvalues such that 
$\Sigma_\mu^{(t)} = \text{diag}(\underbrace{1, 1, \ldots, 1}_{t!}, 0, 0, \ldots, 0)$
and
$\Sigma_\nu^{(t)}=\text{diag}(\underbrace{1, 1, \ldots, 1}_{t!}, \lambda_{\nu,2}^{(t)}, \lambda_{\nu,3}^{(t)}, \ldots, \lambda_{\nu,n}^{(t)})$ with $1\geq\lambda_{\nu,2}^{(t)}\geq\ldots\geq\lambda_{\nu,n}^{(t)} \geq 0$ where $n=d_F^{2t} - t!$.
The spectral gap is defined as the gap between the eigenvalues which are $1$ and the biggest eigenvalue $\lambda_2$.
With that Eq.~\eqref{eq:sum-eigvalues-momentoperators} simplifies to 
\begin{align}
    \sum_{i=2}^{d_F^{2t} - t!}\left(\lambda_{\nu,i}^{(t)}\right)^{2m}
    \geq \left(\lambda_{\nu,2}^{(t)}\right)^{2m}
    = (1-\Delta_\nu^{(t)})^{2m},
\end{align}
showing $(ii)$. 
We denote by $\nu^{\ast m}$ the $m$-fold convolution of the measure.  The product of $m$ unitaries i.i.d.\ drawn from $\nu$ is distributed according to $\nu^{\ast m}$.
Using the invariance of the Haar measure $(ii)$ follows from 
\begin{align}
     \left\|\left(M_\nu^{(t)}\right)^m - M_\mu^{(t)}\right\|_{\mathrm{F}}^2
    &= \mathrm{tr}\left[\left(M_\nu^{(t)}\right)^m\left(\left(M_\nu^{(t}\right)^\dagger\right)^m - M_\mu^{(t)}\right]\,, \\
    \nonumber 
    &=\mathop{\mathbb E \;}_{\substack{V_1,\ldots, V_m\sim\nu \\ U_1,\ldots, U_m\sim\nu}} \left[
    \tr\left( \left(U_1^{\otimes t}\cdots U_m^{\otimes t} \left(V_1^{\otimes t}\right)^\dagger\cdots \left(V_m^{\otimes t}\right)^\dagger \right)
    \otimes 
    \left(\overline U_1^{\otimes t}\cdots \overline U_m^{\otimes t} \left(V_1^{\otimes t}\right)^T\cdots \left(V_m^{\otimes t}\right)^T \right)\right]\right) - \mathcal F_\mu^{(t)}\,,\\
    \nonumber 
    &= \mathop{\mathbb E \;}_{U, V\sim\nu^{*m}}
    \tr\left(UV^\dagger\right)^t
    \overline{\tr\left(UV^\dagger\right)}^t - F_\mu^{(t)}\,,\\
    \nonumber 
    &=  \mathcal{F}_{\nu^{*m}}^{(t)} - \mathcal{F}_\mu^{(t)}\,.
    \nonumber 
\end{align}
\end{proof}

\subsection{Sample complexity}\label{app:subsec:sample-complexity}
We next turn to the sample complexity. We focus only on the number of random sequences that have to be applied at each sequence length to get an accurate estimate of the sequence-averaged RAB signal.
RAB gives you an error parameter for each irreducible representation.
We have shown in App.~\ref{app: overlap}, the largest overlap an initial pure state has is with the projector of the largest irreducible representation.
This also indicates that the error parameter of the largest irreducible representation is the most accurate information one can gain from this protocol.
Therefore, we find the tightest bound on the sample complexity for the largest irreducible representation.

\begin{theorem}[Sample complexity for non-interacting quenches]
    Let $\overline{F}_{\lambda}(m)$ be the unbiased estimator, using $K$ samples, of the sequence average $F_{\lambda}(m)$ defined in Eq.~\eqref{eq:expected signal}. Choose $\epsilon >0$, $\delta \in (0,1)$ and suppose that
    \begin{equation}
        K \geq \frac{\operatorname{dim}(W_{n,n})^2}{2\epsilon^2 d_F^2(1 - n^2/d)^4} \log(2/\delta)\,.
    \end{equation}
    Then in the collision-free regime $d \gg  n^2$, it holds that $\left | \overline F_{n,n}(m)- F_{n,n}(m)\right | \leq \epsilon$ with probability $1-\delta$.
\end{theorem}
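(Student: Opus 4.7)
The plan is to view $\overline F_{n,n}(m) = K^{-1}\sum_{k=1}^K \hat F_{n,n}(m, U_k)$ as the empirical mean of $K$ independent, identically distributed, almost-surely bounded random variables, and to invoke Hoeffding's inequality. Given a uniform bound $|\hat F_{n,n}(m, U)| \leq B$, Hoeffding yields that on the order of $B^2 \epsilon^{-2}\log(2/\delta)$ samples suffice to reach $(\epsilon, \delta)$-accuracy; the technical core is therefore extracting a sharp value of $B$ that matches the stated prefactor.

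To produce $B$, I would start from $\hat F_{n,n}(m, U) = n_{n,n}^{-1} \sum_x f_{n,n}(x, U)\hat p_x$ and exploit that $\hat p$ is a probability vector to conclude $|\hat F_{n,n}(m, U)| \leq n_{n,n}^{-1} \max_x |f_{n,n}(x, U)|$. Unpacking the filter function and applying Cauchy-Schwarz in the Hilbert-Schmidt inner product gives
\begin{equation*}
|f_{n,n}(x, U)| = s_{n,n}^{-1}\,|(P_{n,n}(\rho_0), U^\dagger E_x U)| \leq s_{n,n}^{-1}\|P_{n,n}(\rho_0)\|_F \|E_x\|_F \leq s_{n,n}^{-1},
\end{equation*}
using that $P_{n,n}$ is an orthogonal projection on the Hilbert-Schmidt space, so $\|P_{n,n}(\rho_0)\|_F \leq \|\rho_0\|_F = 1$, and that $E_x$ is a rank-one Fock projector with $\|E_x\|_F = 1$.

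The two scalar prefactors $n_{n,n}$ and $s_{n,n}$ are then controlled directly by Lemma \ref{lower_bound_overlap}. Applied to the initial pure state it yields $n_{n,n} = (\rho_0|P_{n,n}|\rho_0) \geq 1 - n^2/d$. Applied to each Fock basis vector $|x\rangle \in \Sym^n(\mathbb{C}^d)$ and summed over $x$ it produces
\begin{equation*}
\tr(P_{n,n} M) = \sum_{x=1}^{d_F} (E_x|P_{n,n}|E_x) \geq d_F\,(1 - n^2/d),
\end{equation*}
whence $s_{n,n}^{-1} \leq \dim(W_{n,n})/[d_F(1 - n^2/d)]$ upon dividing by $\tr(P_{n,n}) = \dim(W_{n,n})$ (Lemma \ref{lem:noninterirreps}). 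Combining these estimates delivers $B = \dim(W_{n,n})/[d_F(1 - n^2/d)^2]$, which is finite in the collision-free regime $d \gg n^2$; inserting $B^2$ into Hoeffding then reproduces the advertised scaling in $\dim(W_{n,n})$, $d_F$, and $(1-n^2/d)$.

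The main obstacle I anticipate is matching the precise numerical prefactor $1/(2\epsilon^2)$ appearing in the hypothesis: a direct appeal to Hoeffding with values in $[-B, B]$ produces $2/\epsilon^2$, a factor of four worse. Closing this gap requires either a centering or one-sided refinement --- exploiting, for instance, that after subtracting its conditional mean the deviation $\hat F_{n,n}(m, U) - F_{n,n}(m)$ is effectively confined to an interval of width $B$ rather than $2B$ in the collision-free regime --- or a sharper concentration inequality targeted at this specific estimator. Beyond this constant bookkeeping, the representation-theoretic inputs (Lemma \ref{lower_bound_overlap}, Lemma \ref{lem:noninterirreps}, and Cauchy-Schwarz) slot in directly.
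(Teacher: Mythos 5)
Your proposal follows essentially the same route as the paper's proof: Hoeffding's inequality applied to the single-sequence estimator, the uniform bound $|\hat F_{n,n}|\le 1/(n_{n,n}s_{n,n})$ (the paper asserts this directly in the single-shot case; your Cauchy--Schwarz step makes it explicit), and Lemma~\ref{lower_bound_overlap} applied to the initial state and to each Fock basis vector to control $n_{n,n}$ and $\tr(P_{n,n}M)$. As for the factor of four you flag as the main obstacle: you need not close that gap, since the paper's own proof concludes with $K \ge 2b^2\epsilon^{-2}\log(2/\delta)$ --- exactly what your two-sided Hoeffding argument yields --- which is four times larger than the constant in the theorem statement, so the discrepancy is an inconsistency between the paper's stated bound and its proof rather than a refinement you are missing.
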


\begin{proof}
    Using Hoeffding's inequality for a random variable $\hat X \in [a, b]$ and the mean estimate $\overline X = 1/K \sum_{k=1}^{K}\hat X_k$ we have 
    \begin{equation}
		\Pr\left[\left|\overline X-\mathbb{E}[\hat X]\right|\geq \epsilon\right]\leq 2 \exp{\frac{-2K\epsilon^2}{(b-a)^2}} \geq \delta,
	\end{equation}
 which means that in order to achieve a sample estimate $\overline X$ which is at least $\epsilon$ close to the expected value with confidence $\delta$ we require 
	\begin{equation}
		K\geq \frac{ (b-a)^2}{2\epsilon ^2}\log(2/\delta) .
	\end{equation}
Recall that the RB signal from Eq.~\eqref{eq:Fhat-lambda} is
\begin{equation}
\hat{F}_{\lambda}(m) = n_{\lambda}^{-1} \sum\limits_{x=1}^{d_F} f_{\lambda}(x, U) \hat p_{x} \;\in [a,b], \text{ with } f_{\lambda}(x, U) = s_{\lambda}^{-1} (\rho_0|P_{\lambda}\omega_n(U)^\dagger|E_x).
\end{equation}

In the single shot regime set $\hat p_{x|U} = \delta_{x,x'}$ with $x' \sim p_{x|U}$. Therefore, we obtain an upper bound $b$ with
\begin{align}
    \hat F_{\lambda}(m)
    = n_{\lambda}^{-1} \sum\limits_{x=1}^{d_F} f_{\lambda}(x, U) \delta_{x,x'}
    =  n_{\lambda}^{-1}f_{\lambda}(x', U) \leq \frac{1}{n_{\lambda}s_{\lambda}}.
\end{align}

For the collision free regime $d \gg n^2$ and in the biggest irreducible representation $\lambda = (n,n)$ we use  Lemma~\ref{lower_bound_overlap} and find $(\rho_0|P_{n,n}|\rho_0) \geq 1 - n^2/d$ and $\mathrm{tr}(P_{n,n}M) \geq d_F(1 - n^2/d)$.
Therefore, $b$ is lower bounded as
\begin{align}
    \frac{1}{n_{n,n}s_{n,n}}
    = \frac{\operatorname{dim}(W_{n,n})}{(\rho_0| P_{n,n} |\rho_0)\mathrm{tr}(P_{n,n}M)}
    \leq \frac{\operatorname{dim}(W_{n,n})}{d_F(1-n^2/d)^2}
    := b.
\end{align}

With that and the assumption that $a = -b$ the sample complexity $K$ can be written as 
\begin{align}
    K \geq \frac{2b^2 }{\epsilon^2} \log(2/\delta) = \frac{2\operatorname{dim}(W_{n,n})^2}{\epsilon^2 d_F^2(1 - n^2/d)^4} \log(2/\delta).
\end{align}
\end{proof}

Note that the dimension of the Fock space is $d_F = \binom{n+d-1}{n}$ and for the sake of clarity we will carry the number of particles $n$ in the notation as $d_F^n = \binom{n+d-1}{n}$.
Then the dimension of the carrier space reads as $\operatorname{dim}(W_{n,n}) = \left(d_F^n\right)^2 - \left(d_F^{n-1}\right)^2 = \binom{d+n-1}{n}^2 - \binom{d + n - 2}{n-1}^2$.

\begin{align}
     \frac{\operatorname{dim}(\lambda_n)^2}{d_F^2} = \frac{\left(d_F^{n}\right)^4 - 2 \left(d_F^{n}\right)^2\left(d_F^{n-1}\right)^2 + \left(d_F^{n-1}\right)^4}{\left(d_F^{n}\right)^2}\leq 2 \left(d_F^{n}\right)^2 \leq 2e^{2n}(n+d-1)^{2n} / n! = O(d^{2n}),
\end{align}
which gives us the following corollary.
\begin{corollary}[{Sample efficiency in collision-free regime}] \label{cor:collision-free-sample-complexity}
    In the collision-free regime, $d \gg n^2$, $F_{n,n}(m)$ is w.h.p.\ an $\epsilon$-accurate estimator with a number of samples polynomialy in $d$ and for $n \in O(1)$. 
\end{corollary}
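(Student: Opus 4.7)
The plan is to combine the sample complexity bound from the preceding theorem with an elementary asymptotic estimate of its prefactor. Recall that the theorem gives
\[
K \geq \frac{\operatorname{dim}(W_{n,n})^2}{2\epsilon^2 d_F^2(1 - n^2/d)^4} \log(2/\delta),
\]
so the task reduces to showing that the $\epsilon$- and $\delta$-independent prefactor scales as $\mathrm{poly}(d)$ whenever $n$ is a fixed constant.

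First, in the collision-free regime $d \gg n^2$ the factor $(1 - n^2/d)^{-4}$ is bounded by an absolute constant (for instance by $16$ once $d \geq 2 n^2$), and therefore contributes only an $O(1)$ overhead. Second, using the explicit formula $\dim(W_{n,n}) = d_F(n)^2 - d_F(n-1)^2$ provided by Lemma~\ref{lem:noninterirreps}, together with the elementary binomial estimate $d_F(n) = \binom{d+n-1}{n} \leq e^n(d+n-1)^n/n!$, one immediately obtains $\dim(W_{n,n})^2 / d_F^2 \leq d_F(n)^2 = O(d^{2n})$, where the implicit constant depends only on $n$. This is exactly the arithmetic already displayed in the paragraph preceding the corollary statement, so no further calculation is needed.

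Combining these two observations with the theorem's bound yields $K = O\bigl(d^{2n}\,\epsilon^{-2}\,\log(1/\delta)\bigr)$, which is polynomial in $d$ whenever $n = O(1)$; this is the claim. There is no real obstacle here: the corollary is essentially a bookkeeping step distilling the preceding theorem into the asymptotic regime of interest. The only thing worth tracking carefully is that the collision-free assumption $d \gg n^2$ is used precisely to keep $(1 - n^2/d)^{-4}$ uniformly bounded in $d$, and that the bound $\dim(W_{n,n}) \leq d_F(n)^2$ follows simply by discarding the non-negative subtracted term $d_F(n-1)^2$ before squaring.
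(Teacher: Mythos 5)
Your proposal is correct and follows essentially the same route as the paper: the corollary is obtained by combining the sample-complexity theorem with the observation that $(1-n^2/d)^{-4}$ is $O(1)$ in the collision-free regime and that $\dim(W_{n,n})^2/d_F^2 \leq d_F(n)^2 = O(d^{2n})$, which is exactly the arithmetic the paper displays immediately before the corollary. The only cosmetic difference is that you bound $\dim(W_{n,n}) \leq d_F(n)^2$ by dropping the subtracted term before squaring, whereas the paper expands the square and bounds it by $2\,d_F(n)^4$; both give the same $O(d^{2n})$ conclusion.
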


\section{Numerical simulations and methods}\label{app:numerical-simulations}

Here, we give details on the numerical simulations presented in the main text and present further results. 
The numerical simulations have been performed in Python and the code is available as Ref.~\cite{Jadwiga_github}.
We numerically study: First, the frame potential to estimate the spectral gap, second, the warm-up phase of a random ensemble, and third the RAB protocol itself with different noise models.

We consider the time evolution under two random ensembles of Bose-Hubbard Hamiltonians with a one-dimensional hopping graph defined in Eq.~\eqref{eq:band-diagonal-hamiltonian} in the main text. 
\begin{enumerate}
    \item \emph{Uniform ensemble}: uniformly sampled on-site potentials $h_{i,i}\sim[-1,1]$, hopping strength $j=i\pm1$ are complex numbers where both the real as well as the imaginary part are drawn from the interval $[-1,1]$. The evolution time is $\Delta t=1$.
    The interaction is modelled as on-site interaction with the distribution  $V_{i}\sim[-1,1]$.
    \item \emph{Sycamore ensemble}: uniformly sampled diagonal on-site potentials $h_{i,i}\sim [-20\mathrm{MHz},20 \mathrm{MHz}]$, and fixed hopping entries $h_{i,i+1}=-20 \mathrm{MHz}$ and time evolution $\Delta t=25 \mathrm{ns}$.
    The on-site interaction terms are always set to $V_{i}~=~-5\mathrm{MHz}$. The parameters are taken from the Google Sycamore quantum chip \cite{hangleiter2021precise} with a different interaction strength.
\end{enumerate}

A table for the used Clebsch-Gordan coefficients for the RAB protocol simulations is given in Tab.~\ref{tab:cgc_4d_2d}.

The error bars and absolute errors of each data point were computed using a non-parametric bootstrapping method.
To use the bootstrapping method, for each specified batch of data, the data was resampled with the same batch size $100$ times for the RAB protocols and $1000$ times for the frame potential estimations; this was done simultaneously for each data point. For any fitting process, the mean of each resampled data batch was calculated and fitted.
These fitting parameter estimates were then used to calculate an absolute error for each parameter.
Error bars for the means, the decay parameters, and the spectral gaps were determined at a confidence interval of $.95$.

\subsection{Numerical representation}
For the simulation, we directly work in the representation of the Hamiltonian dynamics in the particle-preserving subspace of bosonic systems with $d$ modes and $n$ particles where the Fock dimension is denoted as $d_F$. 
We use as the standard basis the set of Fock basis vectors $\{f_k\}_{k=1}^{d_F}$ with
\begin{align}
    |f\rangle = |n_1,n_2,\ldots,n_d \rangle
    = \left(b_1^\dagger\right)^{n_1}\cdots\left(b_d^\dagger\right)^{n_d}|0,0,\ldots,0\rangle, \text{ given } \sum_{k=1}^{d}n_k = n,
\end{align}
with the bosonic creation and annihilation operators 
\begin{align}
     & b_k^\dagger|f\rangle = \sqrt{n_k+1} |n_1,\ldots,n_k+1 ,\ldots,n_d, \rangle, &
     b_k|f\rangle = \sqrt{n_k} |n_1,\ldots,n_k-1 ,\ldots,n_d \rangle\,.
\end{align}
In this basis, the matrix representation of the Bose-Hubbard Hamiltonian from Eq.~\eqref{eq:quartic hamiltonian} is given as 
\begin{align}
    (H(h,V))_{k,l}  &= \langle f_k | H(h,V) | f_l\rangle.
\end{align}
For example, the matrix representation of the Hamiltonian of a $3$ mode system with $2$ particles with on-site interaction and nearest-neighbour hopping becomes
\begin{equation*}
H(h,V) = 
\left[\begin{matrix}2 V_{1} + 2 h_{1,1} & \sqrt{2} h_{1,2} & 0 & 0 & 0 & 0\\\sqrt{2} {h_{1,2}^*} & h_{1,1} + h_{2,2} & h_{2,3} & \sqrt{2} h_{1,2} & 0 & 0\\0 &  {h_{2,3}^*} & h_{1,1} + h_{3,3} & 0 & h_{1,2} & 0\\0 & \sqrt{2}  {h_{1,2}^*} & 0 & 2 V_{2} + 2 h_{2,2} & \sqrt{2} h_{2,3} & 0\\0 & 0 &  {h_{1,2}^*} & \sqrt{2}  {h_{2,3}^*} & h_{2,2} + h_{33} & \sqrt{2} h_{23}\\0 & 0 & 0 & 0 & \sqrt{2}  {h_{2,3}^*} & 2 V_{3} + 2 h_{33}\end{matrix}\right],
\end{equation*}
where $^*$ denotes complex conjugation.

\subsection{Numerical estimation of the frame  potential and evaluation of bounds on \texorpdfstring{$m_\text{\rm ths}$}{mths}}
\label{subsec:estimation-of-mths}

\begin{figure}
    \begin{floatrow}
     \ffigbox[0.7\textwidth]
       {\includegraphics[width=1\linewidth]{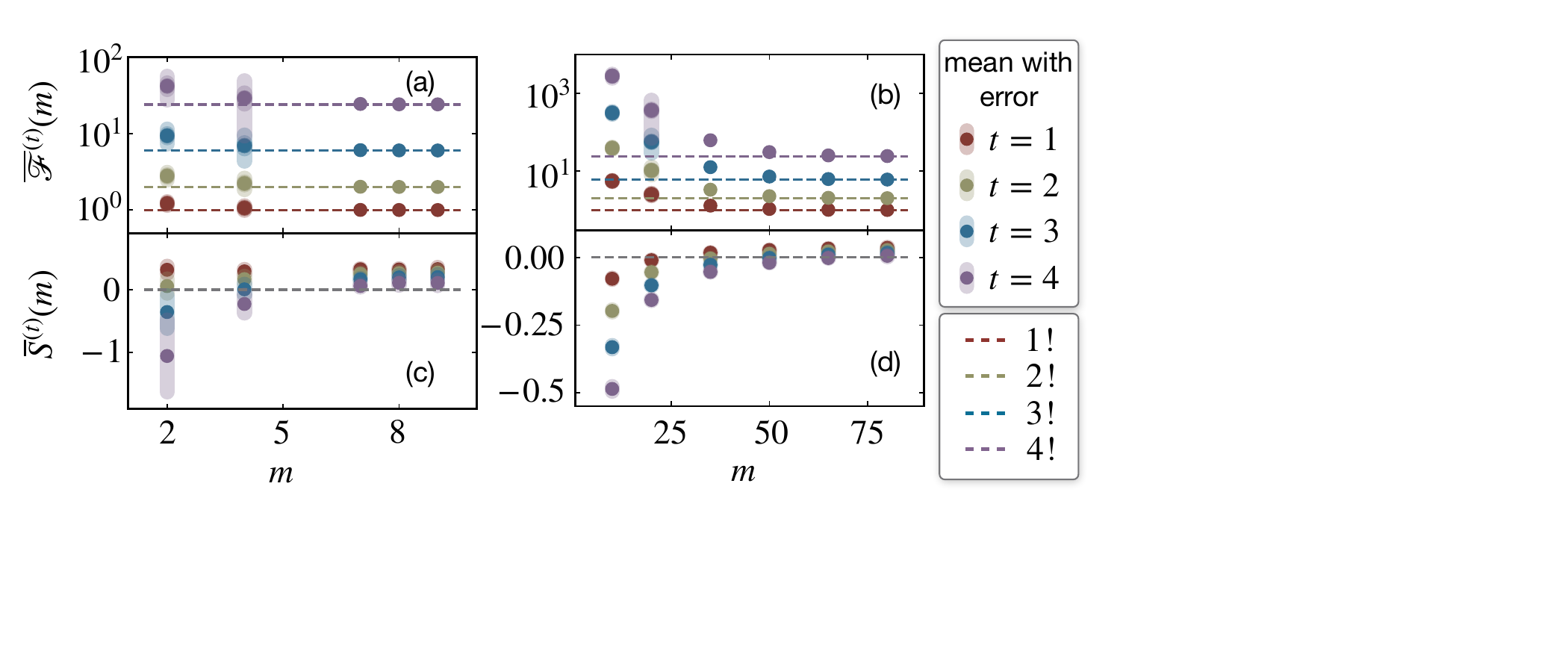}}{
       \caption{\emph{Estimating the frame potential and spectral gap.} The LHS displays results for the uniform setting, and the RHS results for the sycamore setting. In the top frame, the frame potential Eq.~\eqref{eq:statistical-estimate-frame-potential} as a function of $m$ is presented, these values are then converted into the spectral gap estimates plotted in the lower panel as a function of $m$, see Eq.~\eqref{eq:numerical-estimate-spectral-gap}.
       The dashed lines indicate the analytical frame potential values for Haar-random unitary ensembles.} 
       \label{fig:numerics-framepotential-spectralgap}
       }
     \capbtabbox[\Xhsize]
       {
       \setlength{\extrarowheight}{5pt}
       \begin{tabular}{r|cc} \toprule
   & uniform & sycamore \\ \hline
   $m=M$ & $9$ & $80$ \\
   $\overline{\mathcal{F}}^{(4)}(M)$ &  $24.13$  & $24.39$\\
   $\Delta\overline{\mathcal{F}}^{(4)}(M)$ &  $0.12$  & $0.21$\\
   $\hat\Delta_{\nu,\text{min}}^{(4)}$ & $ 0.11$ & $0.05$\\
   $\Delta\hat\Delta_{\nu,\text{min}}^{(4)}$ & $ 0.05$ & $0.03$\\
   $\hat m_\mathrm{thr}, \lambda=(2,1)$ & $156$ & $342$ \\
   $\hat m_\mathrm{thr}, \lambda=(2,2)$ & $178$ & $393$ \\ \bottomrule
  \end{tabular}
  }{
  \caption{
    \emph{Numerical values} of the frame potential Eq.~\eqref{eq:statistical-estimate-frame-potential}, spectral gap Eq.~\eqref{eq:numerical-estimate-spectral-gap} and analytical minimal sequence length Eq.~\eqref{eq:estimate-minimal-sequence-length} from Fig.~\ref{fig:numerics-framepotential-spectralgap}.
    A $\Delta$ indicates the absolute error.
    }
  }
  \label{table:numerics-framepotential-spectralgap}
    \end{floatrow}
   \end{figure}

The minimal admissable sequence length $m_\mathrm{ths}$ of the RAB protocol is controlled by the spectral gap of the chosen ensemble of random quenches in Theorem~\ref{theorem:signal-guarantee-nonint}. 
Particularly, the spectral gap of the $2n$-th moment of the ensemble, where $n$ denotes the particle number. 
In this section, we numerically estimate the frame potential for different random ensembles and determine a lower bound estimate on the spectral gap using Lemma \ref{lemma:lower-bound-spectral-gap}.  
This allows us to evaluate the analytical bound on $m_\mathrm{ths}$ of Theorem~\ref{theorem:signal-guarantee-nonint}.

The estimator of the frame potential for each quench depth $m$ and moment $t$ is calculated using
\begin{align}
    \overline{\mathcal{F}}^{(t)}(m) :=
    \overline{\mathcal{F}}_{\nu^*m}^{(t)} = \frac{1}{K}\sum_{k=1}^{K}\left|\mathrm{tr}\left[U_kV_k^\dagger\right]\right|^{2t} \text{ with } U_k, V_k \sim \nu^{*m},
    \label{eq:statistical-estimate-frame-potential}
\end{align}
where drawing $U\sim\nu^{*m}$ denotes the $m$-fold convolution of $\nu$, i.e.\ drawing $m$ many unitaries, where a single unitary quench is calculated as 
$U=U_1U_2\cdots U_m$ for $U_1,U_2,\ldots,U_m\sim\nu$.

Using the relation~\eqref{eq:lower-bound-spectral-gap} we then calculated the numerical estimate of each spectral gap as $\overline S_{\nu}^{(t)}(m) = 1 - \sqrt[2m]{\overline{\mathcal{F}}_\nu^{(t)}(m) - t!} $.

In Fig.~\ref{fig:numerics-framepotential-spectralgap} the estimates of the frame potentials up to the fourth moment and the estimates of the spectral gap are shown, both for the uniform and the sycamore setting for non-interacting systems.
For small quench depths $m$ the estimator $\overline S$ is negative and continuously grows until it is higher than zero.
Once the value rises above zero, including the bootstrapped error bars, the spectral gap will continue to increase for longer quench depths. However, sampling out the estimator for larger depth values is costly, for example, depth values of $m=9$ for the uniform setting require up to $10^7$ samples. 
From here, we can construct a lower bound on the minimal analytical spectral gap $\Delta_{\nu,\text{min}}^{(t)} $ using Eq.~\eqref{eq:lower-bound-spectral-gap} based on our numerical estimates, the estimated spectral gap $\hat\Delta_{\nu,\text{min}}^{(t)}$ and the maximum quench depth $M$ for which the estimate can be found numerically:

\begin{equation}\label{eq:numerical-estimate-spectral-gap}
    \Delta_{\nu,\text{min}}^{(t)} \geq \hat\Delta_{\nu,\text{min}}^{(t)} \coloneqq 1 - \sqrt[2M]{\overline{\mathcal{F}}_\nu^{(t)}(M) - t!}.
\end{equation}

Using this value, we then calculate the minimal sequence length from which the signal is guaranteed to describe a scalar exponential decay using the expression from Theorem~\ref{theorem:signal-guarantee-nonint}.

If we set the strength with which the noise disturbs the decay signal to $\check\alpha=10^{-2}$ we find the minimal sequence length to be 
\begin{equation}\label{eq:estimate-minimal-sequence-length}
    \hat m_\text{\rm ths} = \frac2{\hat\Delta^{(2n)}_\nu} \left( \frac23 \log \dim(W_{n,l}) + \frac12 \log \frac1{(\rho_0|P_\lambda|\rho_0)}  + 6.5\right)\,, 
\end{equation}
where $\dim(W_{n,l})$ is the dimension of the irrep and we have used Eq.~\eqref{eq:guarantee sequence length}.

A summary of the numerical values of the frame potentials and their associated spectral gap estimates along with the resulting minimal sequence length of the numerical simulations displayed in Fig.~\ref{fig:numerics-framepotential-spectralgap} can be found in Table~\ref{table:numerics-framepotential-spectralgap}.

\subsection{Simulating warm-up phase with the RAB protocol}
\label{subsec:simulating-warm-up-phase-with-RAB-protocol}

In the above subsection, we saw that the bounds of the minimal admissible sequence length from Theorem~\ref{theorem:signal-guarantee-nonint} is rather large. 
These would present challenges for implementing the RAB protocol in practice for sizeable noise.  The signal strength for long sequence of quenches is easily too weak to yield  much information from the quantum device. 
Numerical simulations, however, indicate that the derived worst-case guarantees might be not tight or overly pessimistic. 
Assuming a simple noise model, we derived a subdominant behaviour (deviation from a scalar decay) for the signal in Section~\ref{Ap.:Fitting_model_and_signal_guarantees} Lemma~\ref{lem:rab-signal-under-depolarising-noise} that is independent of the noise strength. 
In this subsection, we will use this observation  to present another way to determine the minimal admissible sequence length.

The idea is the following: After choosing the desired unitary ensemble, the RAB protocol is simulated with an absence of noise, revealing the convergence of the signal to $1$.
Once the absolute value of the difference of the signal to $1$ drops below the tolerated systematic error, this sequence length can be used as a heuristic minimal sequence length for the warm-up phase.

In Fig.~\ref{fig:numerics-warmup-phase} four different  settings are simulated using ideal quenches without noise or SPAM errors.
In the section above we chose $\check\alpha$, the tolerated systematic error, to be $10^{-2}$.  
We observe that the signal deviates from is asymptotically value by less $\check\alpha = 10^{-2}$ at considerably smaller sequence length that the bounds from Theorem~\ref{theorem:signal-guarantee-nonint}. 
We find minmial admissible sequence length of $16$, $8$, and $55$ for the interacting ensemble, non-interacting uniform, and Sycamore ensembles, respectively.

\begin{figure}
    \centering
    \includegraphics[width=1\linewidth]{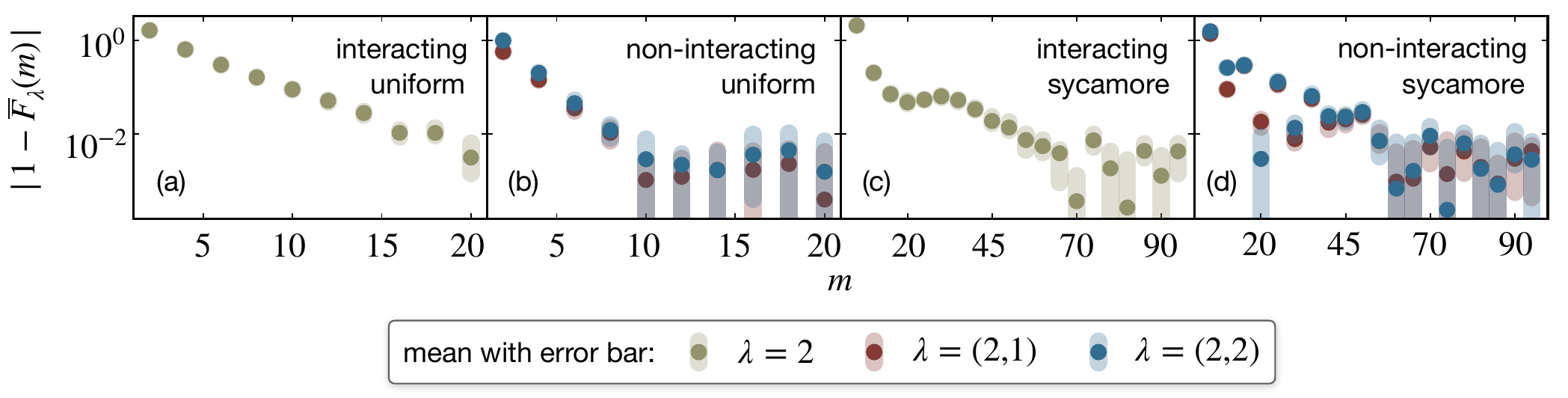}
    \caption{\emph{RAB simulations without noise} to determine the heuristic warm-up phase. The absolute value of the difference of the signal to $1$ is shown against the quench depth $m$. Once this drops under a certain value the corresponding depth can be used as a heuristic warm-up phase.
    Then the values for $m_\mathrm{thr}$ are (a) $16$, (b) $8$, (c) $55$, (d) $55$. They are used}
    \label{fig:numerics-warmup-phase}
\end{figure}

\subsection{Simulating the RAB protocol with different noise models}
\label{paragraph:simulating-bosonic-quantum-system}

\begin{figure}[ht!]
    \centering
    \includegraphics[width=\linewidth]{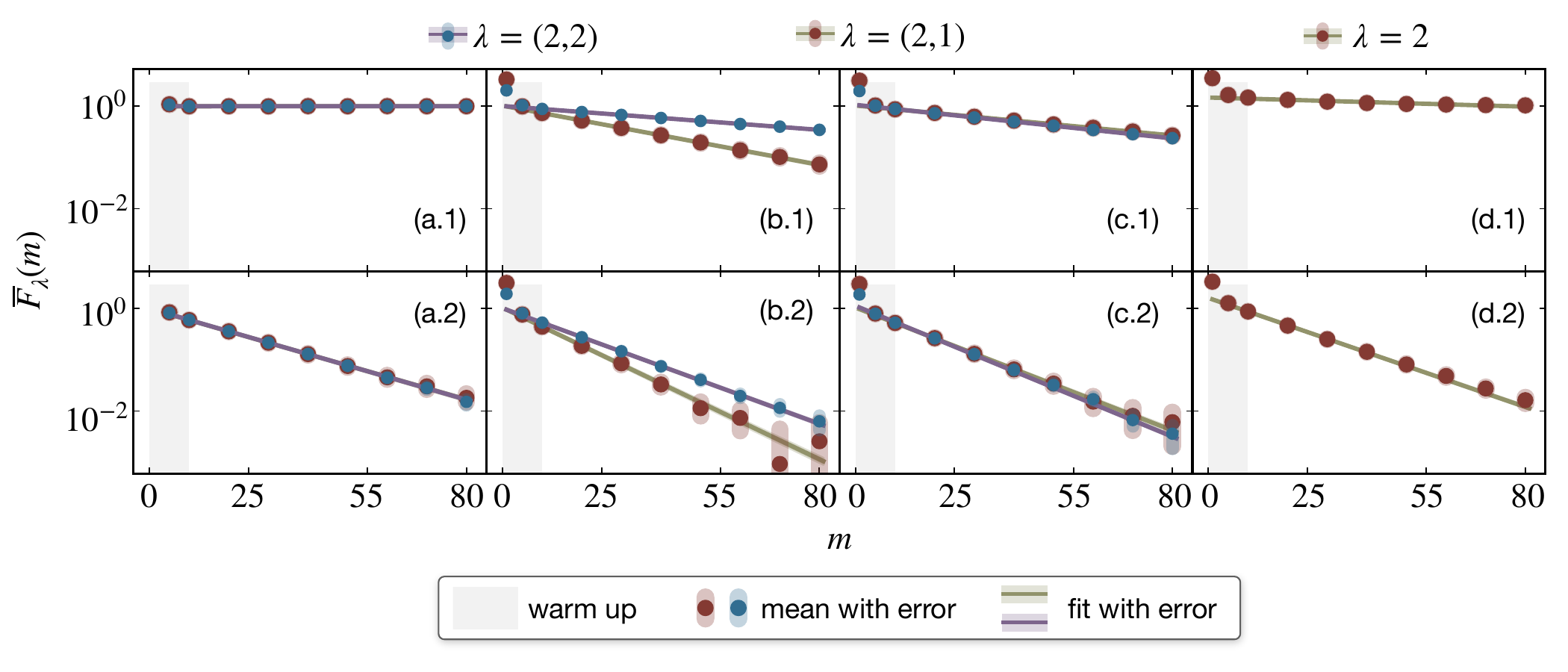}
    \caption{ \emph{Numerical simulations} of the RAB protocol for 4 modes and 2 particles in a 1-dimensional lattice with the uniform setting.
    Top: 4 different noise models, bottom: noise model from above with additional random depolarizing noise with depolarizing strength of $p=0.05$.
    The grey area indicates the warm-up phase.
    (a.1) No noise in a non-interacting system,
    (b.1) time over-evolution of $10\%$ in a non-interacting system,
    (c.1) assumed a non-interacting system with actual interactions of $-0.1$,
    (g) interacting system with less interaction than intended.}
    \label{fig:numerics-8plots}
\end{figure}

\begin{table}[ht!]
\begin{tabular}{r|llll|llll|llll|ll}
\toprule
Err. Model &   \multicolumn{4}{c|}{}   &   \multicolumn{4}{c|}{over evolution} &\multicolumn{4}{c|}{unwanted interaction} & \multicolumn{2}{c}{less interaction} \\
Depol. Err.& $\times$ & $\times$ & $\checkmark$ & $\checkmark$ & $\times$ &  $\times$  & $\checkmark$  & $\checkmark$ & $\times$ & $\times$ & $\checkmark$ & $\checkmark$& $\times$ & $\checkmark$ \\
Figure & (a.1) & & (a.2) & & (b.1) &  & (b.2)  & & (c.1) & & (c.2) & & (d.1) & (d.2) \\
$\lambda$& $(2,2)$ &$(2,1)$ & $(2,2)$ & $(2,1)$& $(2,2)$ & $(2,1)$ & $(2,2)$  & $(2,1)$ & $(2,2)$ & $(2,1)$ & $(2,2)$ & $(2,2)$ & $2$ & $2$ \\
\midrule
$A$ & $0.999$ & $1.001$ & $0.99$ & $1.002$ & $1.03$ & $1.012$ & $1.04$ & $1.02$ & $1.03$ & $1.066$ & $1.05$ & $1.133$ & $1.481$ & $1.61$ \\
$\Delta A$ & $0.005$ & $0.003$ & $0.02$ & $0.008$ & $0.02$ & $0.005$ & $0.03$ & $0.02$ & $0.01$ & $0.006$ & $0.02$ & $0.008$ & $0.006$ & $0.02$ \\
$z$ & $1.00003$ & $1.00000$ & $0.9503$ & $0.9501$ & $0.9673$ & $0.9867$ & $0.918$ & $0.9371$ & $0.9833$ & $0.9813$ & $0.9333$ & $0.9291$ & $0.99489$ & $0.9406$ \\
$\Delta z$ & $0.00008$ & $0.00005$ & $0.0007$ & $0.0003$ & $0.0005$ & $0.0002$ & $0.002$ & $0.0006$ & $0.0003$ & $0.0002$ & $0.0009$ & $0.0005$ & $0.00008$ & $0.0006$ \\
 \bottomrule
\end{tabular}
\caption{\label{tab:8plots-fitting-values} \emph{Fitting values} $F_\lambda =_{\text{fit}} A_\lambda z_\lambda^m$ for Fig.~\ref{fig:numerics-8plots}. The fitting was done in Python with \emph{scipy.curve\_fit}.}
\end{table}

\paragraph*{Noise models.}
The noise channel for the \emph{random depolarizing} noise is represented with $p$ the depolarizing strength as
\begin{align}
    \Lambda_p(\rho) = (1-p)\rho + \frac{p}{d_F}\mathbb I.
    \label{eq:rand-depolarizing-noise}
\end{align}

Another noise model is a \emph{time over-evolution'}. 
Here errors originate from imprecise control over their timing of the evolution, resulting in an unintended `over-evolution', where the desired time for the gate operation is shorter than the actual implemented time,
\begin{align}\label{eq:model-mismatch-noise}
    \ee^{-i\Delta tH} \mapsto \ee^{-i(\Delta t + t')H}\,. 
\end{align}
In addition, we considered errors in the simulation due to unwanted interactions during the dynamics
\begin{align}
    H_{\mathrm{nonint}} \mapsto H_{\mathrm{nonint}} + V^{\mathrm{err}}\sum_{i} a_i^\dagger a_i^\dagger a_i a_i.
\end{align}
Also, the opposite noise effect of an undesired isolation, i.e.,  not as much interaction as intended was simulated.  

\paragraph*{Results for different noise models.}
The eight panels of Fig.~\ref{fig:numerics-8plots} show the resulting simulated RB data and fits for for the four described noise models, both with and without additional depolarizing noise. 
Table~\ref{tab:8plots-fitting-values} is showing the extracted model parameters of the fit. 
In panel (a), no noise has been interleaved, the warm-up phase can be observed, before the signal settles at $1$ not decaying further.
Panel (b) uses depolarizing noise, and the signal decays exponetially at the rate of the depolarizing parameter.
In panel (c), the time over-evolution was simulated over-evolving for a constant $10\%$ of the time. 
We observe that the decay rates for the larger irrep $\lambda=(2,2)$ are now faster than for the smaller irrep $\lambda=(2,1)$. 
The separation remains detectable with additional depolarizing noise in panel (d). 
In (e), a non-interacting system with unintended interaction of $V^{\mathrm{err}}=-0.1$  is simulated. 
Also here the decay rates for the two irreps are moderately different. 
In the setting, of running RAB for an interacting Hamiltonian ensemble with $10\%$ smaller interaction coefficients that target, the deviation is detected as a decay. 
\newpage

\begin{table}[htbp]
\parbox{.6\linewidth}{
\centering
\setlength{\extrarowheight}{3pt}
\begin{tabular}{|c c c c||c c c c||c c c c||c c c c|}
\hline
$a$ & $a'$ & $a''$ & $C^{a''}_{a,a'}$ & $a$ & $a'$ & $a''$ & $C^{a''}_{a,a'}$ & $a$ & $a'$ & $a''$ & $C^{a''}_{a,a'}$ & $a$ & $a'$ & $a''$ & $C^{a''}_{a,a'}$  \\ \hline
$1$ & $1$ & $1$ & $1$ & $4$ & $4$ & $22$ & ${\sqrt{2}}/{2}$ & $6$ & $9$ & $35$ & ${\sqrt{3}}/{6}$ & $9$ & $2$ & $54$ & ${\sqrt{6}}/{3}$ \\ \hline
$1$ & $2$ & $2$ & $1$ & $4$ & $5$ & $23$ & $1$ & $8$ & $8$ & $35$ & $- {\sqrt{6}}/{6}$ & $10$ & $1$ & $54$ & ${\sqrt{6}}/{6}$ \\ \hline
$1$ & $3$ & $3$ & $1$ & $3$ & $6$ & $24$ & $1$ & $9$ & $7$ & $35$ & ${\sqrt{3}}/{6}$ & $9$ & $3$ & $55$ & ${\sqrt{2}}/{2}$ \\ \hline
$1$ & $4$ & $4$ & $1$ & $4$ & $6$ & $25$ & $1$ & $4$ & $10$ & $36$ & ${\sqrt{6}}/{3}$ & $10$ & $2$ & $55$ & ${\sqrt{2}}/{2}$ \\ \hline
$1$ & $5$ & $5$ & $1$ & $2$ & $7$ & $26$ & ${\sqrt{30}}/{6}$ & $7$ & $9$ & $36$ & ${\sqrt{3}}/{6}$ & $10$ & $3$ & $56$ & $1$ \\ \hline
$1$ & $6$ & $6$ & $1$ & $5$ & $4$ & $26$ & ${\sqrt{15}}/{15}$ & $9$ & $8$ & $36$ & $- {\sqrt{3}}/{6}$ & $8$ & $4$ & $57$ & $1$ \\ \hline
$1$ & $7$ & $7$ & ${\sqrt{6}}/{3}$ & $6$ & $2$ & $26$ & $- {\sqrt{30}}/{30}$ & $10$ & $7$ & $36$ & ${\sqrt{6}}/{6}$ & $8$ & $5$ & $58$ & ${\sqrt{3}}/{3}$ \\ \hline
$2$ & $4$ & $7$ & ${\sqrt{3}}/{6}$ & $7$ & $1$ & $26$ & ${\sqrt{15}}/{15}$ & $5$ & $1$ & $37$ & $1$ & $9$ & $4$ & $58$ & ${\sqrt{6}}/{3}$ \\ \hline
$3$ & $2$ & $7$ & $- {\sqrt{3}}/{6}$ & $2$ & $8$ & $27$ & ${\sqrt{30}}/{6}$ & $5$ & $2$ & $38$ & $1$ & $9$ & $5$ & $59$ & ${\sqrt{6}}/{3}$ \\ \hline
$4$ & $1$ & $7$ & ${\sqrt{6}}/{6}$ & $5$ & $5$ & $27$ & ${\sqrt{15}}/{15}$ & $5$ & $3$ & $39$ & $1$ & $10$ & $4$ & $59$ & ${\sqrt{3}}/{3}$ \\ \hline
$1$ & $8$ & $8$ & ${\sqrt{6}}/{3}$ & $6$ & $3$ & $27$ & $- {\sqrt{15}}/{15}$ & $5$ & $4$ & $40$ & ${\sqrt{15}}/{5}$ & $10$ & $5$ & $60$ & $1$ \\ \hline
$2$ & $5$ & $8$ & ${\sqrt{3}}/{6}$ & $7$ & $2$ & $27$ & ${\sqrt{30}}/{30}$ & $6$ & $2$ & $40$ & ${\sqrt{30}}/{15}$ & $8$ & $6$ & $61$ & $1$ \\ \hline
$3$ & $3$ & $8$ & $- {\sqrt{6}}/{6}$ & $2$ & $9$ & $28$ & ${\sqrt{5}}/{3}$ & $7$ & $1$ & $40$ & $- {2 \sqrt{15}}/{15}$ & $9$ & $6$ & $62$ & $1$ \\ \hline
$4$ & $2$ & $8$ & ${\sqrt{3}}/{6}$ & $3$ & $8$ & $28$ & ${\sqrt{5}}/{6}$ & $5$ & $5$ & $41$ & ${\sqrt{15}}/{5}$ & $10$ & $6$ & $63$ & $1$ \\ \hline
$1$ & $9$ & $9$ & ${\sqrt{6}}/{3}$ & $4$ & $7$ & $28$ & $- {\sqrt{5}}/{6}$ & $6$ & $3$ & $41$ & ${2 \sqrt{15}}/{15}$ & $5$ & $7$ & $64$ & $1$ \\ \hline
$2$ & $6$ & $9$ & ${\sqrt{6}}/{6}$ & $5$ & $6$ & $28$ & ${2 \sqrt{5}}/{15}$ & $7$ & $2$ & $41$ & $- {\sqrt{30}}/{15}$ & $5$ & $8$ & $65$ & $1$ \\ \hline
$3$ & $5$ & $9$ & $- {\sqrt{3}}/{6}$ & $6$ & $5$ & $28$ & $- {\sqrt{5}}/{30}$ & $5$ & $6$ & $42$ & ${\sqrt{30}}/{10}$ & $5$ & $9$ & $66$ & ${\sqrt{2}}/{2}$ \\ \hline
$4$ & $4$ & $9$ & ${\sqrt{3}}/{6}$ & $7$ & $4$ & $28$ & ${\sqrt{5}}/{30}$ & $6$ & $5$ & $42$ & ${\sqrt{30}}/{10}$ & $6$ & $8$ & $66$ & ${1}/{2}$ \\ \hline
$1$ & $10$ & $10$ & ${\sqrt{10}}/{5}$ & $8$ & $3$ & $28$ & $- {\sqrt{5}}/{15}$ & $7$ & $4$ & $42$ & $- {\sqrt{30}}/{10}$ & $7$ & $7$ & $66$ & $- {1}/{2}$ \\ \hline
$2$ & $9$ & $10$ & ${2 \sqrt{10}}/{15}$ & $9$ & $2$ & $28$ & ${\sqrt{5}}/{15}$ & $8$ & $3$ & $42$ & ${\sqrt{30}}/{30}$ & $6$ & $7$ & $67$ & $1$ \\ \hline
$3$ & $8$ & $10$ & $- {2 \sqrt{10}}/{15}$ & $10$ & $1$ & $28$ & $- {\sqrt{5}}/{15}$ & $9$ & $2$ & $42$ & $- {\sqrt{30}}/{30}$ & $6$ & $8$ & $68$ & ${\sqrt{2}}/{2}$ \\ \hline
$4$ & $7$ & $10$ & ${2 \sqrt{10}}/{15}$ & $3$ & $7$ & $29$ & ${\sqrt{30}}/{6}$ & $10$ & $1$ & $42$ & ${\sqrt{30}}/{30}$ & $7$ & $7$ & $68$ & ${\sqrt{2}}/{2}$ \\ \hline
$5$ & $6$ & $10$ & ${\sqrt{10}}/{30}$ & $6$ & $4$ & $29$ & ${\sqrt{30}}/{30}$ & $6$ & $1$ & $43$ & $1$ & $7$ & $8$ & $69$ & $1$ \\ \hline
$6$ & $5$ & $10$ & $- {\sqrt{10}}/{30}$ & $8$ & $2$ & $29$ & $- {\sqrt{15}}/{15}$ & $6$ & $2$ & $44$ & ${\sqrt{6}}/{3}$ & $6$ & $9$ & $70$ & ${\sqrt{3}}/{2}$ \\ \hline
$7$ & $4$ & $10$ & ${\sqrt{10}}/{30}$ & $9$ & $1$ & $29$ & ${\sqrt{15}}/{15}$ & $7$ & $1$ & $44$ & ${\sqrt{3}}/{3}$ & $8$ & $8$ & $70$ & ${\sqrt{6}}/{6}$ \\ \hline
$8$ & $3$ & $10$ & ${\sqrt{10}}/{30}$ & $3$ & $8$ & $30$ & ${\sqrt{15}}/{6}$ & $6$ & $3$ & $45$ & ${\sqrt{3}}/{3}$ & $9$ & $7$ & $70$ & $- {\sqrt{3}}/{6}$ \\ \hline
$9$ & $2$ & $10$ & $- {\sqrt{10}}/{30}$ & $4$ & $7$ & $30$ & ${\sqrt{15}}/{6}$ & $7$ & $2$ & $45$ & ${\sqrt{6}}/{3}$ & $7$ & $9$ & $71$ & ${\sqrt{3}}/{2}$ \\ \hline
$10$ & $1$ & $10$ & ${\sqrt{10}}/{30}$ & $6$ & $5$ & $30$ & ${\sqrt{15}}/{30}$ & $7$ & $3$ & $46$ & $1$ & $9$ & $8$ & $71$ & ${\sqrt{3}}/{6}$ \\ \hline
$2$ & $1$ & $11$ & $1$ & $7$ & $4$ & $30$ & ${\sqrt{15}}/{30}$ & $6$ & $4$ & $47$ & ${2 \sqrt{5}}/{5}$ & $10$ & $7$ & $71$ & $- {\sqrt{6}}/{6}$ \\ \hline
$2$ & $2$ & $12$ & $1$ & $8$ & $3$ & $30$ & $- {\sqrt{15}}/{15}$ & $8$ & $2$ & $47$ & ${\sqrt{10}}/{10}$ & $8$ & $7$ & $72$ & $1$ \\ \hline
$2$ & $3$ & $13$ & $1$ & $10$ & $1$ & $30$ & ${\sqrt{15}}/{15}$ & $9$ & $1$ & $47$ & $- {\sqrt{10}}/{10}$ & $8$ & $8$ & $73$ & ${\sqrt{3}}/{3}$ \\ \hline
$2$ & $4$ & $14$ & ${\sqrt{3}}/{2}$ & $4$ & $8$ & $31$ & ${\sqrt{30}}/{6}$ & $6$ & $5$ & $48$ & ${\sqrt{10}}/{5}$ & $9$ & $7$ & $73$ & ${\sqrt{6}}/{3}$ \\ \hline
$3$ & $2$ & $14$ & ${\sqrt{3}}/{6}$ & $7$ & $5$ & $31$ & ${\sqrt{30}}/{30}$ & $7$ & $4$ & $48$ & ${\sqrt{10}}/{5}$ & $9$ & $8$ & $74$ & ${\sqrt{6}}/{3}$ \\ \hline
$4$ & $1$ & $14$ & $- {\sqrt{6}}/{6}$ & $9$ & $3$ & $31$ & $- {\sqrt{15}}/{15}$ & $8$ & $3$ & $48$ & ${\sqrt{10}}/{10}$ & $10$ & $7$ & $74$ & ${\sqrt{3}}/{3}$ \\ \hline
$2$ & $5$ & $15$ & ${\sqrt{3}}/{2}$ & $10$ & $2$ & $31$ & ${\sqrt{15}}/{15}$ & $10$ & $1$ & $48$ & $- {\sqrt{10}}/{10}$ & $10$ & $8$ & $75$ & $1$ \\ \hline
$3$ & $3$ & $15$ & ${\sqrt{6}}/{6}$ & $3$ & $9$ & $32$ & ${\sqrt{30}}/{6}$ & $7$ & $5$ & $49$ & ${2 \sqrt{5}}/{5}$ & $8$ & $9$ & $76$ & $1$ \\ \hline
$4$ & $2$ & $15$ & $- {\sqrt{3}}/{6}$ & $6$ & $6$ & $32$ & ${\sqrt{15}}/{15}$ & $9$ & $3$ & $49$ & ${\sqrt{10}}/{10}$ & $9$ & $9$ & $77$ & $1$ \\ \hline
$2$ & $6$ & $16$ & ${\sqrt{2}}/{2}$ & $8$ & $5$ & $32$ & $- {\sqrt{15}}/{15}$ & $10$ & $2$ & $49$ & $- {\sqrt{10}}/{10}$ & $10$ & $9$ & $78$ & $1$ \\ \hline
$3$ & $5$ & $16$ & ${1}/{2}$ & $9$ & $4$ & $32$ & ${\sqrt{30}}/{30}$ & $6$ & $6$ & $50$ & ${\sqrt{15}}/{5}$ & $5$ & $10$ & $79$ & $1$ \\ \hline
$4$ & $4$ & $16$ & $- {1}/{2}$ & $4$ & $9$ & $33$ & ${\sqrt{30}}/{6}$ & $8$ & $5$ & $50$ & ${2 \sqrt{15}}/{15}$ & $6$ & $10$ & $80$ & $1$ \\ \hline
$3$ & $1$ & $17$ & $1$ & $7$ & $6$ & $33$ & ${\sqrt{15}}/{15}$ & $9$ & $4$ & $50$ & $- {\sqrt{30}}/{15}$ & $7$ & $10$ & $81$ & $1$ \\ \hline
$3$ & $2$ & $18$ & ${\sqrt{6}}/{3}$ & $9$ & $5$ & $33$ & $- {\sqrt{30}}/{30}$ & $7$ & $6$ & $51$ & ${\sqrt{15}}/{5}$ & $8$ & $10$ & $82$ & $1$ \\ \hline
$4$ & $1$ & $18$ & ${\sqrt{3}}/{3}$ & $10$ & $4$ & $33$ & ${\sqrt{15}}/{15}$ & $9$ & $5$ & $51$ & ${\sqrt{30}}/{15}$ & $9$ & $10$ & $83$ & $1$ \\ \hline
$3$ & $3$ & $19$ & ${\sqrt{3}}/{3}$ & $2$ & $10$ & $34$ & ${\sqrt{6}}/{3}$ & $10$ & $4$ & $51$ & $- {2 \sqrt{15}}/{15}$ & $10$ & $10$ & $84$ & $1$ \\ \hline
$4$ & $2$ & $19$ & ${\sqrt{6}}/{3}$ & $5$ & $9$ & $34$ & ${\sqrt{6}}/{6}$ & $8$ & $1$ & $52$ & $1$ &  &  &  &  \\ \hline
$4$ & $3$ & $20$ & $1$ & $6$ & $8$ & $34$ & $- {\sqrt{3}}/{6}$ & $8$ & $2$ & $53$ & ${\sqrt{2}}/{2}$ &  &  &  &  \\ \hline
$3$ & $4$ & $21$ & $1$ & $7$ & $7$ & $34$ & ${\sqrt{3}}/{6}$ & $9$ & $1$ & $53$ & ${\sqrt{2}}/{2}$ &  &  &  &  \\ \hline
$3$ & $5$ & $22$ & ${\sqrt{2}}/{2}$ & $3$ & $10$ & $35$ & ${\sqrt{6}}/{3}$ & $8$ & $3$ & $54$ & ${\sqrt{6}}/{6}$ &  &  &  &  \\ \hline
\end{tabular}
}
\hfill
\parbox{.3\linewidth}{
\centering
\setlength{\extrarowheight}{3pt}
\begin{tabular}{|c c c c||c c c c|}
\hline
$a$ & $a'$ & $a''$ & $C^{a''}_{a,a'}$ & $a$ & $a'$ & $a''$ & $C^{a''}_{a,a'}$  \\ \hline
$1$ & $7$ & $1$ & $- {\sqrt{3}}/{3}$ & $9$ & $2$ & $7$ & ${1}/{3}$ \\ \hline
$2$ & $4$ & $1$ & ${\sqrt{6}}/{6}$ & $10$ & $1$ & $7$ & $- {1}/{3}$ \\ \hline
$3$ & $2$ & $1$ & $- {\sqrt{6}}/{6}$ & $3$ & $7$ & $8$ & $- {\sqrt{6}}/{6}$ \\ \hline
$4$ & $1$ & $1$ & ${\sqrt{3}}/{3}$ & $6$ & $4$ & $8$ & ${\sqrt{6}}/{6}$ \\ \hline
$1$ & $8$ & $2$ & $- {\sqrt{3}}/{3}$ & $8$ & $2$ & $8$ & $- {\sqrt{3}}/{3}$ \\ \hline
$2$ & $5$ & $2$ & ${\sqrt{6}}/{6}$ & $9$ & $1$ & $8$ & ${\sqrt{3}}/{3}$ \\ \hline
$3$ & $3$ & $2$ & $- {\sqrt{3}}/{3}$ & $3$ & $8$ & $9$ & $- {\sqrt{3}}/{6}$ \\ \hline
$4$ & $2$ & $2$ & ${\sqrt{6}}/{6}$ & $4$ & $7$ & $9$ & $- {\sqrt{3}}/{6}$ \\ \hline
$1$ & $9$ & $3$ & $- {\sqrt{3}}/{3}$ & $6$ & $5$ & $9$ & ${\sqrt{3}}/{6}$ \\ \hline
$2$ & $6$ & $3$ & ${\sqrt{3}}/{3}$ & $7$ & $4$ & $9$ & ${\sqrt{3}}/{6}$ \\ \hline
$3$ & $5$ & $3$ & $- {\sqrt{6}}/{6}$ & $8$ & $3$ & $9$ & $- {\sqrt{3}}/{3}$ \\ \hline
$4$ & $4$ & $3$ & ${\sqrt{6}}/{6}$ & $10$ & $1$ & $9$ & ${\sqrt{3}}/{3}$ \\ \hline
$1$ & $10$ & $4$ & $- {\sqrt{2}}/{2}$ & $4$ & $8$ & $10$ & $- {\sqrt{6}}/{6}$ \\ \hline
$2$ & $9$ & $4$ & ${\sqrt{2}}/{6}$ & $7$ & $5$ & $10$ & ${\sqrt{6}}/{6}$ \\ \hline
$3$ & $8$ & $4$ & $- {\sqrt{2}}/{6}$ & $9$ & $3$ & $10$ & $- {\sqrt{3}}/{3}$ \\ \hline
$4$ & $7$ & $4$ & ${\sqrt{2}}/{6}$ & $10$ & $2$ & $10$ & ${\sqrt{3}}/{3}$ \\ \hline
$5$ & $6$ & $4$ & ${\sqrt{2}}/{6}$ & $3$ & $9$ & $11$ & $- {\sqrt{6}}/{6}$ \\ \hline
$6$ & $5$ & $4$ & $- {\sqrt{2}}/{6}$ & $6$ & $6$ & $11$ & ${\sqrt{3}}/{3}$ \\ \hline
$7$ & $4$ & $4$ & ${\sqrt{2}}/{6}$ & $8$ & $5$ & $11$ & $- {\sqrt{3}}/{3}$ \\ \hline
$8$ & $3$ & $4$ & ${\sqrt{2}}/{6}$ & $9$ & $4$ & $11$ & ${\sqrt{6}}/{6}$ \\ \hline
$9$ & $2$ & $4$ & $- {\sqrt{2}}/{6}$ & $4$ & $9$ & $12$ & $- {\sqrt{6}}/{6}$ \\ \hline
$10$ & $1$ & $4$ & ${\sqrt{2}}/{6}$ & $7$ & $6$ & $12$ & ${\sqrt{3}}/{3}$ \\ \hline
$2$ & $7$ & $5$ & $- {\sqrt{6}}/{6}$ & $9$ & $5$ & $12$ & $- {\sqrt{6}}/{6}$ \\ \hline
$5$ & $4$ & $5$ & ${\sqrt{3}}/{3}$ & $10$ & $4$ & $12$ & ${\sqrt{3}}/{3}$ \\ \hline
$6$ & $2$ & $5$ & $- {\sqrt{6}}/{6}$ & $2$ & $10$ & $13$ & $- {\sqrt{3}}/{3}$ \\ \hline
$7$ & $1$ & $5$ & ${\sqrt{3}}/{3}$ & $5$ & $9$ & $13$ & ${\sqrt{3}}/{3}$ \\ \hline
$2$ & $8$ & $6$ & $- {\sqrt{6}}/{6}$ & $6$ & $8$ & $13$ & $- {\sqrt{6}}/{6}$ \\ \hline
$5$ & $5$ & $6$ & ${\sqrt{3}}/{3}$ & $7$ & $7$ & $13$ & ${\sqrt{6}}/{6}$ \\ \hline
$6$ & $3$ & $6$ & $- {\sqrt{3}}/{3}$ & $3$ & $10$ & $14$ & $- {\sqrt{3}}/{3}$ \\ \hline
$7$ & $2$ & $6$ & ${\sqrt{6}}/{6}$ & $6$ & $9$ & $14$ & ${\sqrt{6}}/{6}$ \\ \hline
$2$ & $9$ & $7$ & $- {1}/{3}$ & $8$ & $8$ & $14$ & $- {\sqrt{3}}/{3}$ \\ \hline
$3$ & $8$ & $7$ & $- {1}/{6}$ & $9$ & $7$ & $14$ & ${\sqrt{6}}/{6}$ \\ \hline
$4$ & $7$ & $7$ & ${1}/{6}$ & $4$ & $10$ & $15$ & $- {\sqrt{3}}/{3}$ \\ \hline
$5$ & $6$ & $7$ & ${2}/{3}$ & $7$ & $9$ & $15$ & ${\sqrt{6}}/{6}$ \\ \hline
$6$ & $5$ & $7$ & $- {1}/{6}$ & $9$ & $8$ & $15$ & $- {\sqrt{6}}/{6}$ \\ \hline
$7$ & $4$ & $7$ & ${1}/{6}$ & $10$ & $7$ & $15$ & ${\sqrt{3}}/{3}$ \\ \hline
$8$ & $3$ & $7$ & $- {1}/{3}$ &  &  &  &  \\ \hline
\end{tabular}
}
\label{tab:cgc_4d_2d}
\caption{Clebsch-Gordan-Coefficients for systems with dimension $d=4$ and particle number $n=2$ for irrep $\lambda = (2, 2)$ on the LHS and  $\lambda = (2, 1)$ for the RHS.}
\end{table}

\ifjournal
\end{bibunit}
\fi
\end{document}